\DeclareRobustCommand{\legendline}[1]{\hspace{-3.2pt}
\tikz[#1,line width=2pt,baseline=-0.5ex]{\draw (0,0) -- (.35,0);}
\hspace{-3.2pt}}
\definecolor{mblue}{rgb}{0,0.4470,0.7410}
\definecolor{morange}{rgb}{0.8500,0.3250,0.0980}
\definecolor{myellow}{rgb}{0.9290,0.6940,0.1250}
\definecolor{mpurple}{rgb}{0.4940,0.1840,0.5560}
\definecolor{mgreen}{rgb}{0.4660,0.6740,0.1880}
\definecolor{mcyan}{rgb}{0.3010,0.7450,0.9330}
\definecolor{mred}{rgb}{0.6350,0.0780,0.1840}
\definecolor{mgreenblue}{rgb}{0.0,1.0,0.5}
\newcommand{\col}{\mathrm{col}}
\newcommand{\mr}[1]{\mathrm{#1}}
\newcommand{\m}[1]{\mathcal{#1}}
\newcommand{\mc}[1]{\m{#1}}
\newcommand{\mb}[1]{\mathbb{#1}}
\DeclareFontFamily{OT1}{pzc}{}
\DeclareFontShape{OT1}{pzc}{m}{it}{<-> s * [1.150] pzcmi7t}{}
\DeclareMathAlphabet{\mathpzc}{OT1}{pzc}{m}{it}
\DeclareMathOperator*{\argmin}{arg\,min}
\newcommand{\pnote}{p}
\newcommand\norm[1]{\left\lVert#1\right\rVert}
\newcommand{\ltwo}{\ensuremath{\mathcal{L}_2}\xspace}
\newcommand{\litwo}{\ensuremath{\mathcal{L}_{\mathrm{i}2}}\xspace}
\newcommand{\genState}{x_\mathrm{\pnote}}
\newcommand{\genStateDot}{\dot{x}_\mathrm{\pnote}}
\newcommand{\genStateInitial}{x_\mathrm{\pnote,0}}
\newcommand{\genStateSize}{n_\mathrm{x_p}}
\newcommand{\genStateSet}{\mathbb{X}_\mathrm{\pnote}}
\newcommand{\genInput}{u}
\newcommand{\genInputSize}{n_\mathrm{u}}
\newcommand{\genInputSet}{\mathbb{U}}
\newcommand{\genOutput}{y}
\newcommand{\genOutputSize}{n_\mathrm{y}}
\newcommand{\genOutputSet}{\mathbb{Y}}
\newcommand{\genF}{f_\mathrm{\pnote}}
\newcommand{\genHz}{h_\mathrm{\pnote,z}}
\newcommand{\genHy}{h_\mathrm{\pnote,y}}
\newcommand{\fullstate}{x}  %{\chi}
\newcommand{\sche}{p}
\newcommand{\schedim}{n_\mathrm{\sche}}
\newcommand{\intsche}{\rho}
\newcommand{\st}[1]{#1^*}	% steady-state trajectory notation
\newcommand{\traj}{\xi}
\newcommand{\stTraj}{\st{\traj}}	% full steady-state traj.
\newtheorem{lem}{Lemma}
\newtheorem{defn}{Definition}
\newtheorem{thm}{Theorem}
\newtheorem{rem}{Remark}
\newtheorem{prop}{Proposition}
\newtheorem{cor}{Corollary}
\newtheorem{prob}{Problem Statement}
\newtheorem{asum}{Assumption}
\def\extendedversion{1} % uncomment this line for the extended version
\newcommand{\extver}[2]{%
  \ifx\extendedversion\undefined%
	#2%
  \else%
    #1%
  \fi%
}
\begin{document}
\title{Nonlinear Tracking and Rejection using Linear Parameter-Varying Control}
%\title{Nonlinear Tracking and Rejection using Linear Parameter-Varying Control \extver{\\\vspace{.3em}\normalsize{(Extend Version)}\vspace{-.5em}}{}}
\author{Patrick J. W. Koelewijn, \IEEEmembership{Student Member, IEEE}, Roland T\'oth, \IEEEmembership{Senior Member, IEEE}, Henk Nijmeijer, \IEEEmembership{Fellow, IEEE}, and Siep Weiland
\thanks{This work has received funding from the European Research Council (ERC) under the European Union Horizon 2020 research and innovation programme (grant agreement No 714663) and was also supported by the European Union within the framework of the National Laboratory for Autonomous Systems (RRF-2.3.1-21-2022-00002).}
\thanks{P.J.W. Koelewijn is with the Control Systems Group, Department of Electrical Engineering, Eindhoven University of Technology, P.O. Box 513, 5600 MB Eindhoven, The Netherlands (e-mail: p.j.w.koelewijn@tue.nl).}
\thanks{R. T\'oth is with the Control Systems Group, Department of Electrical Engineering, Eindhoven University of Technology, P.O. Box 513, 5600 MB Eindhoven, The Netherlands and with the Systems and Control  Laboratory, Institute for Computer Science and Control, 1111 Budapest, Hungary (e-mail: r.toth@tue.nl).}
\thanks{H. Nijmeijer is with the Dynamics and Control Group, Department of Mechanical Engineering, Eindhoven University of Technology, P.O. Box 513, 5600 MB Eindhoven, The Netherlands (e-mail: h.nijmeijer@tue.nl).}
\thanks{S. Weiland is with the Control Systems Group, Department of Electrical Engineering, Eindhoven University of Technology, P.O. Box 513, 5600 MB Eindhoven, The Netherlands (e-mail: s.weiland@tue.nl).} \vskip -3mm}

\maketitle
\begin{abstract} 
The \emph{Linear Parameter-Varying} (LPV) framework has been introduced with the intention to provide stability and performance guarantees for analysis and controller synthesis for \emph{Nonlinear} (NL) systems via convex methods. By extending results of the Linear Time-Invariant framework, mainly based on quadratic stability and performance using dissipativity theory, it has been assumed that they generalize tracking and disturbance rejection guarantees for NL systems. However, as has been shown in literature, stability and performance through standard dissipativity is \emph{not} sufficient in order to satisfy the desired guarantees in case of reference tracking and disturbance rejection for nonlinear systems. We propose to solve this problem by the application of incremental dissipativity, which \emph{does} ensure these specifications. A novel approach is proposed to synthesize and realize an NL controller which is able to guarantee incremental stability and performance for NL systems via convex optimization using methods from the LPV framework. Through simulations and experiments, the presented method is compared to standard LPV controller designs, showing significant performance improvements.
\end{abstract}

\begin{IEEEkeywords}
	Linear parameter-varying systems, Incremental Dissipativity, Stability of nonlinear Systems, Output feedback and Observers.
\end{IEEEkeywords}

%%%%% Introduction %%%%%%
\section{Introduction}\label{sec:Introduction}
\IEEEPARstart{T}{he} control of \emph{Nonlinear} (NL) systems has been an intense, ongoing field of research since the early 1970's, and it is still to this date. So far, no systematic way has been found to perform controller synthesis for general NL systems with performance shaping, compared to the class of \emph{Linear Time-Invariant} (LTI) systems where several systematic approaches exist to design or synthesize controllers. The first attempts to transfer the systematic results of the LTI framework to the NL domain was done in the form of heuristic gain-scheduling. In gain-scheduling, the controller changes between a collection of LTI controllers designed at different local operating points of the NL system. In \cite{Shamma1988}, Shamma further developed this into the framework of \emph{Linear Parameter-Varying} (LPV) systems, where the solution set (behavior) of the NL system is captured by the parameter variation of a proxy linear system. Variation of this system, expressed by a measurable variable called the scheduling-variable, is used to describe the original NL behavior and ensure stability and performance guarantees for the NL system. These concepts were then extended to LPV controller synthesis based on results from LTI $H_\infty$-control \cite{Apkarian1995,Packard1994,Scherer2001,Wu1995}, ensuring \ltwo-gain stability and performance conditions. 

The main advantage of using LPV systems to represent NL systems is that the LTI stability and performance concepts, which have been extended to the LPV framework, also can be used for the NL case. By showing that \ltwo-gain stability and performance guarantees do hold for set-point control of NL systems using the LPV framework \cite{Mohammadpour2012} and based on many successful applications of LPV control in practice, it was assumed that these implications naturally hold true for tracking and disturbance rejection specifications. However, as has been shown in \cite{Koelewijn2020}, such guarantees are not valid in the latter case. Namely, in \cite{Koelewijn2020}, it is shown that current LPV stability analysis is only able to guarantee asymptotic stability of a single equilibrium point of the NL system. As a consequence, tracking and rejection based on `standard' LPV control may run into problems, which has also been exemplified in \cite{Scorletti2015}. As a solution, it was proposed that using the notion of incremental \ltwo-gain stability and performance in synthesis, tracking and rejection specifications \emph{can} be ensured.  

A first attempt to use the concept of incremental stability to perform controller synthesis in conjunction with the LPV framework was in \cite{Scorletti2015}. Using the results from \cite{DeHillerin2011}, this work has provided a control synthesis method where the controller itself is restricted to be LTI with an extra input being the scheduling-variable. However, the lack of a multiplicative relationship between the state and the scheduling-variable in the control structure is a heavy limitation compared to standard LPV control. Despite this restriction, the general benefits of the alternative design have been clearly visible from the results. 

Besides the incremental stability concept, similar stability concepts have also developed such as {\emph{contraction}} \cite{Lohmiller1998} and {\emph{convergence}} \cite{Pavlov2006}. These concepts have proven to be not only relevant for tracking and disturbance rejection, but also for many other NL control problems such as synchronization and observer design \cite{Pavlov2006}. In practice, controller design has been accomplished using both contraction and convergence theory, but often relying on complex procedures to perform synthesis. Recently, a convex synthesis framework has been introduced for state feedback design to achieve contraction \cite{Manchester2018,Wang2020}. A key ingredient that these concepts use is that incremental concepts are analyzed or ensured through the use of the so-called differential form of the system, also referred to as variational \cite{Crouch1987,ReyesBaez2019} or differential dynamics \cite{Manchester2018} in literature, which describes the dynamics of the variation along the trajectories of the system. In literature it has been shown how dissipativity properties of the differential form of a system imply \emph{incremental} dissipativity properties of the (original) system \cite{Verhoek2020}.

In this work, our contribution is the development of a systematic output feedback controller synthesis framework to ensure incremental stability and dissipativity based performance for NL systems. A key ingredient to achieve this is the use of the differential form in order to imply incremental stability and dissipativity properties of the (closed-loop) system. We achieve our contribution through the following three key sub-contributions: (i) proposing a methodology and a performance shaping framework to synthesize an output feedback controller for the differential form of the system by exploiting computationally efficient LPV methods, (ii) introducing a realization method for the controller designed for the differential form of the system to get a nonlinear controller that can be implemented for regulating the (original) target system, (iii) rigorous proofs that the obtained controller ensures closed-loop incremental stability and dissipativity based performance specs with the system.

Compared to previous work, we extend the results in \cite{Manchester2018} which use state feedback to ensure \ltwo-gain performance to \emph{output} feedback for \emph{general} quadratic performance. Moreover, we present how the LPV framework can be used effectively to synthesize the output feedback controller in a computationally efficient manner. Compared to \cite{Scorletti2015}, in which the resulting controller is limited to an LTI structure, our proposed controller has full multiplicative relationship between the controller state and scheduling-variable, similar to a standard LPV controller, hence, potentially allowing to achieve better performance. The overall capabilities of the design approach are demonstrated in simulation examples and via experimental studies. 

The paper is structured as follows. First, in Section \ref{sec:preliminaries}, preliminary definitions and theorems are given on standard, incremental, and differential stability and performance notions. In Section \ref{sec:Problem}, a formal definition of the problem statement of this paper is given. Section \ref{sec:IncrFramework} describes the proposed framework used to analyze and synthesize NL controllers ensuring incremental stability and dissipativity based performance via convex optimization. In Section \ref{sec:Examples}, examples are given on the application  of the developed control method. Finally, in Section \ref{sec:Conclusion}, conclusions on the presented results are drawn and future research recommendations are given.
\pagebreak
\subsubsection*{Notation}%
$\mathbb{R}$ is the set of real numbers, while $\mathbb{R}_+\subset\mathbb{R}$ is the set of non-negative reals. $\mathscr{L}_2^q$ is the space of square integrable real-valued functions $\mathbb{R}_+ \rightarrow \mathbb{R}^q$ with norm $\norm{f}_2 = \sqrt{\smash[b]{\int_0^\infty \norm{f(t)}^2 dt}}$, where $\norm{\star}$ is the Euclidean (vector) norm. The notation $x\in \mathbb{X}^{\mathbb{R}_+}$ denotes a signal $x:\mathbb{R}_+ \rightarrow \mathbb{X}$, such that $x(t)\in  \mathbb{X}$, $\forall t \in \mathbb{R}_+$. A function is of class $\mathcal{C}_n$, if its first $n$ derivatives exist and are continuous almost everywhere. For a matrix $A\in\mb{R}^{n\times n}$, the notation $A \succ 0$ $(A \succeq 0)$ indicates that $A$ is symmetric and positive \mbox{(semi-)definite}, while $A \prec 0$ $(A \preceq 0)$ indicates that $A$ is symmetric and negative \mbox{(semi-)definite}. The identity matrix of size $N$ is denoted by $I_N$. We denote the column vector $\left[x_1^\top \, \cdots \, x_n^\top\right]^\top$ by $\col(x_1,\dots,x_n)$, while $\mr{diag}(x_1,,\dots,x_n)$ stands for diagonal concatenation of $x_1,\,\dots,\,x_n$ into a matrix. %, i.e. \scalemath{.8}{\begin{bmatrix}
%	x_1 & &\\&\ddots &\\&&x_n
%\end{bmatrix}}.
%\pagebreak

%%%%%%%%%%%%%%%%%%%%%%%%%%%%%%%%%%%%%%%%%%%%%%%%%%%%%%%%%%%%%%%%%%%%%%%%%%%%%%%%%%%%%%%%%%%%%%%%%
%%%%% Preliminaries %%%%%% 
\section{Preliminaries}\label{sec:preliminaries}
\subsection{Stability and performance of nonlinear systems}
%[changed everything to use $w$ and $z$ in this section]}
Consider a nonlinear dynamic system given by
\begin{equation}\label{eq:nonlinsys}
\Sigma:  \left \lbrace \begin{aligned}
\dot{x}(t) &= f(x(t),w(t));\\
z(t) &= h(x(t),w(t));
\end{aligned} \right.
\end{equation}
where $w\in \mathbb{W} ^{\mathbb{R}_+}$ with $\mathbb{W} \subseteq \mathbb{R}^{n_\mathrm{w}}$ is the input, $x\in\mathcal{C}_{1}^{n_\mathrm{x}}$ is the state variable associated with the considered state-space representation of the system with $x(t) \in \mathbb{X} \subseteq \mathbb{R}^{n_\mathrm{x}}$ and with arbitrary initial condition $x(0) = x_0 \in \mathbb{X}$, and $z\in \mathbb{Z} ^{\mathbb{R}_+}$ with $\mathbb{Z} \subseteq \mathbb{R}^{n_\mathrm{z}}$ is the output.  From a viewpoint of controller synthesis discussed later, $w$ and $z$ can be also seen as general disturbance inputs and performance outputs of the system.  %\todo{[already mention here that $w$ and $z$ are performance channels?]}
%In this paper, we will primarily consider performance specifications in terms of \ltwo-gain bounds,
 It is assumed that solutions $(x,w,z)$ satisfy \eqref{eq:nonlinsys} in the ordinary sense %and are restricted to have \emph{left-compact support} (LCS),
%, with $u \in \mathscr{L}_{2}^{n_\mr{u}}$, 
and $\mathbb{X}$ with $\mathbb{W}$ are considered to be open sets containing the origin. The functions $f:\mathbb{X} \times \mathbb{W} \rightarrow \mb{R}^{n_\mr{x}}$ and $h:\mathbb{X} \times \mathbb{W} \rightarrow \mathbb{Z}$ are assumed to be Lipschitz continuous with $f(0,0)=0$ and $h(0,0)=0$ and to be such that 
% for all input functions $u \in \mathscr{L}_{2}^{n_\mathrm{u}}$ and 
 for all initial conditions $x_0 \in \mathbb{X}$, there is a unique solution $(x,w,z)\in (\mathbb{X}\times\mathbb{W}\times\mathbb{Z})^{\mathbb{R}_{+}}$ which is forward complete. The set of solutions is defined as \vspace{-1mm}
\begin{multline}\label{eq:nonlinsol}
%\begin{aligned}
\mathfrak{B} := \Big\lbrace (x,w,z)\in \big(\mathbb{X}\times \mathbb{W} \times %&
\mathbb{Z}\big)^{\mathbb{R}_{+}} \mathrel{\big|} x\in\mathcal{C}_1^{n_\mr{x}}, \\[-2mm]
%&
\left(x,w,z\right)\,\text{satisfies \eqref{eq:nonlinsys}}\Big\rbrace. 
%\end{aligned}
\end{multline} \vskip -1mm \noindent
%\todo{[Do we require $x\in\mc{C}_1$ or do we just need it to be differentiable almost everywhere (or do we not want to spend to much time on this)]}
Note that this also implicitly restricts the class of input functions that we consider (e.g. being piecewise continuous) in some sense, as they should result in solutions that are in $\mathfrak{B}$.
Furthermore, we define $\mathfrak{B}_\mr{x,w} = \pi_\mr{x,w} \mathfrak{B}$ where $\pi_\mr{x,w}$ denotes the projection $(x,w) = \pi_\mr{x,w}(x,w,z)$.
Introduce also $\Sigma$, the operator representing the dynamic relationship of the system. $\Sigma(w,x_0)\in\mathbb{Z}^{\mathbb{R}_{+}}$ gives the output solution $z$ of \eqref{eq:nonlinsys} for an input $w\in\mathbb{W}^{\mathbb{R}_{+}}$ and initial condition $x_0 \in \mathbb{X}$. The state transition map %associated with
of $\Sigma$ %\eqref{eq:nonlinsys} 
is given as $x(t) = \phi_\mathrm{x}(t,t_0,x_0,w)\in\mathbb{X}$, corresponding to $x$ at time $t$ when the system is driven from $x_0 \in \mathbb{X}$ at time $t_0 \in \mathbb{R}_+$ by the input signal $w \in \mathbb{W}^{\mathbb{R}_+} $. 

%
% $\phi_\mathrm{x}: \mathbb{R}_+\times\mathbb{R}_+\times \mathbb{X}\times \mathbb{W}^{\mathbb{R}_+}\to \mathbb{X}$, 
 
In the LPV framework, stability and performance is commonly analyzed jointly through the theory of dissipativity. Dissipativity of a system is defined as follows:
\begin{defn}[Dissipativity \cite{Willems1972}]\label{def:dissip}
 	System $\Sigma$, given by \eqref{eq:nonlinsys}, is dissipative with respect to a supply function $s:\mathbb{W}\times\mathbb{Z}\rightarrow\mathbb{R}$, if there exists a positive-definite storage function $V:\mathbb{X}\rightarrow \mathbb{R}_+$ with $V(0) = 0$ such that 
 	\begin{equation}
 		V(x(t_1))-V(x(t_0)) \leq \int_{t_0}^{t_1}s(w(t),z(t))\, dt,
 	\end{equation}
 	for all trajectories $(x,w,z)\in\mathfrak{B}$ and for all $t_0$, $t_1 \in \mathbb{R_+}$ with $t_0\leq t_1$.
\end{defn}
\begin{thm}[Stability implied by dissipativity]
	If a system $\Sigma$, given by \eqref{eq:nonlinsys}, is dissipative, according to Definition \ref{def:dissip}, the storage function $V$ is in $\mathcal{C}_1$ and the supply function $s$ satisfies $s(0,z)\leq 0,\,\forall z\in\mathbb{Z}$, then the origin, i.e. $x=0$, is a stable equilibrium point of the system $\Sigma$. Furthermore, $V$ qualifies as a Lyapunov function in a neighborhood of the origin. 
\end{thm}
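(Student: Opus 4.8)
The plan is to run the classical Lyapunov stability argument, using the storage function $V$ from Definition~\ref{def:dissip} as the candidate Lyapunov function and exploiting that the supply becomes non-positive once the input is switched off. First I would specialize the dissipation inequality to the zero-input case $w\equiv 0$. Since $f(0,0)=0$ and $h(0,0)=0$ and solutions are unique, the solution starting from $x_0=0$ with $w\equiv 0$ is $x(t)\equiv 0$, so $x=0$ is indeed an equilibrium of the unforced dynamics $\dot x=f(x,0)$. For any zero-input trajectory $(x,0,z)\in\mathfrak{B}$, Definition~\ref{def:dissip} gives $V(x(t_1))-V(x(t_0))\le\int_{t_0}^{t_1}s(0,z(t))\,dt\le 0$, because $s(0,z)\le 0$ by hypothesis; hence $t\mapsto V(x(t))$ is non-increasing along zero-input trajectories. (Using $V\in\mathcal{C}_1$ one may equivalently pass to the infinitesimal form $\tfrac{d}{dt}V(x(t))\le s(0,z(t))\le 0$ for almost all $t$, which is the standard Lyapunov decrease condition; either version suffices for what follows.)

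Next I would establish stability of the origin in the $\varepsilon$–$\delta$ sense for the unforced dynamics. Fix $\varepsilon>0$ and choose $r\in(0,\varepsilon]$ small enough that the closed ball $\{x:\norm{x}\le r\}$ is contained in $\mathbb{X}$, which is possible since $\mathbb{X}$ is open and contains the origin. Because $V$ is positive definite and continuous and the sphere $\{x:\norm{x}=r\}$ is compact and excludes the origin, $m:=\min_{\norm{x}=r}V(x)>0$. By continuity of $V$ at the origin together with $V(0)=0$, pick $\delta\in(0,r]$ with $V(x_0)<m$ whenever $\norm{x_0}<\delta$. Then for any $x_0$ with $\norm{x_0}<\delta$, monotonicity of $V$ along the zero-input trajectory yields $V(x(t))\le V(x_0)<m$ for all $t\ge 0$; since $x(\cdot)$ is continuous and starts inside the open ball of radius $r$, it cannot reach $\norm{x}=r$ (that would force $V(x(t))\ge m$), so $\norm{x(t)}<r\le\varepsilon$ for all $t\ge 0$, which is the claimed stability. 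The last sentence of the theorem is then immediate: on the neighborhood $\{x:\norm{x}<r\}$ of the origin, $V$ is positive definite with $V(0)=0$, is $\mathcal{C}_1$, and is non-increasing along the (zero-input) solutions of $\Sigma$, which is exactly what is required for $V$ to qualify as a Lyapunov function there.

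I do not expect a genuine obstacle here; the only points needing a little care are (i) checking that the zero input is an admissible input, so that $(x,0,z)\in\mathfrak{B}$ and the dissipation inequality actually applies, and (ii) the passage between the integral dissipation inequality and its infinitesimal counterpart when inputs and solutions are only piecewise smooth — this is sidestepped by observing that the integral form alone already delivers the monotonicity of $V$ on which the stability estimate relies, so differentiability of $V\circ x$ at every $t$ is not needed. Forward completeness of the zero-input solutions, assumed in the model class, is what guarantees the bound $\norm{x(t)}<\varepsilon$ holds for all $t\in\mathbb{R}_+$ rather than merely on a finite existence interval.
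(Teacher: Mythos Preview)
Your argument is correct and is precisely the classical Lyapunov stability proof from dissipativity. The paper does not give its own proof of this theorem; it simply refers the reader to \cite{VanderSchaft2017,Willems1972}, where the same standard argument you outline (specialize to $w\equiv 0$, use $s(0,z)\le 0$ to make $V$ non-increasing, then run the $\varepsilon$--$\delta$ level-set estimate) is presented.
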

\begin{proof}
See \cite{VanderSchaft2017,Willems1972}.
\end{proof}
Performance of NL systems is commonly expressed in terms of an \ltwo-gain bound in order to analyze and synthesize controllers for NL systems through the LPV framework. The notion of \ltwo-gain is given as follows:
\begin{defn}[\ltwo-gain \cite{VanderSchaft2017}]
\label{def:l2gain}
A system $\Sigma$, given by \eqref{eq:nonlinsys}, is said to have a finite \ltwo-gain, if, for all $(x,w,z)\in\mathfrak{B}$ with $w \in \mathscr{L}_{2}^{n_\mathrm{w}}$, there is a finite $\gamma \geq 0$
and a function $\zeta(x_0)\geq  0,\ \forall x_0\in\mb{X}$ with $\zeta(0) = 0$  such that
\begin{equation}\label{eq:l2gain}
\norm{z}_{2} \leq \gamma \norm{w}_{2}+\zeta(x_0).
\end{equation}
The induced \ltwo-gain of $\Sigma$, denoted by $\norm{\Sigma}_2$, is the infimum of $\gamma$ such that \eqref{eq:l2gain} still holds.
\end{defn} 

Having both dissipativity and the \ltwo-gain of an NL system defined, the following lemma links the two concepts.
\begin{lem}[\ltwo-gain stability]\label{lem:l2gainDissip}
The \ltwo-gain of an NL system $\Sigma$, defined by \eqref{eq:nonlinsys}, is less than or equal to $\gamma\in\mathbb{R}_+$%as defined by Definition \ref{def:l2gain}
, if  $\Sigma$ is dissipative %as defined in Definition \ref{def:dissip} 
with respect to $s(w,z) = \gamma^2\norm{w}^2-\norm{z}^2$. If such a finite $\gamma$ exists, then we will call $\Sigma$ being \ltwo-gain stable.
\end{lem}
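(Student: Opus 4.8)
The plan is to unwind the definition of dissipativity with the given supply function and integrate from time zero to infinity. First I would invoke Definition \ref{def:dissip}: since $\Sigma$ is dissipative with respect to $s(w,z) = \gamma^2\norm{w}^2 - \norm{z}^2$, there exists a positive-definite storage function $V:\mathbb{X}\to\mathbb{R}_+$ with $V(0)=0$ satisfying the dissipation inequality along every trajectory $(x,w,z)\in\mathfrak{B}$. Fixing such a trajectory with $w\in\mathscr{L}_2^{n_\mathrm{w}}$ and $x(0)=x_0$, I would take $t_0 = 0$ and $t_1 = T$ for an arbitrary $T\geq 0$, obtaining
\[
V(x(T)) - V(x_0) \leq \gamma^2\int_0^T\norm{w(t)}^2\,dt - \int_0^T\norm{z(t)}^2\,dt.
\]

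Next, using $V\geq 0$ (which follows from positive-definiteness) to drop the $V(x(T))$ term, I would rearrange this to
\[
\int_0^T\norm{z(t)}^2\,dt \leq V(x_0) + \gamma^2\int_0^T\norm{w(t)}^2\,dt \leq V(x_0) + \gamma^2\norm{w}_2^2 .
\]
Since the integrand $\norm{z(\cdot)}^2$ is non-negative, the left-hand side is monotone non-decreasing in $T$ and bounded above by a constant independent of $T$, hence it converges as $T\to\infty$, yielding $\norm{z}_2^2 \leq V(x_0) + \gamma^2\norm{w}_2^2$; in particular $z\in\mathscr{L}_2^{n_\mathrm{z}}$. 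Finally, I would apply the elementary inequality $\sqrt{a+b}\leq\sqrt{a}+\sqrt{b}$ valid for $a,b\geq 0$ to conclude
\[
\norm{z}_2 \leq \sqrt{V(x_0)} + \gamma\,\norm{w}_2 ,
\]
and identify $\zeta(x_0) := \sqrt{V(x_0)}$, which is non-negative on $\mathbb{X}$ and vanishes at the origin because $V$ is positive-definite with $V(0)=0$. This is exactly the bound \eqref{eq:l2gain} of Definition \ref{def:l2gain} with this choice of $\gamma$ and $\zeta$, so $\norm{\Sigma}_2 \leq \gamma$, which proves the claim.

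The only mild subtleties — and the closest thing to an obstacle — are bookkeeping ones: ensuring the trajectory is defined on all of $\mathbb{R}_+$ (guaranteed by the standing forward-completeness assumption on $\Sigma$), justifying the passage to the limit $T\to\infty$ (handled purely by monotonicity, since all integrands are non-negative, so no dominated-convergence argument is needed), and observing that the finiteness of $\norm{z}_2$ is a consequence of the estimate rather than an additional hypothesis. No smoothness of $V$ is required here, since only the integral form of the dissipation inequality is used.
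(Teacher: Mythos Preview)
Your argument is correct and is precisely the standard proof: apply the integral dissipation inequality on $[0,T]$, drop $V(x(T))\geq 0$, let $T\to\infty$ by monotonicity, and take square roots to identify $\zeta(x_0)=\sqrt{V(x_0)}$. The paper itself does not give an explicit proof but simply cites \cite{VanderSchaft2017}, where this very computation appears; so your proposal is in full agreement with the intended argument.
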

\begin{proof}
	See \cite{VanderSchaft2017}.
\end{proof}
%Note that similar statements can be made for connecting other performance notions such as passivity, the generalized generalized $\mathcal{H}_2$-norm and the $\mathcal{L}_\infty$-gain to dissipativity, see e.g. \cite{Scherer2015}. 

This notion of \ltwo-gain stability together with quadratic supply and quadratic (parameter-dependent) storage functions is often used to analyze NL systems and synthesize controllers for them through the LPV framework with powerful convex optimization based methods \cite{Hoffmann2015}. Other performance notions such as $\mathcal{H}_2$ \cite{Scherer1995} and passivity \cite{Polcz2019} have also been successfully formulated in terms of the dissipativity notion and generalized for the LPV framework. Although, dissipativity-based synthesis and analysis have become popular in the LPV literature, in \cite{Koelewijn2020,Scorletti2015} it has been shown that the connected stability and performance notions are equilibrium dependent, i.e., (i) convergence of state trajectories of the unperturbed system is only implied w.r.t.~the origin, see \cite{Koelewijn2020,Willems1971}, (ii) convergence of the perturbed system response to a non-zero equilibrium point or target trajectory is not guaranteed.   
Consequently, `standard' dissipativity is not the proper notion to use for tracking and rejection problems of NL systems. Hence, the question arises what the proper stability and performance 
notion is for such problems and how can we use it %for analysis and controller design} 
without abandoning the successful analysis and controller synthesis machinery of the LPV framework.

%
% because (i) it does not guarantee convergence of the perturbed system response, and (ii) it is equilibrium point dependent: it implies convergence of state trajectories of the unperturbed system only to the origin, see \cite{Koelewijn2020,Willems1971}.  
%
%
%
%\TR{However, as shown in \cite{Koelewijn2020,Scorletti2015}, using such LPV synthesis approaches for NL systems to achieve reference tracking and disturbance rejection can cause undesired behavior of the closed-loop NL system. This behavior is due to the fact that stability, guaranteed by `standard' dissipativity, is not a proper notion for the expected asymptotic stability and tracking performance of NL systems, because (i) it does not guarantee convergence of the perturbed system response, and (ii) it is equilibrium point dependent: it implies convergence of state trajectories of the unperturbed system only to the origin, see \cite{Koelewijn2020,Willems1971}. Other performance notions such as generalized $\mathcal{H}_2$ and passivity, used in LPV synthesis, suffer from the same phenomenon, as the problem is connected to the used concept of stability and dissipativity. }

%%%%%%%%%%%%%%%%%%%%%%%%%%%%%%%%%%%%%%%%%%%%%%%%%%%%%
\subsection{Equilibrium independent stability and performance}
In the nonlinear literature, several equilibrium independent stability notions have been introduced such as incremental stability \cite{Angeli2002}, contraction theory \cite{Lohmiller1998} and convergence theory \cite{Pavlov2006} to tackle similar issues. These concepts define stability with respect to trajectories of the system instead of with respect to a single equilibrium point. This results in a global concept of stability for NL systems, independent of a particular equilibrium point or trajectory, which is hence especially relevant for tracking and rejection problems. Similar extension have also been made to dissipativity and related performance concepts, resulting in incremental dissipativity \cite{Verhoek2020}, incremental \ltwo-gain and incremental passivity \cite{VanderSchaft2017}, to name a few. %\pagebreak

\begin{defn}[Incremental dissipativity \cite{Verhoek2020}]\label{def:incrdissip}
A system $\Sigma$, given by \eqref{eq:nonlinsys}, is incrementally dissipative with respect to a(n) (incremental) supply function $s_\Delta:\mathbb{W}\times\mathbb{W}\times\mathbb{Z}\times\mathbb{Z}\to\mathbb{R}$, if there exists a(n) (incremental) positive-definite storage function $V_\Delta: \mathbb{X}\times\mathbb{X}\to\mathbb{R}_+$ with $V_\Delta(x,x)=0$ such that  
\begin{equation}\label{eq:IDIE}
\begin{aligned}
V_\Delta\big(x(t_1), \tilde{x}(t_1)\big) &- V_\Delta\big(x(t_0), \tilde{x}(t_0)\big) \\ &\le \int_{t_0}^{t_1}\!\!s_\Delta\big(w(t), \tilde w(t),z(t), \tilde z(t)\big)\,d t,
\end{aligned}
\end{equation}
for any two trajectories $(x,w,z)\in\mathfrak{B}$ and $(\tilde{x},\tilde{w},\tilde{z})\in\mathfrak{B}$ and for all $t_0,t_1\in\mathbb{R}_+$ with $t_0 \leq t_1$.
\end{defn}
\begin{thm}[Diff. incr. dissipativity condition] \label{th:2}
	System $\Sigma$, given by \eqref{eq:nonlinsys}, is incrementally dissipative according to Definition \ref{def:incrdissip}, if there is a storage function $V_\Delta\in\m{C}_1$ such that 
%	\begin{equation}
	\begin{multline}
		\frac{\partial V_\Delta}{\partial x}(x,\tilde{x})f(x,w)+\frac{\partial V_\Delta}{\partial \tilde{x}}(x,\tilde{x})f(\tilde{x},\tilde{w})\\ \leq s_\Delta(w,\tilde{w},h(x,w),h(\tilde{x},\tilde{w})), \label{eq:DDIE}
		\end{multline}
%	\end{equation}
	for all %points 
	$x,\tilde{x}\in\mathbb{X}$ and $w,\tilde{w}\in\mathbb{W}$. %We then call $\Sigma$ to be incrementally dissipative on $\mathbb{X}\times\mathbb{U}$.
\end{thm}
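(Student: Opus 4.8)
The plan is to integrate the pointwise (differential) inequality \eqref{eq:DDIE} along an arbitrary pair of trajectories and recognize the result as the incremental dissipation inequality \eqref{eq:IDIE} of Definition \ref{def:incrdissip}. Concretely, I would fix any two solutions $(x,w,z)\in\mathfrak{B}$ and $(\tilde{x},\tilde{w},\tilde{z})\in\mathfrak{B}$ and introduce the scalar function $\mathcal{V}(t) := V_\Delta(x(t),\tilde{x}(t))$. Since $V_\Delta\in\mathcal{C}_1$ and $x,\tilde{x}\in\mathcal{C}_1^{n_\mathrm{x}}$, the composition $\mathcal{V}$ is absolutely continuous on every compact time interval (it is Lipschitz there, because $\partial V_\Delta/\partial x$ and $\partial V_\Delta/\partial \tilde{x}$ are bounded along the continuous trajectories), so the fundamental theorem of calculus applies and $\dot{\mathcal{V}}$ exists for almost every $t$.

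Next I would apply the chain rule to obtain, for almost every $t$,
\[
\dot{\mathcal{V}}(t) = \frac{\partial V_\Delta}{\partial x}(x(t),\tilde{x}(t))\,\dot{x}(t) + \frac{\partial V_\Delta}{\partial \tilde{x}}(x(t),\tilde{x}(t))\,\dot{\tilde{x}}(t),
\]
and substitute the dynamics $\dot{x}(t)=f(x(t),w(t))$, $\dot{\tilde{x}}(t)=f(\tilde{x}(t),\tilde{w}(t))$ from \eqref{eq:nonlinsys}, which are valid precisely because both triples lie in $\mathfrak{B}$. I would then invoke the hypothesis \eqref{eq:DDIE} with the admissible choice $x=x(t)$, $\tilde{x}=\tilde{x}(t)$, $w=w(t)$, $\tilde{w}=\tilde{w}(t)$ — admissible since $x(t),\tilde{x}(t)\in\mathbb{X}$ and $w(t),\tilde{w}(t)\in\mathbb{W}$ for all $t$ by definition of $\mathfrak{B}$ — to conclude $\dot{\mathcal{V}}(t)\le s_\Delta\big(w(t),\tilde{w}(t),h(x(t),w(t)),h(\tilde{x}(t),\tilde{w}(t))\big)$ a.e.\ in $t$. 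Using the output equations $z=h(x,w)$ and $\tilde{z}=h(\tilde{x},\tilde{w})$, the right-hand side is exactly $s_\Delta(w(t),\tilde{w}(t),z(t),\tilde{z}(t))$.

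Finally, I would integrate this inequality over $[t_0,t_1]$ for arbitrary $t_0\le t_1$; by absolute continuity of $\mathcal{V}$ the left side integrates to $V_\Delta(x(t_1),\tilde{x}(t_1))-V_\Delta(x(t_0),\tilde{x}(t_0))$, which gives \eqref{eq:IDIE}. Since the two trajectories were arbitrary and $V_\Delta$ is positive definite with $V_\Delta(x,x)=0$ by assumption, $\Sigma$ meets every requirement of Definition \ref{def:incrdissip}, completing the argument.

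The main obstacle I anticipate is not conceptual but a regularity bookkeeping issue: one has to justify that $t\mapsto V_\Delta(x(t),\tilde{x}(t))$ is absolutely continuous, so that differentiating almost everywhere and re-integrating via the fundamental theorem of calculus is legitimate, and that $t\mapsto s_\Delta(w(t),\tilde{w}(t),z(t),\tilde{z}(t))$ is locally integrable. Both follow from $V_\Delta\in\mathcal{C}_1$ together with $x,\tilde{x}$ being $\mathcal{C}_1$ and the inputs being such that the resulting triples lie in $\mathfrak{B}$; I would state this carefully but not dwell on it, since the remainder is the one-line ``chain rule, then integrate'' computation sketched above.
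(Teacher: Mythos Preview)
Your proposal is correct and follows essentially the same route as the paper: show that the integral dissipation inequality \eqref{eq:IDIE} is equivalent (for $V_\Delta\in\mathcal{C}_1$) to the time-derivative inequality along trajectories, and then observe that the latter follows from the pointwise condition \eqref{eq:DDIE} because $(x(t),w(t)),(\tilde{x}(t),\tilde{w}(t))\in\mathbb{X}\times\mathbb{W}$ for all $t$. The paper compresses the first step by citing Willems (1972), whereas you spell out the chain-rule and absolute-continuity details explicitly, but the argument is the same.
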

\begin{proof}
	See Appendix \ref{App:proofdiffincr}.
\end{proof}
\begin{thm}[Incremental stability]\label{thm:incrstab}
If system $\Sigma$, given by \eqref{eq:nonlinsys}, is incrementally dissipative according to Definition \ref{def:incrdissip} with storage function $V_\Delta\in\m{C}_1$ and supply function $s_\Delta$  of the form
\begin{equation}\label{eq:quadsupp}
	s_\Delta(w,\tilde{w},z,\tilde{z}) = \begin{bmatrix}
		w-\tilde{w}\\z-\tilde{z}
	\end{bmatrix}^\top \begin{bmatrix}
		Q & S\\S^\top & R
	\end{bmatrix}\begin{bmatrix}
		w-\tilde{w}\\z -\tilde{z}
	\end{bmatrix},
\end{equation} satisfying that 
\begin{equation}\label{eq:suppassum}
	s_\Delta(w,w,z,\tilde{z}) < 0,\quad\forall\,w\in \mathbb{W},\ \ \forall\,z,\tilde{z}\in\mathbb{Z},
\end{equation}
then $\Sigma$ is incrementally asymptotically stable (see \cite[Def.~2.1]{Angeli2002} for the formal definition of this notion of stability).
\end{thm}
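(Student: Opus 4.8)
The plan is to use $V_\Delta$ as an \emph{incremental Lyapunov function} and to reproduce, on the product state space $\mathbb{X}\times\mathbb{X}$, the argument behind the theorem ``Stability implied by dissipativity'' stated above. Recall that incremental asymptotic stability in the sense of \cite{Angeli2002} is guaranteed once one exhibits a $\m{C}_1$ function $W:\mathbb{X}\times\mathbb{X}\to\mathbb{R}_+$, together with class-$\m{K}$ functions $\underline{\alpha},\overline{\alpha},\alpha$ such that $\underline{\alpha}(\norm{x-\tilde{x}})\le W(x,\tilde{x})\le\overline{\alpha}(\norm{x-\tilde{x}})$ and $\tfrac{d}{dt}W(x(t),\tilde{x}(t))\le-\alpha(\norm{x(t)-\tilde{x}(t)})$ along every pair of solutions of \eqref{eq:nonlinsys} driven by the \emph{same} input. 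So the whole proof reduces to extracting such bounds from Definition \ref{def:incrdissip} with $W=V_\Delta$.

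First I would pass from the integral dissipation inequality \eqref{eq:IDIE} to its infinitesimal form: since $V_\Delta\in\m{C}_1$ and every $(x,w,z)\in\mathfrak{B}$ has $x\in\m{C}_1$, the map $t\mapsto V_\Delta(x(t),\tilde{x}(t))$ is absolutely continuous, and differentiating \eqref{eq:IDIE} with respect to $t_1$ yields $\tfrac{d}{dt}V_\Delta(x(t),\tilde{x}(t))\le s_\Delta(w(t),\tilde{w}(t),z(t),\tilde{z}(t))$ for almost every $t$. Then I specialise to the situation relevant for incremental stability, two solutions fed by the same input $\tilde{w}=w$: the quadratic supply \eqref{eq:quadsupp} collapses to $s_\Delta(w,w,z,\tilde{z})=(z-\tilde{z})^\top R\,(z-\tilde{z})$, and the sign condition \eqref{eq:suppassum} gives $s_\Delta(w,w,z,\tilde{z})\le0$ with strict inequality whenever $z\neq\tilde{z}$ (equivalently $R\prec0$ for the form \eqref{eq:quadsupp}), so $\tfrac{d}{dt}V_\Delta(x(t),\tilde{x}(t))\le-\lambda_{\min}(-R)\norm{z(t)-\tilde{z}(t)}^2\le0$. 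Combined with positive-definiteness of $V_\Delta$ and a standard level-set/comparison argument, this already delivers the (non-attractive) incremental stability part: $\norm{x(t)-\tilde{x}(t)}$ stays small when $\norm{x_0-\tilde{x}_0}$ is small, uniformly over inputs.

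The step I expect to be the crux is upgrading this to attractivity, $\norm{x(t)-\tilde{x}(t)}\to0$. From the above, $t\mapsto V_\Delta(x(t),\tilde{x}(t))$ is nonincreasing and bounded below, hence convergent, and integrating the infinitesimal inequality gives $z-\tilde{z}\in\mathscr{L}_2^{n_\mr{z}}$; but one still has to pass from decay of the \emph{output} increment $z-\tilde{z}$ to decay of the \emph{state} increment $x-\tilde{x}$. This is exactly the incremental analogue of the zero-state-detectability gap that keeps plain dissipativity from yielding asymptotic — rather than merely Lyapunov — stability in the non-incremental theorem above. I would close it by exploiting the strictness in \eqref{eq:suppassum}: using boundedness of the pair trajectory from the stability part together with uniform continuity of $t\mapsto\tfrac{d}{dt}V_\Delta(x(t),\tilde{x}(t))$ along solutions, a LaSalle--Yoshizawa/Barbalat-type argument shows that $z-\tilde{z}\to0$ is incompatible with $\norm{x-\tilde{x}}$ staying bounded away from zero, i.e.\ one can bound $\tfrac{d}{dt}V_\Delta\le-\alpha(\norm{x-\tilde{x}})$ for a class-$\m{K}$ function $\alpha$; with that in hand, Angeli's Lyapunov characterisation of incremental asymptotic stability applies verbatim with $W=V_\Delta$. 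Everything other than this output-to-state bridging is a routine transcription of the classical Lyapunov/dissipativity machinery to $\mathbb{X}\times\mathbb{X}$.
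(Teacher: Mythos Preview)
Your overall route---take $V_\Delta$ as an incremental Lyapunov function and feed it into Angeli's characterisation---is exactly what the paper does: its proof is a two-sentence invocation of \cite[Theorem~1]{Angeli2002}, asserting that $V_\Delta$ meets the incremental-Lyapunov conditions and that \eqref{eq:IDIE} under \eqref{eq:quadsupp}--\eqref{eq:suppassum} delivers Angeli's decrease inequality (his Eq.~(9)). You go further and correctly flag the point the paper glosses over: with $\tilde w=w$ the quadratic supply collapses to $(z-\tilde z)^\top R\,(z-\tilde z)$, so the infinitesimal dissipation inequality bounds $\dot V_\Delta$ in terms of the \emph{output} increment, whereas Angeli's condition asks for a bound in the \emph{state} increment $\norm{x-\tilde x}$.

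Your proposed bridge, however, does not close this gap. A Barbalat or LaSalle--Yoshizawa argument can at best conclude $z(t)-\tilde z(t)\to0$; it cannot manufacture a pointwise class-$\mathcal{K}$ estimate $\dot V_\Delta\le-\alpha(\norm{x-\tilde x})$, and the sentence ``$z-\tilde z\to0$ is incompatible with $\norm{x-\tilde x}$ staying bounded away from zero, i.e.\ one can bound $\dot V_\Delta\le-\alpha(\norm{x-\tilde x})$'' is a non sequitur on both counts. Even the weaker conclusion $x(t)-\tilde x(t)\to0$ would require an incremental zero-state-detectability property of $h$ (that $h(x,w)\equiv h(\tilde x,w)$ along same-input trajectories forces $x-\tilde x\to0$), which is neither assumed in the theorem nor derivable from \eqref{eq:suppassum}. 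In short: same approach as the paper, you correctly locate a genuine difficulty, but the Barbalat step does not resolve it---the paper simply asserts that the hypotheses of \cite{Angeli2002} are met without addressing this output-to-state issue at all.
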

\begin{proof}
	See Appendix \ref{App:incrstab}.
\end{proof}
\begin{defn}[Invariance]\label{def:invariance}
For system $\Sigma$, we call $\mathcal{X}\subseteq\mathbb{X}$ to be invariant under a given $\mathcal{W}\subseteq\mathbb{W}$, if  $x(t) = \phi_\mathrm{x}(t,0,x_0,w) \in \mathcal{X}$ for all $t\in\mathbb{R}_+$, $x_0\in\mathcal{X}$ and $w\in\mathcal{W}^{\mathbb{R}_+}$.
\end{defn}
\begin{lem}[Convergence]\label{lem:convergence}
	If system $\Sigma$, given by \eqref{eq:nonlinsys}, is incrementally dissipative on $\mathbb{X}\times\mathbb{W}$ for a supply function of the form \eqref{eq:quadsupp} and \eqref{eq:suppassum} and %assuming that there exist 
	there is a set $\mathcal{X}\subseteq\mathbb{X}$ which is compact and invariant under $\mathcal{W}\subseteq\mathbb{W}$, then there is a unique, so-called, steady-state solution $x^* \in \pi_\mr{x}\mathfrak{B}^\mathcal{X,W}$ for any bounded $w^* \in \pi_\mr{w}\mathfrak{B}^\mathcal{X,W}$, i.e. $(x^*,w^*)\in\mathfrak{B}^{\mathcal{X,W}}_\mr{x,w}$, such that any solution $(x,w^*)\in \mathfrak{B}^\mathcal{X,W}_\mr{x,w}$, converges asymptotically towards $(x^*,w^*)$ as $t\rightarrow \infty$. Here $\mathfrak{B}^\mathcal{X,W}$ 
	denotes the restriction of $\mathfrak{B}$
	%and $\mathfrak{B}^\mathcal{X}_\mr{x,u}$ denote the restrictions of the sets $\mathfrak{B}$ and $\mathfrak{B}_\mr{x,u}$
	such that $x(t)\in\mathcal{X}$ and $w(t)\in\mathcal{W}$ for all $t>0$.
\end{lem}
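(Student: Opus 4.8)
The plan is to obtain the statement almost directly from Theorem~\ref{thm:incrstab}, using compactness and forward invariance of $\mathcal{X}$ only to make the incremental estimate usable uniformly over $\mathcal{X}$ and to keep all trajectories inside the region on which incremental dissipativity is assumed.

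First I would invoke Theorem~\ref{thm:incrstab}: since $\Sigma$ is incrementally dissipative on $\mathbb{X}\times\mathbb{W}$ with respect to a supply function of the form \eqref{eq:quadsupp} satisfying \eqref{eq:suppassum}, $\Sigma$ is incrementally asymptotically stable. Specialising to the fixed bounded input $w^{*}$, this furnishes a class-$\mathcal{KL}$ function $\beta$ such that $\norm{x(t)-\tilde{x}(t)}\le\beta(\norm{x(0)-\tilde{x}(0)},t)$ for all $t\ge0$ and all pairs of solutions $(x,w^{*},z),(\tilde{x},w^{*},\tilde{z})\in\mathfrak{B}$.

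Next I would pin down the steady-state trajectory. By Definition~\ref{def:invariance}, any solution started in $\mathcal{X}$ under an input with values in $\mathcal{W}$ remains in $\mathcal{X}$ for all time; since $w^{*}$ has values in $\mathcal{W}$ and, by assumption, $\mathfrak{B}^{\mathcal{X},\mathcal{W}}_{\mathrm{x,w}}$ contains at least one pair with input $w^{*}$ (indeed one such trajectory emanates from every $x_{0}\in\mathcal{X}$), fix one of them and denote it $x^{*}$. For any other $(x,w^{*})\in\mathfrak{B}^{\mathcal{X},\mathcal{W}}_{\mathrm{x,w}}$, combining the estimate above with $\norm{x(0)-x^{*}(0)}\le d:=\mathrm{diam}(\mathcal{X})<\infty$ (finite by compactness) and monotonicity of $\beta(\cdot,t)$ gives $\norm{x(t)-x^{*}(t)}\le\beta(d,t)\to0$ as $t\to\infty$; this is exactly the claimed asymptotic convergence of $(x,w^{*})$ to $(x^{*},w^{*})$, and it holds uniformly in the initial condition over $\mathcal{X}$. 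Applying the same bound to two candidate steady states $x^{*,1},x^{*,2}$ yields $\norm{x^{*,1}(t)-x^{*,2}(t)}\to0$, so the steady-state solution is unique in the sense that all admissible trajectories driven by $w^{*}$ share one and the same asymptotic profile.

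The step I expect to be the real obstacle is giving ``uniqueness'' a sharper meaning than this. Because $w^{*}$ is available only on $\mathbb{R}_{+}$, no single trajectory on $\mathbb{R}_{+}$ is canonically distinguished from the rest in finite time (for $\dot{x}=-x+w^{*}(t)$, for instance, every forward solution is an equally valid $x^{*}$), so one can only assert uniqueness up to a transient --- which is precisely the notion used in convergence theory \cite{Pavlov2006}. Obtaining a canonical representative would instead require constructing it as the limit, as $n\to\infty$, of the pull-back solutions $\phi_{\mathrm{x}}(\cdot,-n,\xi,w^{*})$ and showing, via the incremental estimate and compactness of $\mathcal{X}$, that this limit exists and is independent of $\xi$; that construction presupposes extending $w^{*}$ to all of $\mathbb{R}$, which the present $\mathbb{R}_{+}$ setting does not provide, so I would state the result in the asymptotic form above. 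The remainder of the argument is routine manipulation of the $\mathcal{KL}$ bound.
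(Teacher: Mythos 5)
Your opening step coincides with the paper's: invoke Theorem \ref{thm:incrstab} to get incremental asymptotic stability from incremental dissipativity with a supply of the form \eqref{eq:quadsupp}--\eqref{eq:suppassum}. But from there the paper does not run a $\mathcal{KL}$/diameter argument; it cites \cite[Theorem 11]{Ruffer2013}, which states that incremental stability together with a compact invariant set $\mathcal{X}$ implies that $\Sigma$ is \emph{uniformly convergent} on $\mathcal{X}$, i.e.\ it supplies exactly the existence of the distinguished steady-state solution $x^*$ (in the sense of convergence theory \cite{Pavlov2006}) to which every solution $(x,w^*)\in\mathfrak{B}^{\mathcal{X},\mathcal{W}}_{\mathrm{x,w}}$ converges. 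That is the part you explicitly set aside: your argument shows that all admissible trajectories driven by $w^*$ attract one another, and then declares $x^*$ "unique up to a transient." Relative to the lemma as stated, this is a genuine gap, not a stylistic choice --- the existence and uniqueness of $x^*$ is the assertion, and it is precisely the content the paper imports from the cited convergence result. The pull-back limit you sketch is indeed how that result is obtained, and compactness of $\mathcal{X}$ enters there to extract a convergent (sub)sequence of pulled-back states, not merely to bound $\|x(0)-x^*(0)\|$ by $\mathrm{diam}(\mathcal{X})$; so the "real obstacle" you identify is not optional extra sharpness but the missing step.

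Your reservation about the $\mathbb{R}_+$ setting is a fair observation about the paper's phrasing: in a behavior defined only on $\mathbb{R}_+$, singling out one forward trajectory requires either the limiting construction (with the input/dynamics available on a doubly infinite time axis, or an equivalent device) or reading "unique" in the uniform-convergence sense of \cite{Pavlov2006,Ruffer2013}. The paper resolves this by citation rather than by reconstructing the argument, so the intended meaning of $x^*$ is the steady-state solution furnished by that theorem. If you want a self-contained proof, you must carry out the construction you only gestured at (and address the $\mathbb{R}_+$ issue head-on); otherwise the correct move is the paper's: reduce to Theorem \ref{thm:incrstab} and then invoke \cite[Theorem 11]{Ruffer2013}.
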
%

\begin{proof} 
See Appendix \ref{App:convergence}.
\end{proof}

As the \ltwo-gain (see Definition \ref{def:l2gain}) is a popular performance metric in the LPV case, it is important to consider its incremental formulation which was first introduced in \cite{Zames1966}. This %incremental \ltwo-gain of a system 
is given by the following definition, adapted from \cite{Verhoek2020}:
\begin{defn}[Incremental \ltwo-gain \cite{Verhoek2020}]\label{def:li2gain}
System $\Sigma$, given by \eqref{eq:nonlinsys}, is said to have a finite incremental \ltwo-gain, denoted as \litwo-gain, if for all $(x,w,z),(\tilde{x},\tilde{w},\tilde{z})\in\mathfrak{B}$ with $w,\tilde{w}\in \mathscr{L}_{2}^{n_\mr{w}}$, there is a finite $\gamma \geq 0$ and a function $\zeta(x_0,\tilde{x}_0)\geq  0,\ \forall x_0,\tilde x_0\in\mb{X}$ with $\zeta(x_0,x_0) = 0$  such that 
\begin{equation}\label{eq:incrml2gain}
\norm{z-\tilde{z}}_{2} \leq \gamma \norm{w-\tilde{w}}_{2}+\zeta(x_0,\tilde{x}_0).
\end{equation}
The induced \litwo-gain of $\Sigma$, denoted by $\norm{\Sigma}_{\mathrm{i}2}$, is the infimum of  $\gamma$ such that \eqref{eq:incrml2gain} holds. 
\end{defn}
Note that the \ltwo-gain and \litwo-gain are the same for LTI systems, see \cite{Koelewijn2019}. 
Similar to Lemma \ref{lem:l2gainDissip}, %also for incremental dissipativity and the \litwo-gain 
the following stability implication can be formulated in the incremental sense.
\begin{lem}[\litwo-gain stability]\label{lem:incrl2gainIncrDissip}
The \litwo-gain of an NL system $\Sigma$, defined by \eqref{eq:nonlinsys}, is less than or equal to
 $ \gamma \in\mathbb{R}_+$, %as defined by Definition \ref{def:li2gain} 
if $\Sigma$ is incrementally dissipative %, as defined in Definition \ref{def:incrdissip}, 
with respect to  $s_\Delta(w,\tilde{w},z,\tilde{z}) = \gamma^2\norm{w-\tilde{w}}^2-\norm{z-\tilde{z}}^2$. If such a finite $\gamma$ exists, then $\Sigma$ is incrementally asymptotically stable and hence we will call $\Sigma$ to be \litwo-gain stable.
\end{lem}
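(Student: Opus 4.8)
The plan is to mirror the proof of Lemma~\ref{lem:l2gainDissip} in the incremental setting, in two steps: first derive the \litwo-gain bound from the incremental dissipation inequality, then invoke Theorem~\ref{thm:incrstab} for the stability claim.

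For the gain bound I would take two arbitrary trajectories $(x,w,z),(\tilde x,\tilde w,\tilde z)\in\mathfrak{B}$ with $w,\tilde w\in\mathscr{L}_2^{n_\mr{w}}$ and apply Definition~\ref{def:incrdissip} with the supply $s_\Delta(w,\tilde w,z,\tilde z)=\gamma^2\norm{w-\tilde w}^2-\norm{z-\tilde z}^2$ on the interval $[0,T]$. Since $V_\Delta$ maps into $\mathbb{R}_+$, dropping the nonnegative term $V_\Delta(x(T),\tilde x(T))$ yields $\int_0^T\norm{z-\tilde z}^2\,dt\le \gamma^2\int_0^T\norm{w-\tilde w}^2\,dt+V_\Delta(x_0,\tilde x_0)$. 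Letting $T\to\infty$, and using $w,\tilde w\in\mathscr{L}_2^{n_\mr{w}}$ so that the right-hand integral converges, gives $\norm{z-\tilde z}_2^2\le\gamma^2\norm{w-\tilde w}_2^2+V_\Delta(x_0,\tilde x_0)$. Subadditivity of the square root then produces $\norm{z-\tilde z}_2\le\gamma\norm{w-\tilde w}_2+\sqrt{V_\Delta(x_0,\tilde x_0)}$, which is exactly \eqref{eq:incrml2gain} with $\zeta(x_0,\tilde x_0):=\sqrt{V_\Delta(x_0,\tilde x_0)}\ge0$ and $\zeta(x_0,x_0)=\sqrt{V_\Delta(x_0,x_0)}=0$. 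Hence, by Definition~\ref{def:li2gain}, the \litwo-gain of $\Sigma$ is at most $\gamma$.

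For the stability claim I would observe that this supply is of the quadratic form \eqref{eq:quadsupp} with $Q=\gamma^2 I_{n_\mr{w}}$, $S=0$, $R=-I_{n_\mr{z}}$, verify the sign condition \eqref{eq:suppassum} (setting $w=\tilde w$ gives $s_\Delta(w,w,z,\tilde z)=-\norm{z-\tilde z}^2$, which is negative whenever $z\neq\tilde z$), and then conclude incremental asymptotic stability directly from Theorem~\ref{thm:incrstab}; the last sentence of the lemma is then just a definition.

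The delicate point is the edge case $z=\tilde z$ in \eqref{eq:suppassum}, where $s_\Delta$ only vanishes rather than being strictly negative, so Theorem~\ref{thm:incrstab} is not literally applicable to the bare \litwo supply; this is where I expect the real work. I would resolve it in one of two ways: (i) exploit that the \litwo-gain is the \emph{infimum} of admissible $\gamma$, so one may fix a value strictly above the gain, for which $\Sigma$ is incrementally dissipative with respect to a supply that can be augmented by a small negative-definite term in $x-\tilde x$ while remaining valid, restoring strict \eqref{eq:suppassum}; or (ii) use the dissipation inequality $\dot V_\Delta\le-\norm{z-\tilde z}^2$ along unforced incremental trajectories ($w=\tilde w$), which makes $V_\Delta$ a nonincreasing incremental Lyapunov function (hence incremental stability), and upgrade to incremental asymptotic stability via an output-detectability / LaSalle-type argument. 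Either route keeps the argument short; everything else is routine.
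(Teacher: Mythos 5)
Your proposal is correct and follows the same skeleton as the paper's argument, but the paper's own proof is essentially a two-line citation: it refers to \cite{Verhoek2020} for the implication that incremental dissipativity with $s_\Delta(w,\tilde w,z,\tilde z)=\gamma^2\norm{w-\tilde w}^2-\norm{z-\tilde z}^2$ gives an \litwo-gain bound $\le\gamma$, and then simply asserts that this supply satisfies the conditions of Theorem \ref{thm:incrstab}, so incremental asymptotic stability follows. Your first step (integrate the dissipation inequality over $[0,T]$, drop the nonnegative terminal storage, let $T\to\infty$, take square roots, and set $\zeta:=\sqrt{V_\Delta}$) is exactly the standard argument behind the cited result, so it is a valid self-contained replacement for that citation. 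Concerning your ``delicate point'': the paper performs neither of the repairs you sketch. It treats \eqref{eq:suppassum} as satisfied by the \litwo\ supply, which is only consistent with reading \eqref{eq:suppassum} as requiring strict negativity for $z\neq\tilde z$ (at $z=\tilde z$ any supply of the quadratic form \eqref{eq:quadsupp} vanishes, so a fully literal reading would make the condition unsatisfiable by every such supply); under that reading your check $s_\Delta(w,w,z,\tilde z)=-\norm{z-\tilde z}^2<0$ for $z\neq\tilde z$ is all that is used, and neither the infimum-of-$\gamma$ perturbation nor the LaSalle/detectability upgrade appears anywhere. Your underlying observation — that $\dot V_\Delta\le-\norm{z-\tilde z}^2$ by itself does not force the \emph{state} increment to vanish — is a genuine subtlety, but it pertains to the proof of Theorem \ref{thm:incrstab} itself (which the paper delegates to \cite{Angeli2002}), not to this lemma; once Theorem \ref{thm:incrstab} is accepted as stated, the stability claim follows exactly as both you and the paper conclude, so the extra machinery you propose, while not wrong, is more than the paper's proof requires.
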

\begin{proof}
See Appendix \ref{App:incrl2gaindissip}.
\end{proof}

Note that, for formulation of incremental notions of passivity, the generalized $\mathcal{H}_2$-norm and the $\mathcal{L}_\infty$-gain, similar relations can be shown, see \cite{Verhoek2020} for the details. However, for the sake of compactness, in this work, we will focus on the \litwo-gain only, although our results do hold under general dissipativity relations.

In order to formulate computable analysis results for incremental dissipativity, we can consider the so called  \emph{differential form}\footnote{To study incremental, contraction and convergence properties of NL systems, similar representations as \eqref{eq:difform} have been developed describing the variation of the system along its trajectories, such as variational dynamics \cite{Crouch1987} or by defining the G\^ateaux derivative of the NL system \cite{Fromion2003}.} of the NL system \eqref{eq:nonlinsys}, while we will refer to the original NL system \eqref{eq:nonlinsys} as the \emph{primal form}. 
Assuming that $f,h\in \m{C}_1$, the differential form of \eqref{eq:nonlinsys} is given by
\begin{equation}\label{eq:difform} \hspace{-0.1mm}
\hspace{-.2em}\delta\Sigma\!:\!\! \left \lbrace \! \begin{aligned}
\!\delta \dot{{x}}(t) &= \m{A}(x(t),\!w(t)\hspace{-.1em})\delta x(t)  + \m{B}(x(t),\!w(t)\hspace{-.1em}) \delta w(t);\\
\!\delta z(t) &= \m{C}(x(t),\!w(t)\hspace{-.1em})\delta x(t) + \m{D}(x(t),\!w(t)\hspace{-.1em}) \delta w(t);
\end{aligned}\right.
\end{equation}
with $\m{A}=\frac{\partial f}{\partial x}$, $\m{B}=\frac{\partial f}{\partial w}$, $\m{C}= \frac{\partial h}{\partial x}$, $\m{D}= \frac{\partial h}{\partial w}$
and $( x, w)\in \mathfrak{B}_\mr{x,w}$. Solutions $\delta x\in \mathcal{C}_1^{n_\mr{x}}$ with $\delta x(t) \in \mathbb{R}^{n_\mr{x}}$ and initial condition $\delta x(0)=\delta x_0\in\mathbb{R}^{n_\mr{x}}$, $\delta w \in(\mathbb{R}^{n_\mr{w}})^{\mathbb{R}_+}$ and $\delta z \in(\mathbb{R}^{n_\mr{z}})^{\mathbb{R}_+}$ of \eqref{eq:difform} are assumed to satisfy \eqref{eq:difform} in the ordinary sense. For a $(x,w)\in\mathfrak{B}_\mr{x,w}$, the solution set of \eqref{eq:difform} is defined as
\begin{multline}\label{eq:nonlinsoldiff}
%\begin{aligned}
\mathfrak{B}_\delta^{(x,w)} \! := \!\Big\lbrace\! (\delta x,\delta w, \delta z)\in \left(\mathbb{R}^{n_\mr{x}}\times \mathbb{R}^{n_\mr{w}} \times\mathbb{R}^{n_\mr{z}}\right)^{\mathbb{R}_{+}} \mid \\[-1.5mm]
 %&\hspace{-.5em}
 \delta x\in\mathcal{C}_1^{n_\mr{x}}, \left(\delta x,\delta w,\delta z\right)\,\text{\small satisfy \eqref{eq:difform} along }( x, w) \!\Big\rbrace.
 %\end{aligned}
\end{multline}
Then $
 	\mathfrak{B}_\delta := \bigcup_{(x,w)\in\mathfrak{B}_\mr{x,w}} \mathfrak{B}_\delta^{(x,w)},
$
gives the complete solution set of \eqref{eq:difform}. To understand how the differential form \eqref{eq:difform} connects to \eqref{eq:nonlinsys} and what the $\left(\delta x,\delta w,\delta z\right)$ trajectories represent w.r.t. solutions of the primal system, consider Fig. \ref{fig:traj}. In this figure, a family of smoothly parameterized state trajectories $\bar{x}$, i.e., a homotopy, with $\bar{x}(\lambda)\in\pi_\mr{x}\mathfrak{B}$ for $\lambda\in[0,1]$, is depicted, which describes a transition from the current state trajectory $x$ at $\lambda =1$, i.e., $\bar{x}(1) = x$, to an other state trajectory $\tilde x$ at $\lambda =0$, i.e. $\bar x(0) = \tilde{x}$, of the solution set $\mathfrak{B}$. The parametrization of $\bar{x}$, i.e. the family of transition trajectories, can for example be chosen such that $\bar{x}(\lambda)$ is the shortest path, called the geodesic, under a given measure (e.g., minimal energy path). Then, $\delta x$ is the tangential variation of $x$ along $\lambda$, i.e., $\delta x(t,\lambda) = \frac{\partial \bar{x}(t,\lambda)}{\partial \lambda}$ with $\delta x(t) =\left.\frac{\partial \bar{x}(t,\lambda)}{\partial \lambda}\right\vert_{\lambda=1}$, while $\dot{x}$ corresponds to variation of $x$ along $t$, i.e., $\dot x(t) = \left.\frac{\partial \bar{x}(t,\lambda)}{\partial t}\right\vert_{\lambda=1}$. Similar definitions hold for the input and output trajectories and variations, see \cite{Verhoek2020} for more details.

\begin{figure}
    \centering
    \includegraphics[scale=.9]{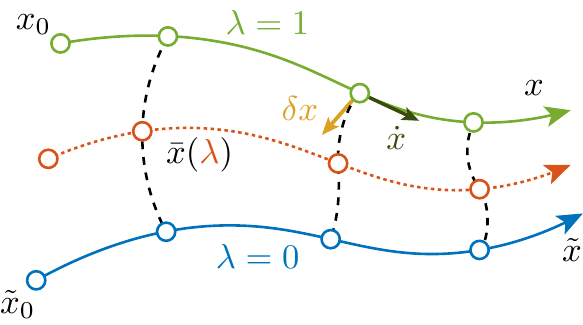}
    \caption{The homotopy based transition path that connects two state trajectories $x$ and $\tilde{x}$ of the solution set $\mathfrak{B}$ together with the corresponding meaning of $\delta x$ and $\dot{x}$.}
    \label{fig:traj} \vspace{-5mm}
\end{figure} 
Based on the differential form of the system, the notion of \emph{differential dissipativity} can also be defined:
\begin{defn}[Differential dissipativity \cite{Verhoek2020}]\label{def:diffdissip}
	System $\Sigma$ with differential form $\delta \Sigma$,  given by \eqref{eq:nonlinsys}  and \eqref{eq:difform}, respectively, is differentially dissipative with respect to the (differential) supply function $s_\delta:\mathbb{R}^{n_\mr{w}}\times\mathbb{R}^{n_\mr{z}}\rightarrow\mathbb{R}$ if there exists a (differential) positive-definite storage function $V_\delta:\mathbb{X}\times\mathbb{R}^{n_\mr{x}}\rightarrow\mathbb{R}_+$ such that 
	\begin{equation}\label{eq:diffDE}
	\begin{aligned}
	V_\delta\big( x(t_1), \delta{x}(t_1)\big) - V_\delta\big(& x(t_0), \delta{x}(t_0)\big) \\ &\le \int_{t_0}^{t_1}\!\!s_\delta\big(\delta w(t),\delta z(t)\big)\,d t,
	\end{aligned}
	\end{equation}
	for all trajectories $(\delta x,\delta w, \delta z)\in \mathfrak{B}_\delta^{(x,w)}$ under every $( x, w)\in \mathfrak{B}_\mr{x,w}$ and for all 
	$t_0,t_1\in\mathbb{R}_+$ with $t_0 \leq t_1$.
\end{defn}

In \cite{Verhoek2020}, it is shown that for  (state-dependent) quadratic storage and quadratic supply functions, differential dissipativity implies incremental dissipativity\footnote{Under the assumption that the quadratic output term of the supply function is negative semi-definite and $\mathbb{X}\times \mathbb{W}$ is convex. 
%the solutions of the NL system take values in a convex set.
}. This allows us to state the following theorem:
\begin{thm}[Diff. induced \litwo-gain stability] \label{thm:diffincrdiss}
	System $\Sigma$, given by \eqref{eq:nonlinsys} with $f,h\in \m{C}_1$ and $\mathbb{X}\times\mathbb{W}\times\mathbb{Z}$ convex, %, and with differential form $\delta \Sigma$, given by \eqref{eq:difform},
	 is \litwo-gain stable (see Lemma \ref{lem:incrl2gainIncrDissip}) with a bounded \litwo-gain $\leq \gamma\in\mathbb{R}_+$, if $\Sigma$ is differentially dissipative as defined in Definition \ref{def:diffdissip} for the supply function $s_\delta(\delta w,\delta z) = \gamma^2\norm{\delta w}^2-\norm{\delta z}^2$ with a quadratic positive-definite storage function $V_\delta( x,\delta x) = \delta x^{\!\top}\! M( x)\delta x$.
\end{thm}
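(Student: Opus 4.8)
The plan is to derive the statement from the already-available differential-to-incremental dissipativity link of \cite{Verhoek2020} combined with Lemma~\ref{lem:incrl2gainIncrDissip}, and, for transparency, to spell out the homotopy construction that underpins that link in this special case. First I would check that all hypotheses of the relevant result of \cite{Verhoek2020} hold here: $f,h\in\m{C}_1$; the set $\mathbb{X}\times\mathbb{W}$ is convex, being a projection of the convex set $\mathbb{X}\times\mathbb{W}\times\mathbb{Z}$; the storage $V_\delta(x,\delta x)=\delta x^{\!\top}M(x)\delta x$ is quadratic and positive-definite; and $s_\delta(\delta w,\delta z)=\gamma^2\norm{\delta w}^2-\norm{\delta z}^2$ is quadratic with negative semi-definite output weight $-I\preceq 0$. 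Hence $\Sigma$ is incrementally dissipative in the sense of Definition~\ref{def:incrdissip} with incremental supply $s_\Delta(w,\tilde w,z,\tilde z)=\gamma^2\norm{w-\tilde w}^2-\norm{z-\tilde z}^2$, and Lemma~\ref{lem:incrl2gainIncrDissip} then yields \litwo-gain $\le\gamma$ together with incremental asymptotic stability, i.e. \litwo-gain stability.

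To exhibit the construction behind the first step (and to produce the function $\zeta$ of \eqref{eq:incrml2gain} directly), I would fix two solutions $(x,w,z),(\tilde x,\tilde w,\tilde z)\in\mathfrak{B}$ with $w,\tilde w\in\mathscr{L}_2^{n_\mr{w}}$ and join them by a homotopy: put $\bar w(t,\lambda)=\lambda w(t)+(1-\lambda)\tilde w(t)$ and $\bar x_0(\lambda)=\lambda x_0+(1-\lambda)\tilde x_0$, $\lambda\in[0,1]$, which lie in $\mathbb{W}$ and $\mathbb{X}$ by convexity, and let $\bar x(\cdot,\lambda)$ be the associated forward-complete solution with $\bar z(t,\lambda)=h(\bar x(t,\lambda),\bar w(t,\lambda))$. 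By the standing assumptions $(\bar x(\cdot,\lambda),\bar w(\cdot,\lambda),\bar z(\cdot,\lambda))\in\mathfrak{B}$ for every $\lambda$, and the endpoints are $(x,w,z)$ at $\lambda=1$ and $(\tilde x,\tilde w,\tilde z)$ at $\lambda=0$. Using $f,h\in\m{C}_1$, smooth dependence on $\lambda$ makes $\delta x:=\partial_\lambda\bar x$, $\delta w:=\partial_\lambda\bar w=w-\tilde w$, $\delta z:=\partial_\lambda\bar z$ satisfy the variational equations \eqref{eq:difform}, i.e. $(\delta x(\cdot,\lambda),\delta w(\cdot,\lambda),\delta z(\cdot,\lambda))\in\mathfrak{B}_\delta^{(\bar x(\cdot,\lambda),\bar w(\cdot,\lambda))}$, with $\delta x(0,\lambda)=x_0-\tilde x_0$.

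Then I would apply the differential dissipation inequality \eqref{eq:diffDE} for each $\lambda$ on $[0,T]$ and drop the non-negative term $V_\delta(\bar x(T,\lambda),\delta x(T,\lambda))$, obtaining $\int_0^T\norm{\delta z(t,\lambda)}^2dt\le\gamma^2\int_0^T\norm{w-\tilde w}^2dt+(x_0-\tilde x_0)^{\!\top}M(\bar x_0(\lambda))(x_0-\tilde x_0)$. Since $w-\tilde w\in\mathscr{L}_2^{n_\mr{w}}$, the right-hand side is bounded uniformly in $T$, so $\delta z(\cdot,\lambda)\in\mathscr{L}_2^{n_\mr{z}}$ and, letting $T\to\infty$ and using $\sqrt{a+b}\le\sqrt a+\sqrt b$, $\norm{\delta z(\cdot,\lambda)}_2\le\gamma\norm{w-\tilde w}_2+\sqrt{(x_0-\tilde x_0)^{\!\top}M(\bar x_0(\lambda))(x_0-\tilde x_0)}$. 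Finally, since $z(t)-\tilde z(t)=\int_0^1\delta z(t,\lambda)\,d\lambda$, Minkowski's integral inequality gives $\norm{z-\tilde z}_2\le\int_0^1\norm{\delta z(\cdot,\lambda)}_2\,d\lambda\le\gamma\norm{w-\tilde w}_2+\zeta(x_0,\tilde x_0)$ with $\zeta(x_0,\tilde x_0):=\int_0^1\sqrt{(x_0-\tilde x_0)^{\!\top}M(\bar x_0(\lambda))(x_0-\tilde x_0)}\,d\lambda\ge 0$ and $\zeta(x_0,x_0)=0$; this is exactly \eqref{eq:incrml2gain}, so $\norm{\Sigma}_{\mr{i}2}\le\gamma$, and incremental asymptotic stability follows (via Lemma~\ref{lem:incrl2gainIncrDissip}, or from Theorem~\ref{thm:incrstab} with the incremental supply above, which satisfies \eqref{eq:suppassum}).

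The step I expect to be the crux is justifying that the homotopy $\lambda\mapsto\bar x(\cdot,\lambda)$ is well defined and $\m{C}_1$ in $\lambda$, so that $\delta x$ is genuinely a solution of the differential form: one needs every $\bar x_0(\lambda)\in\mathbb{X}$ to generate a forward-complete solution remaining in $\mathbb{X}$ under the convex-combination input $\bar w(\cdot,\lambda)$ — which is exactly where convexity of $\mathbb{X}\times\mathbb{W}$ and the standing well-posedness assumptions enter — and differentiable dependence on the parameter, which relies on $f,h\in\m{C}_1$. The remaining ingredients (the per-$\lambda$ estimate, the limit $T\to\infty$, and the reassembly via Minkowski's inequality) are routine; the quadratic shape of $V_\delta$ is used only to make $\zeta$ vanish on the diagonal.
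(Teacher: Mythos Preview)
Your proposal is correct and matches the paper's approach, which simply cites \cite{Verhoek2020} for the differential-to-incremental dissipativity implication; your first paragraph reproduces exactly that logic (invoke \cite{Verhoek2020}, then apply Lemma~\ref{lem:incrl2gainIncrDissip}). The remainder of your write-up, where you unpack the homotopy construction and Minkowski argument, goes beyond what the paper provides and is essentially a sketch of the proof in \cite{Verhoek2020} itself---useful exposition, and your identification of the crux (well-posedness and $\m{C}_1$-in-$\lambda$ dependence of the parameterized trajectories, relying on convexity of $\mathbb{X}\times\mathbb{W}$ and $f,h\in\m{C}_1$) is accurate.
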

\begin{proof}
	See \cite{Verhoek2020}.
\end{proof}
\begin{rem}
	If the NL system $\Sigma$, given by \eqref{eq:nonlinsys}, is differentially dissipative then this implies dissipativity of its differential form $\delta \Sigma$, given by \eqref{eq:difform}, and vice versa. When $\Sigma$ is differentially dissipative with respect to %a supply function 
	$s_\delta(\delta w, \delta z) = \gamma^2\norm{\delta w}^2-\norm{\delta z}^2$, we will say that the differential form $\delta \Sigma$ has a bounded \ltwo-gain $\gamma$ and it is \ltwo-gain stable. 
\end{rem}

Theorem \ref{thm:diffincrdiss} has useful implications, namely, dissipativity properties of the differential form of an NL system imply incremental dissipativity properties (e.g.,  \litwo-gain boundedness and incremental stability) of the primal form. 
 Note that, for incremental notions of passivity, the generalized $\mathcal{H}_2$-norm and the $\mathcal{L}_\infty$-gain, similar relations and results do hold.

%%%%%%%%%%%%%%%%%%%%%%%%%%%%%%%%%%%%%%%%%%%%%%%%%%%%%%%%%%%%%%%%%%%%%%%%%%%%%%%%%%%%%%%%%%%%%%%%%
%%%%% Problem Statement %%%%%%
\section{Problem statement}\label{sec:Problem}
In this paper we consider the problem of control synthesis for a rather wide class of nonlinear control configurations, described by so-called \emph{generalized plants} $P$ \cite{Apkarian1995}. The objective is to solve the synthesis problem by a novel LPV approach that, via exploiting differential dissipativity, can ensure global stability and performance guarantees for tracking and rejection. As described in Section \ref{sec:Introduction}, current LPV synthesis methods cannot provide such guarantees in general.
A wide range of control structures from feedback and feedforward control to observer design for nonlinear systems can be expressed in the form of the plant
\begin{equation}\label{eq:genplantfull}
    P: \left \lbrace
    \begin{aligned}
    \genStateDot(t) &= f_\mr{\pnote}\left(\genState(t),\genInput(t)\right)+B_\mr{w} w(t);\\
    z(t) &=  \genHz \left(\genState(t),\genInput(t)\right)+D_\mr{zw}w(t);\\
    \genOutput(t) &= \genHy(\genState(t),\genInput(t))+D_\mr{yw}w(t);
    \end{aligned}\right. 
\end{equation}
%\TR{(we could drop here the "p" completely... or keep it, but then the closed loop state can be called x.)}
where $\genState\in\mathcal{C}_1^{\genStateSize}$ is the state with
$\genState(t)\in \mathbb{X}_\mr{\pnote}\subseteq \mathbb{R}^{\genStateSize}$ and initial condition $\genState(0) = \genStateInitial\in\mathbb{R}^{\genStateSize}$, and elements of $w \in \mathbb{W}^{\mathbb{R}_+}$ with $\mathbb{W} \subseteq \mathbb{R}^{n_\mathrm{w}}$  correspond to references, external disturbances, etc., collectively called as \emph{generalized disturbances}, while elements of  $z\in\mathbb{Z}^{\mathbb{R}_+}$ with $\mathbb{Z}\subseteq \mathbb{R}^{n_\mathrm{z}}$  characterize the \emph{generalized performance} (e.g. tracking error, control effort, etc.). Furthermore, we introduce the channels $u$ and $y$ where $\genInput \in\genInputSet^{\mathbb{R}_+}$ with $\genInputSet \subseteq \mathbb{R}^{\genInputSize}$ is the control input and $\genOutput \in \mathbb{Y}_\mr{\pnote}^{\mathbb{R}_+}$ with $\mathbb{Y}_\mr{\pnote}\subseteq \mathbb{R}^{\genOutputSize}$ is the measured output. These represent the channels on which the controller $K$ interacts with $P$.  Additionally, $\genF$, $\genHz$ and  $\genHy$ are assumed to be in $\mathcal{C}_1$. Let us also introduce the solution set of \eqref{eq:genplantfull}, defined as follows
\begin{align}\label{eq:nonlinsolgen}
\mathfrak{B}_\mr{\pnote} := \Big\lbrace\! & (\genState,\genInput,w,z,\genOutput)\in \left(\genStateSet\times \genInputSet\times\mathbb{W}\times\mathbb{Z} \times\genOutputSet\right)^{\mathbb{R}_{+}} \mathrel{\big|}\notag\\[-1mm] 
&\hspace{.5em}x_\mr{\pnote}\in\mathcal{C}_1^{n_\mr{x}},\, 
%\genInput\in\mathscr{L}_{2}^{\genInputSize},w\in\mathscr{L}_{2}^{n_\mr{w}},
\left(\genState,\genInput,w,z,\genOutput\right)\,\text{\small satisfies \eqref{eq:genplantfull}}\Big\rbrace.\hspace{-2em}
\end{align}
Again, like for \eqref{eq:nonlinsys} through $\mathfrak{B}$, this implicitly restricts the class of inputs functions that we consider, as they should be such that the solutions of \eqref{eq:genplantfull} are in $\mathfrak{B}_\mr{\pnote}$. Moreover, introduce $\mathfrak{B}_{\mr{\pnote},\mr{\genState}} = \pi_\mr{\genState}\mathfrak{B}_\mr{\pnote}$. In Fig. \ref{fig:genplant}, an example of such a plant $P$ interconnected with a controller $K$ is given.

The controller $K$ for a given plant (i.e., control configuration) $P$ is considered in the form 
\begin{equation}\label{eq:generalcontroller}
	K: \left \lbrace \begin{aligned}
\dot{x}_\mathrm{k}(t) &= f_\mathrm{k}(x_\mathrm{k}(t),u_\mathrm{k}(t));\\
y_\mathrm{k}(t) &= h_\mathrm{k}(x_\mathrm{k}(t),u_\mathrm{k}(t));
\end{aligned} \right.
\end{equation}
where $x_\mr{k}$ is the state, $u_\mr{k}$ is the input and $y_\mr{k}$ is the output of the controller. The closed-loop interconnection $\mathcal{F}_\mathrm{l}(P,K)$ of $P$ and $K$ through $u_\mr{k} = \genOutput$ and $\genInput = y_\mr{k}$ is an NL system in the form of \eqref{eq:nonlinsys}.
%. Note that $\mathcal{F}_\mathrm{l}(P,K)$ will be a system of the same form as $\Sigma$ \eqref{eq:nonlinsys}.}
\begin{figure}
    \centering
    \includegraphics[scale=.9]{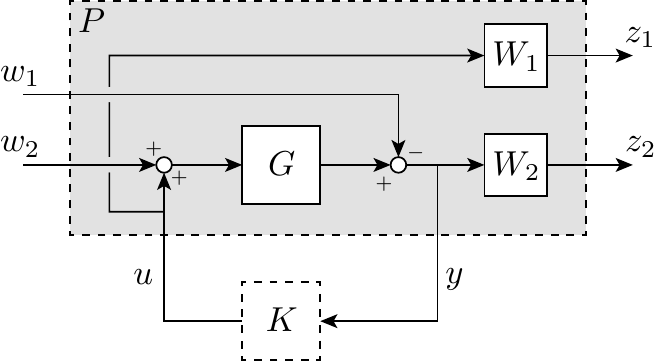}
    \caption{Example of a control configuration in terms of a closed-loop connection of the controller $K$ with a plant $P$, consisting of a nonlinear system $G$ and weighting filters $W_1$ and $W_2$.}%
    \label{fig:genplant} \vspace{-5mm}
\end{figure}

Our objective in this paper is to synthesize $K$ for a given plant $P$, such that the closed-loop interconnection $\mathcal{F}_\mathrm{l}(P,K)$ is incrementally dissipative in terms of Definition \ref{def:incrdissip} under a given supply function $s_\Delta$ that satisfies \eqref{eq:quadsupp} and \eqref{eq:suppassum}, implying closed-loop incremental stability. However, for the sake of compactness of the discussion, we will exemplify the theoretical toolchain only via the incremental \ltwo-gain, although the overall machinery can be easily extended for a general class of performance concepts considered in \cite{Verhoek2020}. These lead to the following problem statement:

\begin{prob} \label{prob:state}
	For a given plant $P$, synthesize $K$ such that the \litwo-gain from $w$ to $z$ of the closed-loop interconnection $\mathcal{F}_\mathrm{l}(P,K)$ is minimized: 
	\begin{equation}
	\begin{gathered}
		\argmin_K\, \gamma \quad \text{s.t.} \\
		\begin{aligned}
\Vert\mathcal{F}_\mathrm{l}( P,  K)(w,\fullstate_0)-\mathcal{F}_{\mathrm{l}}( P, &K)(\tilde{w},\tilde{\fullstate}_0)\Vert_{2} \leq \\&\gamma \norm{w-\tilde{w}}_{2}+\zeta(\fullstate_0,\tilde{\fullstate}_0),
\end{aligned}
	\end{gathered}
	\end{equation}
	for all $\fullstate_0,\tilde \fullstate_0\in\mathbb{X}$ and $w,\tilde{w}\in\mathbb{W}^{\mathbb{R}_+}$ with $w-\tilde{w}\in \mathscr{L}_{2}^{n_\mr{w}}$, where $\fullstate=\col\left(x_\mathrm{\pnote},x_\mr{k}\right)$ %, with $\fullstate(t)\in\mathbb{R}^{n_\fullstate}$, 
	is associated with the state-space representation of the closed-loop $\mathcal{F}_\mathrm{l}(P,K)$ and where $\zeta(\fullstate_0,\tilde \fullstate_0 )\geq  0,\,\forall \fullstate_0,\tilde \fullstate_0\in \mb{X}$ with $\zeta(\fullstate_0,\fullstate_0) = 0$. 
\end{prob}

To ensure that the above given synthesis problem is feasible with a finite $\gamma$, we require $P$ to be a generalized plant:
\begin{defn}[Generalized plant]\label{def:genplant}
	$P$, given by \eqref{eq:genplantfull}, is a generalized plant, if there exists a controller $K$ of the form \eqref{eq:generalcontroller} such that the closed-loop interconnection $\mathcal{F}_\mathrm{l}(P,K)$ is 
%	well-posed and 
	incrementally stable.
\end{defn}
\begin{prop}\label{prop:genplant}
	$P$, given by \eqref{eq:genplantfull}, is a generalized plant, if $(\frac{\partial \genF}{\partial \genState}(\genState,\genInput),\frac{\partial \genF}{\partial \genInput}(\genState,\genInput))$ is stabilizable and $(\frac{\partial \genF}{\partial \genState}(\genState,\genInput),\frac{\partial \genHy}{\partial \genState}(\genState,\genInput))$ is detectable over $\mathbb{X}_\mr{\pnote}\times\mathbb{U}$, see \cite[Section 5.3.2]{Pavlov2006}.
	%	\todo{and are surjective w.r.t.~$\genStateSet$ [@Roland, what was meant again by this?]}, $\mathbb{U}$, while $\genState$ is \todo{locally controllable} via $\genInput$ and \todo{locally observable} via $\genOutput$ for each element in $\genStateSet\times\genInputSet$ (see \cite{Isidori1995,Teel1994}) \todo{[do we need local controllability and observability or controllability and observability of the differential form (I guess they are not the same but not sure if one implies the other)]}. 
\end{prop}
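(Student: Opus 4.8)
The plan is to establish the \emph{existence} claim of Definition~\ref{def:genplant} constructively, by working with the differential form $\delta P$ of the plant and then transferring the conclusion back to the primal closed loop via Theorem~\ref{thm:diffincrdiss}. Differentiating \eqref{eq:genplantfull} along $(\genState,\genInput)\in\mathfrak{B}_{\mr{\pnote},\mr{\genState}}\times\genInputSet$ yields an LPV-type representation of $\delta P$ whose state matrices are $\m{A}(\genState,\genInput)=\frac{\partial \genF}{\partial \genState}$, $\m{B}_u(\genState,\genInput)=\frac{\partial \genF}{\partial \genInput}$, $\m{C}_y(\genState,\genInput)=\frac{\partial \genHy}{\partial \genState}$, together with the remaining Jacobian blocks and the disturbance/performance channels of \eqref{eq:difform}. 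Treating $(\genState,\genInput)$ as a scheduling signal ranging over $\mathbb{X}_\mr{\pnote}\times\genInputSet$ — exactly as done throughout the paper — the hypotheses say precisely that the scheduled pairs $(\m{A},\m{B}_u)$ and $(\m{A},\m{C}_y)$ are stabilizable and detectable over the operating region.

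The next step is to construct an observer-based controller for $\delta P$: a parameter-dependent state-feedback gain $F(\genState,\genInput)$ and observer gain $L(\genState,\genInput)$ such that the augmented differential closed-loop dynamics (plant differential state together with the differential observer error) admit a quadratic, state-dependent positive-definite storage function of the form appearing in Theorem~\ref{thm:diffincrdiss}, rendering the differential form of $\mathcal{F}_\mathrm{l}(P,K)$ differentially dissipative with the supply $s_\delta(\delta w,\delta z)=\gamma^2\norm{\delta w}^2-\norm{\delta z}^2$ for some finite $\gamma$. Mere internal exponential stability of the differential closed loop already suffices here; the performance channel can be absorbed by choosing a sufficiently conservative weighting. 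Existence of such $F$, $L$ and storage function under (uniform) stabilizability and detectability is exactly the content of \cite[Section~5.3.2]{Pavlov2006}; alternatively it follows from feasibility of the LPV synthesis conditions developed in Section~\ref{sec:IncrFramework}. The designed differential controller is then realized as a genuine nonlinear controller $K$ of the form \eqref{eq:generalcontroller}, whose differential form coincides with the designed $\delta$-controller, via the realization procedure of Section~\ref{sec:IncrFramework}.

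Finally I would apply Theorem~\ref{thm:diffincrdiss} to $\mathcal{F}_\mathrm{l}(P,K)$: since the closed-loop vector field and output map are $\m{C}_1$ and the operating domain is convex, and its differential form is dissipative with the quadratic supply above and a quadratic positive-definite storage function, $\mathcal{F}_\mathrm{l}(P,K)$ is \litwo-gain stable and hence incrementally asymptotically stable by Lemma~\ref{lem:incrl2gainIncrDissip}. This is precisely the property demanded by Definition~\ref{def:genplant}, so $P$ is a generalized plant.

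I expect the main obstacle to be the gap between \emph{pointwise} solvability of the stabilization/estimation conditions for the Jacobian pairs and a single, uniformly valid (quadratic, parameter-dependent) Lyapunov certificate over $\mathbb{X}_\mr{\pnote}\times\genInputSet$: pointwise stabilizability does not by itself produce a uniform storage function, which is why the \emph{uniform} notions of stabilizability and detectability of \cite[Section~5.3.2]{Pavlov2006} are needed, and it is also where convexity of the operating domain and boundedness of the scheduling range enter. A secondary, more technical point is the realization step — verifying that the synthesized differential feedback and observer gains integrate to a well-defined nonlinear controller with the prescribed differential form — which is deferred to the constructive development of Section~\ref{sec:IncrFramework}.
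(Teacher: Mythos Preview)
The paper does not give its own proof of this proposition: the statement itself ends with ``see \cite[Section 5.3.2]{Pavlov2006}'' and no argument appears in the body or the appendices. The intended justification is simply to import the convergence-based output-feedback construction of Pavlov et al., where uniform stabilizability and detectability of the Jacobian pairs over the operating region yield an observer-based controller rendering the closed loop uniformly convergent, hence incrementally stable.

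Your proposal is a sensible elaboration and, at its core, rests on the same citation --- you explicitly invoke \cite[Section 5.3.2]{Pavlov2006} for the existence of $F$, $L$ and the quadratic certificate, which is precisely the step the paper outsources. Where your route differs is in wrapping the construction in the Section~\ref{sec:IncrFramework} machinery (LPV embedding of $\delta P$, differential synthesis, controller realization, then Theorem~\ref{thm:diffincrdiss}). This is logically admissible but heavier than necessary and has two technical snags: (i) the realization result (Theorem~\ref{thm:contrrealize}) is derived only for the restricted plant form \eqref{eq:genplant}, under Assumption~\ref{asum:storassum} (constant $M$), and presupposes a given steady-state trajectory $\stTraj$, none of which are assumed at the level of \eqref{eq:genplantfull}; (ii) Proposition~\ref{prop:genplant} is stated precisely to \emph{guarantee} feasibility of the later synthesis, so proving it via that synthesis reverses the intended logical order. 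The direct Pavlov construction --- observer plus state feedback with a common quadratic (incremental) Lyapunov function, giving convergence and hence incremental stability --- avoids these detours, and your own diagnosis of the ``pointwise versus uniform'' gap is exactly the content that reference supplies.
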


To further simplify our discussion, we will assume that \eqref{eq:genplantfull} can be transformed in terms of the procedure discussed in Appendix \ref{App:A} to the form 
%%%%%%%%%%%%%%%%
\begin{equation}\label{eq:genplant}
    P: \left \lbrace
    \begin{aligned}
    \genStateDot(t) &= \genF\! \left(\genState(t)\right)+B_\mr{w} w(t)+B_\mr{u}\genInput(t);\\
    z(t) &=  \genHz\! \left(\genState(t)\right)+D_\mr{zw}w(t)+D_\mr{zu}\genInput(t);\\
    \genOutput(t) &= C_\mr{y}\genState(t)+D_\mr{yw}w(t);\\
    \end{aligned}\right. 
\end{equation}
for which we will similarly denote its behavior by $\mathfrak{B}_\mr{\pnote}$ and also use $\mathfrak{B}_{\mr{\pnote},\mr{\genState}} = \pi_\mr{\genState}\mathfrak{B}_\mr{\pnote}$.
While \eqref{eq:genplant} may seem restrictive, \eqref{eq:genplantfull} can be always expressed as \eqref{eq:genplant} at cost of increasing the state dimension and requiring the input $u$ to be (piecewise) differentiable~\cite{Nijmeijer2016}. We will see that $P$ in the form of \eqref{eq:genplant} is advantageous to provide a realization of $K$ after synthesis.

%%%%%%%%%%%%%%%%%%%%%%%%%%%%%%%%%%%%%%%%%%%%%%%%%%%%%%%%%%%%%%%%%%%%%%%%%%%%%%%%%%%%%%%%%%%%%%%%%
%%%%% Incremental LPV Framework %%%%%
\pagebreak
\section{Incremental Controller Synthesis}\label{sec:IncrFramework}
\subsection{Main Concept}

To solve Problem \ref{prob:state}, we propose a novel procedure to synthesize an NL controller $K$ that ensures \litwo-gain stability and performance of $\mathcal{F}_\mathrm{l}(P,K)$. The main steps of the method are summarized as follows: 
\begin{enumerate}
    \item   \emph{Differential embedding} step: \label{step1} Given a generalized plant $P$, its differential form $\delta P$ is computed. % The resulting $\delta P$ is embedded in an LPV representation $\delta P_\mathrm{LPV}$. %This step will be referred to as the \emph{differential embedding} step.
    An LPV system $\delta P_\mathrm{LPV}$ is then constructed to represent the resulting $\delta P$, in terms of a so-called LPV embedding\footnote{This will be defined formally later in Definition \ref{def:lpvemb}.}.
    \item \emph{Differential controller synthesis} step: \label{step2} For the LPV embedded differential form $\delta P_\mathrm{LPV}$, %of the generalized plant 
   an LPV controller $\delta K$ is synthesized, ensuring a minimal closed-loop \ltwo-gain. This synthesis is accomplished using standard methods of the LPV framework, e.g., \cite{Apkarian1995,Packard1994,Scherer2001,Wu1995}.    
   %This step will be referred to as the \emph{differential} controller synthesis step;
    \item \emph{Controller realization} step: The synthesized controller $\delta K$ is realized as a primal NL controller $K$ in the form of \eqref{eq:generalcontroller} to be used with the original NL system $P$. %This step will be referred to as the \. 
\end{enumerate}

Our key contributions in the above proposed controller synthesis scheme is the controller realization procedure (Theorem \ref{thm:contrrealize}) and proving that the resulting $K$ solves Problem \ref{prob:state}, i.e., performance and stability guarantees obtained in the differential controller synthesis step do hold in the incremental sense on $\mathcal{F}_\mathrm{l}(P,K)$ (see Theorems \ref{thrm:incric}-\ref{thrm:cl-li2}).
 
%Based on Theorem \ref{thm:diffincrdiss} and our contributions in terms of Theorems \ref{thrm:incric}-\ref{thrm:cl-li2}, it is then proven that \litwo-gain stability and performance for the closed-loop of the original generalized plant and the (to-be-realized) primal form of the synthesized controller is ensured. 
Note that the same procedure can be applied in order to ensure different performance specifications by changing the used performance notion in the differential controller synthesis step, e.g. in order to ensure incremental passivity one would synthesize an LPV controller for the differential form of the generalized plant such that closed-loop passivity is ensured.

\subsection{Separability in the differential domain}

The procedure %above to achieve the objective of minimizing the \litwo-gain of $\mathcal{F}_\mathrm{l}(P,K)$ 
relies on Theorem \ref{thm:diffincrdiss}, which shows that to solve Problem \ref{prob:state} we can equivalently minimize the \ltwo-gain of the differential form of $\mathcal{F}_\mathrm{l}(P,K)$. %In the procedure the plant $P$ and controller $K$ are `transformed' from their primal form to their differential form and vice versa independently. 
Before discussing the steps of the proposed  procedure, we will first show that the differential form of $\mathcal{F}_\mathrm{l}(P,K)$ is equal to $\mathcal{F}_\mr{l}(\delta P,\delta K)$. This significantly simplifies the synthesis procedure, as it allows for independently `transforming' $P$ and $K$ between the primal and %from their primal form to their 
differential domains.
%form. 
\begin{thm}[Closed-loop differential form] \label{thrm:incric} The differen-tial form of the closed-loop system $\mathcal{F}_\mathrm{l}(P,K)$ is equal to the closed-loop interconnection of $\delta P$ and $\delta K$, i.e., $\mathcal{F}_\mathrm{l}(\delta P, \delta K)$, if the interconnection of $P$, $K$, is well-posed i.e., there exists a $\mathcal{C}_1$ function $\breve{h}$, such that $\genInput = h_\mr{k}(x_\mr{k},h_\mr{\pnote,y}(x_\mr{\pnote},\genInput))$ can be expressed as
		$\genInput = \breve{h}(x_\mr{\pnote},x_\mr{k})$.    
\end{thm}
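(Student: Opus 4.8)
The plan is to write both sides of the claimed identity as explicit state-space systems along a common closed-loop trajectory and then check, block by block via the chain rule, that they coincide; the conceptual content is simply that Jacobian linearization commutes with a well-posed feedback interconnection.

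\emph{Step 1 --- realize the primal closed loop.} Stacking the states as $\fullstate=\col(\genState,\conState)$ and imposing $u_\mr{k}=\genOutput$ and $\genInput=y_\mr{k}$, the only obstruction to an explicit ODE is the algebraic loop $\genInput=h_\mr{k}\!\left(\conState,\genHy(\genState,\genInput)\right)$. Well-posedness furnishes a $\mathcal{C}_1$ resolvent $\genInput=\breve h(\genState,\conState)$ (also depending on $w$ whenever $\genHy$ or the direct term $D_\mr{yw}w$ routes $w$ into $\genOutput$). Substituting $\breve h$ back into the plant and controller vector fields and into the $z$-equation turns $\mathcal{F}_\mathrm{l}(P,K)$ into $\dot{\fullstate}=F(\fullstate,w)$, $z=H(\fullstate,w)$ with $F,H\in\mathcal{C}_1$ (compositions of $\mathcal{C}_1$ maps). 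By the definition of the differential form, \eqref{eq:difform}, the differential form of $\mathcal{F}_\mathrm{l}(P,K)$ is the Jacobian-linearized system $\delta\dot{\fullstate}=\tfrac{\partial F}{\partial\fullstate}\delta\fullstate+\tfrac{\partial F}{\partial w}\delta w$, $\delta z=\tfrac{\partial H}{\partial\fullstate}\delta\fullstate+\tfrac{\partial H}{\partial w}\delta w$, evaluated along a closed-loop solution.

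\emph{Step 2 --- realize $\mathcal{F}_\mathrm{l}(\delta P,\delta K)$.} Write $\delta P$ and $\delta K$ from \eqref{eq:difform}, evaluating $\delta P$ along the plant components $(\genState,\genInput,w)$ and $\delta K$ along the controller components $(\conState,u_\mr{k})$ of the \emph{same} closed-loop trajectory --- exactly the parameterization appearing in Step 1 --- and impose $\delta u_\mr{k}=\delta\genOutput$, $\delta\genInput=\delta y_\mr{k}$. This again yields an algebraic loop, now linear in $\delta\genInput$, with resolvent matrix $\big(I-\tfrac{\partial h_\mr{k}}{\partial u_\mr{k}}\tfrac{\partial\genHy}{\partial\genInput}\big)^{-1}$. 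That matrix is precisely the $\genInput$-derivative of the map $\genInput\mapsto\genInput-h_\mr{k}\!\left(\conState,\genHy(\genState,\genInput)\right)$, so its invertibility is exactly the implicit-function-theorem condition behind the existence of $\breve h$ in Step 1; hence well-posedness of $\mathcal{F}_\mathrm{l}(P,K)$ automatically yields well-posedness of $\mathcal{F}_\mathrm{l}(\delta P,\delta K)$, and solving the differential loop gives $\delta\genInput=\tfrac{\partial\breve h}{\partial\genState}\delta\genState+\tfrac{\partial\breve h}{\partial\conState}\delta\conState$ ($+\tfrac{\partial\breve h}{\partial w}\delta w$), the $\breve h$-partials here being precisely those obtained by implicitly differentiating $\genInput=h_\mr{k}\!\left(\conState,\genHy(\genState,\genInput)\right)$.

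\emph{Step 3 --- match.} Substituting these expressions for $\delta\genInput$ into the $\delta P$ and $\delta K$ state and output equations and comparing with the Jacobians $\tfrac{\partial F}{\partial(\fullstate,w)}$, $\tfrac{\partial H}{\partial(\fullstate,w)}$ of the back-substituted closed loop, both collapse to the same blocks by the chain rule, using precisely the implicit-differentiation identities for $\breve h$ from Step 2. I expect this matching --- verifying that the $\delta\genStateDot$-, $\delta\conStateDot$- and $\delta z$-rows of the two realizations agree once the $\breve h$-partials are inserted --- to be the main, though essentially mechanical, obstacle; the rest is the observation that ``linearize then interconnect'' equals ``interconnect then linearize'' for well-posed loops. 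I would close by recording that the standing regularity assumptions ($\genF,\genHz,\genHy,f_\mr{k},h_\mr{k}\in\mathcal{C}_1$, so all Jacobians exist and are continuous, and the homotopy of closed-loop trajectories used to define the differential form of $\mathcal{F}_\mathrm{l}(P,K)$ exists because the $\mathcal{C}_1$ interconnection is forward complete) are exactly what makes every step above legitimate.
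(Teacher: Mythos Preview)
Your proposal is correct and follows essentially the same route as the paper's proof: write the primal closed loop explicitly using the resolvent $\breve h$, compute its Jacobian linearization, separately interconnect $\delta P$ and $\delta K$ and solve the resulting (linear) algebraic loop, then match both via the chain rule and the implicit-differentiation identities for $\breve h$. The paper's version simply omits $w$ and $z$ for brevity and writes out the intermediate formulas explicitly, whereas you keep them and describe the matching more abstractly, but the logical skeleton is identical.
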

\begin{proof} See Appendix \ref{App:B}. 
\end{proof}

%Next, we will take a look at the first step in the proposed synthesis procedure, i.e. the differential embedding step.

%%%%%%%%%%%%%%%%%%%%%%%%%%%%%%%%%%%%%%%%%%%%%%%%%%%%%
\subsection{Differential embedding}
In the first step of the synthesis procedure, the differential form of the generalized plant $P$ is computed, and the result is embedded in an LPV representation.

Computing the differential form of $P$, given in \eqref{eq:genplant}, results in
\begin{align}\label{eq:genplantincr}
\delta P:\!\left\lbrace\begin{aligned}
    \delta\genStateDot(t) &= \m{A}( x_\mr{\pnote}(t))\delta \genState(t) +{B}_\mr{w}\delta w(t) +{B}_\mr{u} \delta \genInput(t);\\
    \delta z(t) &= \m{C}_\mr{z}( x_\mr{\pnote}(t))\delta \genState(t) +{D}_\mr{zw}\delta w(t)+{D}_\mr{zu} \delta \genInput(t);\\
    \delta \genOutput(t) &= {C}_\mr{y}\delta \genState(t) +{D}_\mr{yw}\delta w(t);\end{aligned}\right. \raisetag{13pt}
\end{align}
where $\m{A}= \frac{\partial \genF}{\partial  x_\mr{\pnote}}$ and $\m{C}_\mr{z} = \frac{\partial \genHz}{\partial  x_\mr{\pnote}}$ with  $x_\mr{\pnote}\in \mathfrak{B}_{\mr{\pnote},\mr{x_p}}$,  %$( x_\mr{\pnote}, \genInput)\in \mathfrak{B}_{\mr{\pnote},\mr{x_p},\mr{u_p}}$,
 $\delta x_\mr{\pnote}\in \mathcal{C}_1^{n_{\mr{x}_\mr{\pnote}}}$ and $\delta x_\mr{\pnote}(t) \in \mathbb{R}^{n_{\mr{x}_\mr{\pnote}}}$ with $\delta \genState(0) = \delta \genStateInitial\in\mathbb{R}^{n_{\mr{x}_\mr{\pnote}}}$, $\delta \genInput(t)\in\mathbb{R}^{\genInputSize}$, $\delta w(t)\in\mathbb{R}^{n_\mr{w}}$, $\delta z(t)\in\mathbb{R}^{n_\mr{z}}$ and $\delta \genOutput(t)\in \mathbb{R}^{\genOutputSize}$. Along a $x_\mr{\pnote}\in \mathfrak{B}_{\mr{\pnote},\mr{x_p}}$ solution of \eqref{eq:genplant}, the set of solutions of \eqref{eq:genplantincr} is %defined as 
\begin{multline} %\label{eq:genplantsoldiff}
\mathfrak{B}_{\delta \mr{\pnote}}^{(x_\mr{\pnote})} := \Big\lbrace (\delta \genState,\delta \genInput,\delta w,\delta z, \delta\genOutput)\in (\mathbb{R}^{\genStateSize}\times \mathbb{R}^{\genInputSize}\\
\times\mathbb{R}^{n_\mr{w}} \times\mathbb{R}^{n_\mr{z}}\times\mathbb{R}^{\genOutputSize})^{\mathbb{R}} \mathrel{\big|} \delta x_\mr{\pnote}\in\mathcal{C}_1^{n_{\mr{x}_\mr{\pnote}}},\, (\delta \genState,\delta \genInput,\\
\delta w,\delta z,\delta \genOutput)\;\text{\small satisfy \eqref{eq:genplantincr} along } \genState\notag\Big\rbrace.
\end{multline}
Then $\mathfrak{B}_\mr{\delta p}\!=\!\!\!\!\!\!\bigcup\limits_{\genState\in \mathfrak{B}_{\mr{\pnote},\mr{\genState}}} \!\!\!\!\!\! \mathfrak{B}_{\delta \mr{\pnote}}^{(\genState)}$ gives the complete solution set of \eqref{eq:genplantincr}.

 Next, we embed \eqref{eq:genplantincr} in an LPV representation: 
\begin{defn}[Differential LPV embedding]
\label{def:lpvemb}
    Given an NL system with primal form \eqref{eq:genplant} and differential form  \eqref{eq:genplantincr}. The LPV state-space representation
\begin{align}\label{eq:genplantincrLPV}
\delta P_\mathrm{LPV}\!:\!\left \lbrace
    \begin{aligned}
    \delta\genStateDot(t) &= {A}(\sche(t))\delta \genState(t) + B_\mr{w}\delta w(t) +{B}_\mr{u}\delta \genInput(t);\\
    \delta z(t) &= {C}_\mr{z}(\sche(t)\!)\delta \genState(t) \!+\!D_\mr{zw}\delta w(t)\!+\! {D}_\mr{zu} \delta \genInput(t);\\
    \delta \genOutput(t) &= {C}_\mr{y} \delta \genState(t) +D_\mr{yw}\delta w(t);
\end{aligned}\right.\raisetag{13pt}
\end{align}
with $A$, $C_\mr{z}$ belonging to a given class of functions $\mathfrak{A}$ (e.g., affine functions) and $\sche(t) \in \mathcal{P}$ being the scheduling variable with a compact and convex  $\mathcal{P} \subset \mathbb{R}^{\schedim}$.  The LPV form \eqref{eq:genplantincrLPV} is called an embedding of \eqref{eq:genplantincr} on the compact region $\mathcal{X}_\mathrm{\pnote}\subset \genStateSet$, if 
there is a function $\psi: \mathbb{R}^{\genStateSize} \rightarrow \mathbb{R}^{\schedim}$ with $\psi \in \mathcal{C}_\mr{1}$ and $\psi(\mathcal{X}_\mathrm{\pnote}) \subseteq \mathcal{P}$, such that  
$A\circ\psi = \m{A}$ (i.e. $A(\psi(x_\mr{\pnote})) = \m{A}(x_\mr{\pnote})$) and $C_\mr{z}\circ\psi = \m{C}_\mr{z}$. 
For a given $\sche \in \m{P}^\mathbb{R_+}$, the set of solutions of \eqref{eq:genplantincrLPV} is given as
\begin{align}\label{eq:genplantsoldiff}
&\mathfrak{B}_\mr{LPV}^{(\sche)} := \Big\lbrace (\delta \genState,\delta \genInput,\delta w,\delta z,\delta \genOutput)\in (\mathbb{R}^{\genStateSize}\times \mathbb{R}^{\genInputSize}\times\mathbb{R}^{n_\mr{w}} \notag\\
&\hspace{3em}\times\mathbb{R}^{n_\mr{z}}\times\mathbb{R}^{\genOutputSize})^{\mathbb{R}_+} \mathrel{\big|} \delta \genState\in\mathcal{C}_1^{n_{\mr{x}_\mr{\pnote}}},\,(\delta \genState,\delta \genInput,\\
&\hspace{7.5em}\delta w,\delta z,\delta \genOutput)\,
\text{\small satisfy \eqref{eq:genplantincrLPV} along } \sche \notag\Big\rbrace.
\end{align}
Then, $\mathfrak{B}_\mr{LPV}:=\bigcup\limits_{\sche\in \m{P}^\mathbb{R}} \mathfrak{B}_\mr{LPV}^{(\sche)}$ gives the complete solution set of \eqref{eq:genplantincrLPV}. For $\mathfrak{B}_{\mr{\pnote},\mr{x_p}}$, define the restriction of the trajectories to $\m{X}_\mr{\pnote}$ as $\mathfrak{B}_{\mr{\pnote},\mr{x_p}}^\m{X}:= \mathfrak{B}_{\mr{\pnote},\mr{x_p}}\cap \m{X}_\mr{\pnote}^\mathbb{R}$. As $\mathfrak{B}^{(\psi(x_\mr{\pnote}))}_\mr{LPV} = \mathfrak{B}^{(x_\mr{\pnote})}_\mr{\delta p}$, we can state $\mathfrak{B}_\mr{\delta p}^\m{X}:=\bigcup\limits_{x_\mr{\pnote}\in \mathfrak{B}_{\mr{\pnote},\mr{x_p}}^\m{X}} \mathfrak{B}_{\delta \mr{\pnote}}^{(x_\mr{\pnote})} \subseteq \bigcup\limits_{\sche\in \m{P}^{\mathbb{R}_+}} \mathfrak{B}_\mr{LPV}^{(\sche)} = \mathfrak{B}_\mr{LPV}$.
\end{defn}
\begin{rem}
For tractable controller synthesis later in Section \ref{sec:diffsynth}, the function $\psi$ is must be chosen such that the resulting dependence of $A$ and $C_\mathrm{z}$ on $\sche$, i.e., the class $\mathfrak{A}$, is either affine, polynomial or rational and $n_\mr{\sche}$ is minimal. Furthermore, $\mathcal{P}$ needs to be chosen such that the LPV representation \eqref{eq:genplantincrLPV} is stabilizable from $\delta\genInput$ and detectable from $\delta\genOutput$ over $\mathcal{P}$. Moreover, $\mathcal{P}$ is also chosen such that it is the smallest convex set in a given complexity class ($n$-vertex polytope, hyper-ellipsoid, etc.) such that $\psi(\mathcal{X}_\mr{\pnote}) \subseteq \mathcal{P}$, in order to minimize the conservativeness of the LPV representation in describing the differential form. See \cite{Kwiatkowski2008,Sadeghzadeh2020,Hoffmann2016} for approaches to fulfill these properties.
\end{rem}

Using the LPV embedding principle for $\delta P$ on the region $\mathcal{X}_\mr{\pnote}$ results in the LPV form of $\delta P$ as in \eqref{eq:genplantincrLPV} where $\sche \in \mathcal{P}^\mathbb{R}$ is assumed to be measurable. Note that in terms of Definition \ref{def:lpvemb}, there exist a function $\psi$ such that $\sche(t) =\psi(x_\mr{\pnote}(t))$ under $x_\mr{\pnote}(t) \in \mathcal{X}_\mr{\pnote} $, with $\psi(\mathcal{X}_\mr{\pnote})\subseteq\mathcal{P}$.

%%%%%%%%%%%%%%%%%%%%%%%%%%%%%%%%%%%%%%%%%%%%%%%%%%%%%
\subsection{Differential synthesis}\label{sec:diffsynth}
As aforementioned, we want to synthesize a controller $K$ in order to minimize the \litwo-gain of $\mathcal{F}_\mathrm{l}(P,K)$. This is is done by first synthesizing a differential controller $\delta K$ such that the \ltwo-gain of $\mathcal{F}_\mathrm{l}(\delta P,\delta K)$ is minimized. Then, later in Section \ref{sec:contrrealiz}, a primal form $K$ of the controller $\delta K$ is realized that preserves the achieved closed-loop properties of $\mathcal{F}_\mathrm{l}(\delta P,\delta K)$. In order to perform controller synthesis for the differential form $\delta P$ %\eqref{eq:genplantincr}, 
the LPV framework is used. More concretely, we synthesize a controller for the LPV embedding of the differential form $\delta P_\mr{LPV}$, given in \eqref{eq:genplantincrLPV}, which was constructed in the differential embedding step in the previous subsection. To achieve this, we can apply our standard \ltwo-gain LPV synthesis techniques on \eqref{eq:genplantincrLPV} such as polytopic or LFT-based LPV synthesis methods (see \cite{Hoffmann2015} for an overview), to synthesize a controller $\delta K$ and ensure \ltwo-gain stability of the closed loop interconnection $\mathcal{F}_\mathrm{l}(\delta P_\mr{LPV},\delta K)$, for all $\sche \in \mathcal{P}^{\mathbb{R}_+}$.
This synthesized controller is assumed to be of the following form
\begin{equation}\label{eq:incrContr}
\delta K: \left \lbrace
    \begin{alignedat}{2}
    \delta \dot{x}_\mathrm{k}(t) &= A_\mathrm{k}(\sche(t)) \delta x_\mathrm{k}(t) &&+ B_\mathrm{k}(\sche(t)) \delta u_\mathrm{k}(t);\\
    \delta y_\mathrm{k}(t) &= C_\mathrm{k}(\sche(t)) \delta x_\mathrm{k}(t) &&+ D_\mathrm{k}(\sche(t)) \delta u_\mathrm{k}(t);
    \end{alignedat}\right. 
\end{equation}
which we will refer to as the differential controller, where $\delta x_\mathrm{k}(t) \in \mathbb{R}^{n_\mathrm{x_k}}$ is the state, $\delta u_\mathrm{k}(t) \in \mathbb{R}^{n_{\mathrm{u_k}}}$ is the input, and $\delta y_\mathrm{k}(t) \in \mathbb{R}^{n_{\mathrm{y_k}}}$ is the output of the controller, respectively and $A_\mathrm{k},\ldots,D_\mathrm{k}\in\mathfrak{A}$ are matrix functions with appropriate dimensions. 

%\TR{Should we include here a invariance check? }

\begin{thm}[Differential closed-loop \ltwo-gain]\label{thrm:diffICL2}
	If control-ler $\delta K$ of the form \eqref{eq:incrContr} ensures bounded \ltwo-gain $\gamma$ of the closed-loop interconnection $\mathcal{F}_\mr{l}(\delta P_\mr{LPV},\delta K)$ for all $\sche\in\mathcal{P}^\mathbb{R}$, then %the closed-loop interconnection 
	$\mathcal{F}_\mr{l}(\delta P,\delta K)$ with $p=\psi(x_\mr{\pnote})$ is also  \ltwo-gain stable with an \ltwo-gain %less than or equal to 
	$\leq\gamma$ for all $x_\mr{\pnote}\in\mathfrak{B}_{\mr{\pnote},\mr{x_p}}^\m{X}$.
\end{thm}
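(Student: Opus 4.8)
The plan is to exploit the fact, established in Definition~\ref{def:lpvemb}, that the LPV embedding is a genuine \emph{overbounding} of the differential form: every solution of $\delta P$ (restricted to the region $\m{X}_\mr{\pnote}$) is also a solution of $\delta P_\mr{LPV}$ for the particular scheduling signal $\sche(t)=\psi(x_\mr{\pnote}(t))$. Concretely, the excerpt already records the key inclusion $\mathfrak{B}^{(\psi(x_\mr{\pnote}))}_\mr{LPV} = \mathfrak{B}^{(x_\mr{\pnote})}_\mr{\delta p}$ and hence $\mathfrak{B}_\mr{\delta p}^\m{X} \subseteq \mathfrak{B}_\mr{LPV}$. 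So the proof is essentially: \emph{``fewer trajectories can only make the supremum-type gain smaller.''} First I would invoke Theorem~\ref{thrm:incric} (closed-loop differential form) — or rather its natural analogue for the embedded plant — to note that the closed loop $\m{F}_\mr{l}(\delta P_\mr{LPV},\delta K)$ and the closed loop $\m{F}_\mr{l}(\delta P,\delta K)$ are built from the same controller $\delta K$ of the form \eqref{eq:incrContr}, the controller also being scheduled by $\sche$, and the interconnection being the static well-posed one, so that the closed-loop solution sets satisfy an inclusion inherited from the open-loop one: every closed-loop trajectory of $\m{F}_\mr{l}(\delta P,\delta K)$ along $x_\mr{\pnote}\in\mathfrak{B}_{\mr{\pnote},\mr{x_p}}^\m{X}$, with $\sche=\psi(x_\mr{\pnote})$, is a closed-loop trajectory of $\m{F}_\mr{l}(\delta P_\mr{LPV},\delta K)$ along that same $\sche\in\m{P}^{\mathbb{R}_+}$.

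Next I would translate the \ltwo-gain hypothesis. By hypothesis, $\delta K$ renders $\m{F}_\mr{l}(\delta P_\mr{LPV},\delta K)$ \ltwo-gain stable with gain $\leq\gamma$ \emph{for all} $\sche\in\m{P}^\mathbb{R}$; unfolding Definition~\ref{def:l2gain} (its LPV counterpart), this means there is a $\gamma$ and a $\zeta$ with $\zeta(0)=0$ such that $\norm{\delta z}_2 \le \gamma\norm{\delta w}_2 + \zeta(\delta x_0)$ for every $(\delta x,\delta w,\delta z)$ in the closed-loop solution set of $\m{F}_\mr{l}(\delta P_\mr{LPV},\delta K)$, where $\delta x=\col(\delta x_\mr{\pnote},\delta x_\mr{k})$, along \emph{any} admissible scheduling signal. (In the dissipativity picture one would instead say: there is a parameter-dependent storage function $V(\sche,\delta x)=\delta x^\top X(\sche)\delta x\succ0$ satisfying the differential dissipation inequality with supply $\gamma^2\norm{\delta w}^2-\norm{\delta z}^2$ for all $\sche\in\m{P}$, $\dot\sche$ in the allowed set — but the integrated I/O form suffices here.) Because the region of validity is \emph{all} of $\m{P}^\mathbb{R}$ and $\psi(\m{X}_\mr{\pnote})\subseteq\m{P}$, the signal $\sche(t)=\psi(x_\mr{\pnote}(t))$ with $x_\mr{\pnote}\in\mathfrak{B}_{\mr{\pnote},\mr{x_p}}^\m{X}$ is one of these admissible scheduling trajectories.

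Finally I would close the argument by restriction: take any $x_\mr{\pnote}\in\mathfrak{B}_{\mr{\pnote},\mr{x_p}}^\m{X}$, set $\sche=\psi(x_\mr{\pnote})$, and let $(\delta x,\delta w,\delta z)$ be an arbitrary trajectory of $\m{F}_\mr{l}(\delta P,\delta K)$ along $x_\mr{\pnote}$. By the inclusion from the first paragraph, this triple lies in the closed-loop solution set of $\m{F}_\mr{l}(\delta P_\mr{LPV},\delta K)$ along $\sche$, so the bound $\norm{\delta z}_2\le\gamma\norm{\delta w}_2+\zeta(\delta x_0)$ applies verbatim. Since $x_\mr{\pnote}$, the trajectory, $\delta w\in\mathscr{L}_2^{n_\mr{w}}$ and the initial condition were arbitrary, $\m{F}_\mr{l}(\delta P,\delta K)$ is \ltwo-gain stable with gain $\leq\gamma$ for all $x_\mr{\pnote}\in\mathfrak{B}_{\mr{\pnote},\mr{x_p}}^\m{X}$, which is exactly the claim. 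The only genuinely delicate point — and the one I would be most careful to spell out — is the well-posedness bookkeeping for the closed loop: one must check that interconnecting $\delta P$ (rather than $\delta P_\mr{LPV}$) with the same $\delta K$ does not change which signals are feasible, i.e. that the closed-loop differential form and the closed-loop of the embedded plant share the same algebraic loop $\delta\genInput = \breve{h}$-type relation (guaranteed since $C_\mr{y}$, $D_\mr{yw}$, $B_\mr{u}$, $D_\mr{zu}$ are constant and $\delta K$ is common to both), so that the open-loop trajectory inclusion $\mathfrak{B}_\mr{\delta p}^\m{X}\subseteq\mathfrak{B}_\mr{LPV}$ indeed lifts to a closed-loop trajectory inclusion. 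Everything else is a one-line monotonicity/restriction step.
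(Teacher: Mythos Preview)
Your proposal is correct and follows essentially the same route as the paper: the paper's proof (Appendix~\ref{App:CB}) simply invokes the behavior inclusion $\mathfrak{B}_\mr{\delta p}^\m{X}\subseteq\mathfrak{B}_\mr{LPV}$ from Definition~\ref{def:lpvemb} and concludes the gain bound by restriction. If anything, you are more careful than the paper in spelling out why the open-loop inclusion lifts to the closed loop (shared constant $C_\mr{y},D_\mr{yw},B_\mr{u},D_\mr{zu}$ and common $\delta K$), a point the paper leaves implicit.
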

\begin{proof}
See Appendix \ref{App:CB}
\end{proof}
%
%\todo{Add remark/note here that we assume there exists a quadratic storage function (not parameter-varying), which is required for realization? (We only say this in Theorem 7).}
\begin{asum}\label{asum:storassum} %[Changed to assumption]}
	We assume that the controller synthesis has been solved such that $\mathcal{F}_\mr{l}(\delta P,\delta K)$ is dissipative with a quadratic (differential) storage function of the form $V_\delta(\delta \fullstate,\fullstate) = \delta \fullstate^\top M \delta \fullstate$, where $M\succ 0$, i.e., a quadratic $V_\delta$ which is independent of $\fullstate$. This is required for the proposed controller realization procedure in Section \ref{sec:contrrealiz}.
\end{asum}
\begin{rem}\label{rem:weightfilt}
By applying shaping filters on $P$ that consequently appear in $\delta P$, we can shape the closed-loop performance of $\mathcal{F}_\mr{l}( P, K)$, see Fig.~\ref{fig:genplant} and Fig.~\ref{fig:shapefigprim}. If the weighting filters included in $P$ are LTI, then as depicted in Fig.~\ref{fig:shapefig}, the input-output behavior of $W_\mr{w}$ and $W_\mr{z}$ is equivalent to that of $\delta W_\mr{w}$ and $\delta W_\mr{z}$, as the dynamics of the differential form of an LTI system are equivalent to the dynamics of its primal form. This results in a one to one correspondence between the performance shaping of the primal form $\mathcal{F}_\mr{l}( P, K)$ (see Fig.~\ref{fig:shapefigprim}) and performance shaping of the differential form $\mathcal{F}_\mr{l}( \delta P, \delta K)$ (see Fig.~\ref{fig:shapefigdiff}). This significantly simplifies the controller design, as shaping can be directly performed through the differential form $\delta P$ and hence also through its LPV embedding $\delta P_\mr{LPV}$.
\end{rem}

\begin{figure}
	\centering
	\begin{subfigure}[b]{\columnwidth}
		\centering
		\includegraphics[scale=.9]{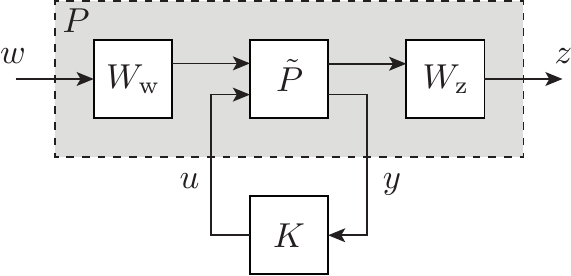}
		\caption{Primal form.}
		\label{fig:shapefigprim}
	\end{subfigure}
	\\[1em]
	\begin{subfigure}[b]{\columnwidth}
		\centering
		\includegraphics[scale=.9]{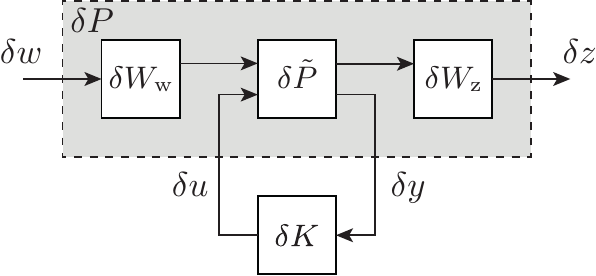}
		\caption{Differential form.}
		\label{fig:shapefigdiff}
	\end{subfigure}
	\caption{Shaping the closed-loop behavior of the primal and the differential form by the use of weighting filters $W_\mr{w}$ and $W_\mr{z}$.}
	\label{fig:shapefig} \vspace{-4mm}
\end{figure}

%%%%%%%%%%%%%%%%%%%%%%%%%%%%%%%%%%%%%%%%%%%%%%%%%%%%%
\subsection{Controller realization}\label{sec:contrrealiz}
We will now describe how to realize the primal form $K$ of the controller for the NL system such that the differential form of $K$ is given by $\delta K$ in \eqref{eq:incrContr} and incremental dissipativity of the closed-loop is ensured. Similar to the approach in \cite{Manchester2018}, we take a path integral based realization, whereby we integrate over the variation, $\lambda$ in Fig. \ref{fig:traj}, in order to converge from the current trajectory towards a known desired (feasible) steady-state trajectory. Namely, to guarantee incremental stability and performance, we consider $\stTraj \triangleq (\st{x_\mr{\pnote}},\st{\genInput},\st{w},\st{z},\st{\genOutput})\in\mathfrak{B}_\mr{\pnote}$ of $P$ to be a \emph{known} trajectory, towards which we want to converge. Let us denote by $\delta\fullstate(t)\in\mathbb{R}^{n_\mathrm{\fullstate}}$ the state associated with $\mathcal{F}_\mathrm{l}(\delta P,\delta K)$, analogous to the state $\fullstate$ of the primal form of the closed-loop interconnection $\mathcal{F}_\mathrm{l}(P,K)$.  %\todo{[Add clarifying sentence to avoid confusion, like with previous reviewer?]}
\begin{thm}[Controller realization]\label{thm:contrrealize}
Given a differential controller $\delta K$ in the form of \eqref{eq:incrContr} that ensures closed-loop \ltwo-gain stability of $\mathcal{F}_\mr{l}(\delta P,\delta K)$ under Assumption \ref{asum:storassum}. Let $(\st{x_\mr{\pnote}},\st{\genInput},\st{\genOutput})=\pi_\mr{x_\mr{\pnote},\genInput,\genOutput}\stTraj\in\pi_\mr{x_\mr{\pnote},\genInput,\genOutput}\mathfrak{B}_\mr{\pnote}$ be the (desired) steady-state trajectory of $P$ and consider the nonlinear controller, omitting dependence on time for brevity, 
\begin{equation}\label{eq:Contr}
 K:\! \left \lbrace
    \begin{aligned}
     \Delta\dot{x}_\mathrm{k} &= \m{A}_\mathrm{k}(x_\mr{\pnote}, \st{x_\mr{\pnote}}) \Delta x_\mathrm{k} + \m{B}_\mathrm{k}(x_\mr{\pnote}, \st{x_\mr{\pnote}})  \Delta u_\mathrm{k};\\
     y_\mathrm{k} &= \st{y_\mr{k}}+\m{C}_\mathrm{k}(x_\mr{\pnote}, \st{x_\mr{\pnote}}) \Delta x_\mathrm{k} + \m{D}_\mathrm{k}(x_\mr{\pnote}, \st{x_\mr{\pnote}}) \Delta u_\mathrm{k};
    \end{aligned}\right. 
\end{equation}
with $(\st{y_\mr{k}},\st{u_\mr{k}})\!=\!(\st{\genInput},\st{\genOutput})$, $\Delta x_\mr{k}(t)\in\mathbb{R}^{n_\mathrm{x}}$, $\Delta u_\mr{k}: =u_\mr{k}-\st{u_\mr{k}}$, and
\begin{equation}
\begin{aligned}\label{eq:incrMat}
	\m{A}_\mr{k}(x_\mr{\pnote}, \st{x_\mr{\pnote}})&= \int_0^1\!\!A_\mr{k}\Big(\psi\big(\st{x_\mr{\pnote}}+\lambda(x_\mr{\pnote}-\st{x_\mr{\pnote}})\big)\Big)\,d\lambda,\\ 
	\m{B}_\mr{k}(x_\mr{\pnote}, \st{x_\mr{\pnote}})&= \int_0^1\!\!B_\mr{k}\Big(\psi\big(\st{x_\mr{\pnote}}+\lambda(x_\mr{\pnote}-\st{x_\mr{\pnote}})\big)\Big)\,d\lambda,\\
	\m{C}_\mr{k}(x_\mr{\pnote}, \st{x_\mr{\pnote}})&=\int_0^1\!\!C_\mr{k}\Big(\psi\big(\st{x_\mr{\pnote}}+\lambda(x_\mr{\pnote}-\st{x_\mr{\pnote}})\big)\Big)\,d\lambda,\\
	\m{D}_\mr{k}(x_\mr{\pnote}, \st{x_\mr{\pnote}})&= \int_0^1\!\!D_\mr{k}\Big(\psi\big(\st{x_\mr{\pnote}}+\lambda(x_\mr{\pnote}-\st{x_\mr{\pnote}})\big)\Big)\,d\lambda.
\end{aligned}
\end{equation}
The controller $K$ in \eqref{eq:Contr} is the primal form of $\delta K$ \eqref{eq:incrContr} and the differential form of $K$ is $\delta K$. Hence, $K$ is called the primal realization of $\delta K$.
\end{thm}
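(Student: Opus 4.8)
The plan is to split the statement into three claims and establish them in order: (a) \eqref{eq:Contr} is a well-posed nonlinear controller of the form \eqref{eq:generalcontroller}; (b) $K$ is compatible with the prescribed steady-state trajectory, i.e.\ if $\mathcal{F}_\mr{l}(P,K)$ starts on $\stTraj$ it stays on $\stTraj$; (c) the differential form of $K$ equals $\delta K$ of \eqref{eq:incrContr} with $\sche=\psi(\genState)$. Claims (a) and (b) are quick; the work is in (c).

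For (a), I would note that the matrix functions in \eqref{eq:incrMat} are well defined: taking $\mathcal{X}_\mr{\pnote}$ convex, the segment $\st{\genState}+\lambda(\genState-\st{\genState})$ stays in $\mathcal{X}_\mr{\pnote}$ for $\lambda\in[0,1]$, so $\psi$ maps it into $\mathcal{P}$ (Definition \ref{def:lpvemb}), where $A_\mr{k},\dots,D_\mr{k}\in\mathfrak{A}$ are defined and $\mathcal{C}_1$; differentiating under the integral then gives $\m{A}_\mr{k},\dots,\m{D}_\mr{k}\in\mathcal{C}_1$ in $(\genState,\st{\genState})$, and since $\genState$ (hence $\psi(\genState)$) is measurable and $\st{\genState}$ is a known signal, \eqref{eq:Contr} is a genuine system of the form \eqref{eq:generalcontroller} with state $\Delta x_\mr{k}$ and input $u_\mr{k}$. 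For (b): on $\stTraj$ one has $\genState=\st{\genState}$, which makes every integrand in \eqref{eq:incrMat} constant in $\lambda$, so $\m{A}_\mr{k}(\st{\genState},\st{\genState})=A_\mr{k}(\psi(\st{\genState}))$ and likewise for $\m{B}_\mr{k},\m{C}_\mr{k},\m{D}_\mr{k}$; furthermore $u_\mr{k}=\st{\genOutput}=\st{u_\mr{k}}$ gives $\Delta u_\mr{k}\equiv0$, hence $\Delta\dot x_\mr{k}=A_\mr{k}(\psi(\st{\genState}))\Delta x_\mr{k}$ with $\Delta x_\mr{k}(0)=0$ yields $\Delta x_\mr{k}\equiv0$ and $y_\mr{k}\equiv\st{y_\mr{k}}=\st{\genInput}$, so $\stTraj$ lifted with zero controller state is a trajectory of $\mathcal{F}_\mr{l}(P,K)$.

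The heart of the proof is (c). I would compute the variation of the relations \eqref{eq:Contr} along the homotopy of Fig.~\ref{fig:traj} that links the running closed-loop trajectory to the lifted steady-state trajectory $\stTraj$. Under Assumption \ref{asum:storassum} the metric $M$ is constant, so at each instant the geodesic from $\st{\genState}(t)$ to $\genState(t)$ is the straight segment $\bar x_\mr{\pnote}(t,\lambda)=\st{\genState}(t)+\lambda(\genState(t)-\st{\genState}(t))$ — exactly the path used in \eqref{eq:incrMat} — and since the measurement map in \eqref{eq:genplant} is affine, the variations $\delta\genState,\delta u_\mr{k},\delta w$ along this homotopy are $\lambda$-independent and equal $\genState-\st{\genState}$, $\Delta u_\mr{k}$, $w-\st w$. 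The key algebraic fact comes from \eqref{eq:incrMat} via the substitution $\sigma=\lambda\mu$: $\lambda\,\m{A}_\mr{k}(\bar x_\mr{\pnote}(t,\lambda),\st{\genState}(t))=\int_0^\lambda A_\mr{k}(\psi(\bar x_\mr{\pnote}(t,\sigma)))\,d\sigma$, whence $\partial_\lambda[\lambda\,\m{A}_\mr{k}(\bar x_\mr{\pnote}(t,\lambda),\st{\genState}(t))]=A_\mr{k}(\psi(\bar x_\mr{\pnote}(t,\lambda)))$, and analogously for $\m{B}_\mr{k},\m{C}_\mr{k},\m{D}_\mr{k}$. Writing the state- and output-displacement relations of \eqref{eq:Contr} along the homotopy and differentiating in $\lambda$, the product rule combined with these identities telescopes the nested integrals and leaves precisely \eqref{eq:incrContr} with scheduling $\sche(\cdot)=\psi(\bar x_\mr{\pnote}(\cdot,\lambda))$; evaluating at $\lambda=1$, where $\bar x_\mr{\pnote}(\cdot,1)=\genState$, identifies $\delta K$ with $\sche=\psi(\genState)$ as the differential form of $K$. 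A cleaner way to see why no spurious terms survive is to linearize \eqref{eq:Contr} directly along $\stTraj$: the Jacobian-of-coefficient contributions $\partial_\genState\m{A}_\mr{k}$, $\partial_\genState\m{B}_\mr{k}$ are multiplied by $\Delta x_\mr{k}=0$ and $\Delta u_\mr{k}=0$ and vanish, while $\m{A}_\mr{k}(\st{\genState},\st{\genState})=A_\mr{k}(\psi(\st{\genState}))$, so the linearization is exactly $\delta K$. Finally, the interconnection of $P$ in \eqref{eq:genplant} with $K$ in \eqref{eq:Contr} is well posed in the sense of Theorem \ref{thrm:incric} — $h_\mr{\pnote,y}$ does not depend on $\genInput$ and $y_\mr{k}$ is an explicit $\mathcal{C}_1$ function of $(\genState,\Delta x_\mr{k})$ and the exogenous signals — so Theorem \ref{thrm:incric} gives that the differential form of $\mathcal{F}_\mr{l}(P,K)$ is $\mathcal{F}_\mr{l}(\delta P,\delta K)$, which is what the remaining theorems need.

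The step I expect to be the genuine obstacle is (c), and specifically pinning down \emph{which} notion of differential form is intended here: the variation taken along the geodesic homotopy toward $\stTraj$ (equivalently, the linearization evaluated on $\stTraj$), not the raw Jacobian of \eqref{eq:Contr} along an arbitrary trajectory, which would retain the spurious $\partial_\genState\m{A}_\mr{k},\partial_\genState\m{B}_\mr{k}$ terms. One then has to justify that the straight segment is admissible (convexity of $\mathcal{X}_\mr{\pnote}$) and is indeed the geodesic (constancy of $M$, Assumption \ref{asum:storassum}), and carry out the telescoping bookkeeping — matching each path integral in \eqref{eq:incrMat} to the corresponding coefficient of \eqref{eq:incrContr} — while checking that the controller-state displacement taken along the homotopy is the consistent one.
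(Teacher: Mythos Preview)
Your approach is essentially the paper's: use the straight-line homotopy (justified by the constant metric of Assumption~\ref{asum:storassum}), observe that $\delta x_\mr{k}$ and $\delta u_\mr{k}$ are $\lambda$-independent along it, write the parameterized controller relations with upper integration limit $\lambda$, differentiate in $\lambda$, and evaluate at $\lambda=1$; your substitution $\sigma=\lambda\mu$ is exactly the identity the paper uses implicitly between its equations~\eqref{eq:xkyk} and~\eqref{eq:controlparameterized}. One small caution: your parenthetical ``equivalently, the linearization evaluated on $\stTraj$'' is not quite the same statement---linearizing on $\stTraj$ yields $\delta K$ with scheduling $\psi(\st{\genState})$, whereas the homotopy argument evaluated at $\lambda=1$ (which is what both you and the paper actually carry out) gives scheduling $\psi(\genState)$, and it is the latter that the theorem asserts.
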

\begin{proof} See Appendix \ref{App:C}. 
	\end{proof}
We will refer to the proposed controller $K$ as the \emph{incremental controller}. Even if $\delta K$ is an LPV controller,  realization by Theorem \ref{thm:contrrealize} results in an NL controller through the resubstitution and integration over the scheduling map $\psi$.  %\todo{Furthermore, $K$ is time invariant as $\m{A}_\mr{k},\dots,\m{D}_\mr{k}$ is expressed to be dependent on $t$ only for notational convenience, while it is clear that these matrix functions depend solely on $x_\mr{\pnote}(t)$ and $ \st{x_\mr{\pnote}}(t)$.}
%  \todo{[why ``incremental NL controller'' instead of ``incremental LPV controller''? If you want to remove ``LPV'' I would be more happy with ``incremental controller'' (I will also need to change this everywhere else in the text, highlighted in red)]}

Note that the resulting controller consists of both a direct feedforward action $\st{y_\mr{k}}=\st{\genInput}$, corresponding to the desired steady-state trajectory $\stTraj$, and a feedback action on the measured deviation from the desired steady-state output $\st{\genOutput}$. Hence, for implementation, we require knowledge of $\stTraj$ and hence $\st{w}$, either through direct measurement or estimation. This will be discussed in more detail in Section \ref{sec:distobsv}. Analytical solution of the integrals for $\m{A}_\mr{k},\dots,\m{D}_\mr{k}$ in \eqref{eq:incrMat} might be difficult in some cases, however they can reliably calculated (online) through numerical computation \cite{Atkinson1989}. Moreover, if the scheduling-dependency of $\delta K$ is affine, \eqref{eq:incrMat} can be further simplified.
\begin{cor}\label{cor:affineint}
	Assume that $A_\mr{k},\, \dots,\, D_\mr{k}$ in \eqref{eq:incrContr}, characterizing $\delta K$, are affine in $\sche$, i.e., 
		$A_\mr{k}(\sche) = A_{\mr{k},0}+\sum_{i=1}^{n_\mr{\sche}} A_{\mr{k},i}\,\sche_i.$
	Then, the matrix functions in \eqref{eq:Contr} are given as
	\begin{equation}
	\begin{aligned}
		\m{A}_\mr{k}(x_\mr{\pnote}, \st{x_\mr{\pnote}}) &= \int_0^1 A_\mathrm{k}\Big(\psi\big(\st{x_\mr{\pnote}}+\lambda(x_\mr{\pnote}-\st{x_\mr{\pnote}})\big)\Big)\,d\lambda,\\
		&=A_{\mr{k},0}+\sum_{i=1}^{n_\mr{\sche}} A_{\mr{k},i}\,\intsche_i=A_\mr{k}(\intsche),
	\end{aligned}
	\end{equation}
	where $\intsche = \int_0^1\psi(\st{x_\mr{\pnote}}+\lambda(x_\mr{\pnote}-\st{x_\mr{\pnote}}))\,d\lambda$, and similarly $\m{B}_\mr{k}(x_\mr{\pnote}, \st{x_\mr{\pnote}}) = B_\mr{k}(\intsche)$, $\m{C}_\mr{k}(x_\mr{\pnote}, \st{x_\mr{\pnote}}) = C_\mr{k}(\intsche)$ and $\m{D}_\mr{k}(x_\mr{\pnote}, \st{x_\mr{\pnote}}) = D_\mr{k}(\intsche)$.
\end{cor}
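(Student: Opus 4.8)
The plan is to reduce Corollary~\ref{cor:affineint} to a direct application of Theorem~\ref{thm:contrrealize} combined with the linearity of the integral operator. Theorem~\ref{thm:contrrealize} already tells us that the primal realization $K$ of $\delta K$ is given by \eqref{eq:Contr} with the matrix functions $\m{A}_\mr{k},\dots,\m{D}_\mr{k}$ defined via the path integrals \eqref{eq:incrMat}. So nothing new has to be proved about the structure of $K$ itself; the only task is to evaluate these integrals under the extra hypothesis that $A_\mr{k},\dots,D_\mr{k}$ are affine in $\sche$.

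First I would substitute the affine parametrization $A_\mr{k}(\sche) = A_{\mr{k},0} + \sum_{i=1}^{n_\mr{\sche}} A_{\mr{k},i}\,\sche_i$ into the definition of $\m{A}_\mr{k}(x_\mr{\pnote},\st{x_\mr{\pnote}})$ from \eqref{eq:incrMat}, writing out the integrand as
\[
A_\mr{k}\!\Big(\psi\big(\st{x_\mr{\pnote}}+\lambda(x_\mr{\pnote}-\st{x_\mr{\pnote}})\big)\Big) = A_{\mr{k},0} + \sum_{i=1}^{n_\mr{\sche}} A_{\mr{k},i}\,\psi_i\big(\st{x_\mr{\pnote}}+\lambda(x_\mr{\pnote}-\st{x_\mr{\pnote}})\big),
\]
where $\psi_i$ denotes the $i$-th component of $\psi$. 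Then I would invoke linearity of the integral with respect to $\lambda$ over $[0,1]$: the constant term $A_{\mr{k},0}$ integrates to itself, and each matrix coefficient $A_{\mr{k},i}$ can be pulled outside the integral since it does not depend on $\lambda$. This yields
\[
\m{A}_\mr{k}(x_\mr{\pnote},\st{x_\mr{\pnote}}) = A_{\mr{k},0} + \sum_{i=1}^{n_\mr{\sche}} A_{\mr{k},i}\,\underbrace{\int_0^1 \psi_i\big(\st{x_\mr{\pnote}}+\lambda(x_\mr{\pnote}-\st{x_\mr{\pnote}})\big)\,d\lambda}_{=\,\intsche_i},
\]
which is exactly $A_\mr{k}(\intsche)$ with $\intsche = \int_0^1 \psi(\st{x_\mr{\pnote}}+\lambda(x_\mr{\pnote}-\st{x_\mr{\pnote}}))\,d\lambda$ interpreted componentwise. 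The identical argument applies verbatim to $\m{B}_\mr{k}$, $\m{C}_\mr{k}$ and $\m{D}_\mr{k}$, giving $\m{B}_\mr{k}(x_\mr{\pnote},\st{x_\mr{\pnote}}) = B_\mr{k}(\intsche)$, $\m{C}_\mr{k}(x_\mr{\pnote},\st{x_\mr{\pnote}}) = C_\mr{k}(\intsche)$ and $\m{D}_\mr{k}(x_\mr{\pnote},\st{x_\mr{\pnote}}) = D_\mr{k}(\intsche)$.

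There is essentially no hard part here: the only technical point worth checking is that the integral $\intsche$ is well-defined, which follows because $\psi \in \mathcal{C}_1$ (hence continuous) and the segment $\lambda \mapsto \st{x_\mr{\pnote}}+\lambda(x_\mr{\pnote}-\st{x_\mr{\pnote}})$ is a compact path, so each $\psi_i$ is bounded and Riemann-integrable along it. One might also remark that if $\st{x_\mr{\pnote}}$ and $x_\mr{\pnote}$ both lie in the convexity region $\mathcal{X}_\mr{\pnote}$ on which the embedding is valid, then by convexity the whole segment stays in $\mathcal{X}_\mr{\pnote}$, so $\psi$ evaluated along it lands in $\mathcal{P}$ and $\intsche \in \mathcal{P}$ as well (using convexity of $\mathcal{P}$), meaning the affine expressions $A_\mr{k}(\intsche),\dots,D_\mr{k}(\intsche)$ are being evaluated at an admissible scheduling value. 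I would state the proof in three or four lines, essentially "substitute, use linearity of $\int_0^1(\cdot)\,d\lambda$, collect terms," and note that the remaining three identities follow by the same computation.
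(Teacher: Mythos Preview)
Your proposal is correct and matches the paper's treatment: the paper states Corollary~\ref{cor:affineint} without proof, since it follows immediately from substituting the affine form of $A_\mr{k},\dots,D_\mr{k}$ into \eqref{eq:incrMat} and using linearity of the integral, exactly as you outline. Your additional remarks on well-definedness of $\intsche$ and $\intsche\in\mathcal{P}$ are valid observations but go slightly beyond what the paper deems necessary to spell out.
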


%\TR{Motivate realizability, practical applicability...}

%%%%%%%%%%%%%%%%%%%%%%%%%%%%%%%%%%%%%%%%%%%%%%%%%%%%%
\subsection{Closed-loop incremental stability and performance}\label{sec:incrl2gain}
%\TR{There is no bridging text here, what you want to show here...}
Next, we will show that the proposed controller $K$ ensures closed-loop \litwo-gain stability of $\mathcal{F}_\mr{l}( P, K)$.
%Let us denote by 
%$\pi_{\mr{x}_\mr{\pnote}}$ the projection $x_\mr{\pnote}=\pi_{\mr{x}_\mr{\pnote}}\fullstate$ and by 
%$\phi_\mrfullstate$ the state transition map $\fullstate(t) = \phi_\mrfullstate(t,0,\fullstate_0,w)$ of the closed loop. 
%Moreover, denote by $\phi_\mr{x_p}$ the state transition map $\pi_\mr{x_p}\phi_\fullstate$ such that $x_\mr{\pnote}(t)=\phi_\mr{x_p}(t,0,\fullstate_0,w)$.
\begin{thm}[Closed-loop \litwo-gain stability] \label{thrm:cl-li2}
    Let $\delta K$, given in \eqref{eq:incrContr}, be an LPV controller, synthesized for $\delta P$ in \eqref{eq:genplantincr} of an NL system \eqref{eq:genplant}, which ensures \ltwo-gain stability of the closed-loop $\mathcal{F}_\mathrm{l}(\delta P, \delta K)$ with a bounded \ltwo-gain of $\gamma$ on $\mathcal{X}_\mr{\pnote}$ under a (differential) storage function of the form $V(\fullstate,\delta\fullstate)=\delta\fullstate^\top M\delta\fullstate$ with $M\succ 0$. 
    Consider the set $\mathcal{W}\subseteq \mathbb{W}$, such that there is an open and bounded $\mathcal{X}_\mr{k} \subseteq \mathbb{R}^{n_\mr{x_k}}$ for which $\mathcal{X}=\mathcal{X}_\mr{\pnote} \times \mathcal{X}_\mr{k} $ is invariant, in the sense of Definition \ref{def:invariance}.
%    \begin{multline}
%\fullstate(t)=\phi_\mathrm{x}(t,0,\fullstate_0,w)\in \label{eq:wset}\mathcal{X},\\ \forall \fullstate_0\in\mathcal{X},\ \forall w\in\mathcal{W}^{\mathbb{R}_+},\ \forall t\in\mathbb{R}_+.
%\end{multline}
    Then the controller $K$, given by \eqref{eq:Contr}, ensures \litwo-gain stability of the closed-loop $\mathcal{F}_\mathrm{l}(P, K)$ with a bounded \litwo-gain of $\gamma$ on $\mathcal{X}$ in the following sense. There exists a function $\zeta(\fullstate,\tilde{\fullstate})\geq  0$ with $\zeta(0,0) = 0$ such that 
\begin{equation}\label{eq:closedincrgain}
\begin{aligned}
\Vert\mathcal{F}_\mathrm{l}( P,  K)(w,\fullstate_0)-&\mathcal{F}_{\mathrm{l}}( P,  K)(\tilde{w},\tilde{\fullstate}_0)\Vert_{2} \\ &\leq \gamma \norm{w-\tilde{w}}_{2}+\zeta(\fullstate_0,\tilde{\fullstate}_0),
\end{aligned}
\end{equation}
for all $w,\tilde{w} \in \mathcal{W}^{\mathbb{R}_+}$, with $w-\tilde{w}\in \mathscr{L}_{2}^{n_\mathrm{w}}$ and $\fullstate_0,\tilde{\fullstate}_0 \in \mathcal{X}$. 
%% ORIGINAL %%
    %  By assuming that there exists a set $\mathfrak{W}$, with $w,\st{w}\in\mathfrak{W}$, where 
%\begin{equation}
%\begin{aligned}
%	\mathfrak{W} &:= \lbrace w \in \mathscr{L}_{2}^{n_\mathrm{w}}\mid w(t)\in\mathcal{W},\\
%	&x_\mr{\pnote}(t)=\phi_\mathrm{x}(t,0,\fullstate_0,w)\in \label{eq:wset}\mathcal{X_\mr{\pnote}},\forall \fullstate_0\in\mathfrak{X},\,\forall t\geq0\rbrace, 
%\end{aligned}
%\end{equation}
%then, $K$ given by \eqref{eq:Contr} ensures \litwo-gain stability of the closed-loop $\mathcal{F}_\mathrm{l}(P, K)$ with a bounded \litwo-gain of $\gamma$ on $\mathcal{X}_\mr{\pnote}$ in the following sense: There exists a compact ball $\mathfrak{X}\subset \mathbb{X}_\mathrm{\pnote}\times\mathbb{R}^{n_{\mathrm{x}_\mathrm{k}}}$, and a function $\zeta(\fullstate,\tilde{\fullstate})\geq  0$ with $\zeta(0,0) = 0$ such that 
%\begin{equation}\label{eq:closedincrgain}
%\begin{aligned}
%\Vert\mathcal{F}_\mathrm{l}( P,  K)(w,\fullstate_0)-&\mathcal{F}_{\mathrm{l}}( P,  K)(\st{w},\st{\fullstate}_0)\Vert_{2} \\ &\leq \gamma \norm{w-\st{w}}_{2}+\zeta(\fullstate_0,\st{\fullstate_0}),
%\end{aligned}
%\end{equation}
%for all $w,\st{w} \in \mathfrak{W}$ and $\fullstate_0,\st{\fullstate_0} \in \mathfrak{X}$.   
% END OF ORIGINAL %
\end{thm}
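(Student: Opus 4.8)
The plan is to assemble the result from the pieces already established: Theorem~\ref{thm:contrrealize} (which shows $K$ in \eqref{eq:Contr} is the primal realization of $\delta K$, i.e., the differential form of $K$ is exactly $\delta K$), Theorem~\ref{thrm:incric} (the differential form of $\mathcal{F}_\mathrm{l}(P,K)$ equals $\mathcal{F}_\mathrm{l}(\delta P,\delta K)$), Theorem~\ref{thrm:diffICL2} (the LPV-synthesized bound carries over to $\mathcal{F}_\mathrm{l}(\delta P,\delta K)$ along any admissible scheduling trajectory generated by $\psi$), and finally Theorem~\ref{thm:diffincrdiss} / Lemma~\ref{lem:incrl2gainIncrDissip} (differential dissipativity with a quadratic storage function implies incremental dissipativity, hence the \litwo-gain bound on the primal closed loop). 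So the first step is to verify the chain of hypotheses needed to invoke each of these. In particular, I would check well-posedness of the interconnection of $P$ and $K$ so that Theorem~\ref{thrm:incric} applies: since $P$ is in the form \eqref{eq:genplant} with no direct feedthrough from $u$ to $y$ (the output equation is $y = C_\mathrm{y}x_\mathrm{p} + D_\mathrm{yw}w$), the composition $u = h_\mathrm{k}(x_\mathrm{k}, C_\mathrm{y}x_\mathrm{p}+D_\mathrm{yw}w)$ is already explicit, so well-posedness is automatic and $\breve h$ exists trivially.

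Next I would make the scheduling-consistency argument precise. Writing $\fullstate = \col(x_\mathrm{p}, \Delta x_\mathrm{k})$ for the closed-loop state of $\mathcal{F}_\mathrm{l}(P,K)$, the differential form $\mathcal{F}_\mathrm{l}(\delta P,\delta K)$ has system matrices depending on $p = \psi(x_\mathrm{p}(t))$. Because $\mathcal{X} = \mathcal{X}_\mathrm{p}\times\mathcal{X}_\mathrm{k}$ is invariant under $\mathcal{W}$ (this is the hypothesis of the theorem), any closed-loop trajectory started in $\mathcal{X}$ with disturbance in $\mathcal{W}^{\mathbb{R}_+}$ stays in $\mathcal{X}$, hence $x_\mathrm{p}(t)\in\mathcal{X}_\mathrm{p}$ for all $t$, so $p(t) = \psi(x_\mathrm{p}(t))\in\mathcal{P}$ is an admissible scheduling signal and $x_\mathrm{p}\in\mathfrak{B}_{\mathrm{p},\mathrm{x_p}}^{\mathcal{X}}$. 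Therefore Theorem~\ref{thrm:diffICL2} gives that $\mathcal{F}_\mathrm{l}(\delta P,\delta K)$ is \ltwo-gain stable with gain $\leq\gamma$ along every such trajectory, equivalently (by the Remark after Theorem~\ref{thm:diffincrdiss}) that $\mathcal{F}_\mathrm{l}(P,K)$ is differentially dissipative on $\mathcal{X}$ with respect to $s_\delta(\delta w,\delta z) = \gamma^2\norm{\delta w}^2 - \norm{\delta z}^2$, using the storage function $V_\delta(\fullstate,\delta\fullstate) = \delta\fullstate^\top M\delta\fullstate$ supplied by Assumption~\ref{asum:storassum}. Then invoking Theorem~\ref{thm:diffincrdiss} (with $\mathbb{X}\times\mathbb{W}\times\mathbb{Z}$ replaced by the convex region carved out by $\mathcal{X}$ and $\mathcal{W}$, and noting the quadratic output term $-\norm{\delta z}^2$ is negative semidefinite, so the footnoted hypothesis holds) yields incremental dissipativity of $\mathcal{F}_\mathrm{l}(P,K)$ with supply $s_\Delta(w,\tilde w,z,\tilde z) = \gamma^2\norm{w-\tilde w}^2 - \norm{z-\tilde z}^2$, which by Lemma~\ref{lem:incrl2gainIncrDissip} is exactly the \litwo-gain bound \eqref{eq:closedincrgain}, with $\zeta(\fullstate_0,\tilde\fullstate_0)$ built from the incremental storage function $V_\Delta$ (so $\zeta(\fullstate_0,\fullstate_0)=0$ follows from $V_\Delta(\fullstate,\fullstate)=0$; note the statement's $\zeta(0,0)=0$ is the weaker consequence).

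The main obstacle I expect is the subtlety in going from differential dissipativity of $\mathcal{F}_\mathrm{l}(P,K)$ restricted to the region $\mathcal{X}$ to a genuine \litwo-gain inequality for all pairs of trajectories with initial conditions in $\mathcal{X}$: Theorem~\ref{thm:diffincrdiss} and the Verhoek result it cites integrate the differential storage inequality along a homotopy of trajectories connecting $(x, w)$ and $(\tilde x, \tilde w)$, and for this one needs the whole homotopy to remain inside $\mathcal{X}\times\mathcal{W}$ where the differential bound is valid --- this is precisely why invariance of $\mathcal{X}$ and convexity of the region enter, and why the intermediate trajectories $\bar x(\lambda)$ (taken e.g.\ as the straight-line homotopy in the state, consistent with the path-integral construction of $\m{A}_\mathrm{k},\dots,\m{D}_\mathrm{k}$ in \eqref{eq:incrMat}) must be admissible closed-loop solutions. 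I would therefore spend the bulk of the proof carefully arguing that the family of closed-loop trajectories interpolating between the two given ones stays in $\mathcal{X}$, using invariance together with the fact that the realization $K$ in \eqref{eq:Contr} was built exactly so that its differential form agrees with $\delta K$ along such interpolations; once that is in place the remaining manipulations are the routine integration already carried out in \cite{Verhoek2020} and recalled in Theorem~\ref{thm:diffincrdiss}.
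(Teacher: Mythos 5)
Your proposal follows essentially the same route as the paper's own proof in Appendix \ref{App:D}: chain Theorem \ref{thm:contrrealize}, Theorem \ref{thrm:incric} and Theorem \ref{thrm:diffICL2} to obtain differential \ltwo-gain stability of $\mathcal{F}_\mathrm{l}(P,K)$ on the invariant set $\mathcal{X}$, then pass to the incremental bound \eqref{eq:closedincrgain} via Theorem \ref{thm:diffincrdiss} and Lemma \ref{lem:incrl2gainIncrDissip}. Your additional care about the interpolating homotopy remaining in $\mathcal{X}\times\mathcal{W}$ is a point the paper treats only implicitly (it simply invokes invariance and then Theorem \ref{thm:diffincrdiss}), so spelling it out would refine rather than alter the argument.
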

\begin{proof} See Appendix \ref{App:D}.
\end{proof}
\begin{rem}
	The value set of the generalized disturbance signals $\mc{W}$ considered in Theorem \ref{thrm:cl-li2}, i.e., $w(t)\in\mc{W}$, ensures that only generalized disturbances are considered such that $\genState\in\mc{X}_\mr{\pnote}$, which corresponds to the set for which \ltwo-gain stability was verified of the closed-loop differential form, see Section \ref{sec:diffsynth}. Computing this set is a difficult problem which is related to reachability analysis or invariant set computation, however there are numerical tools that can be employed for this purpose, see e.g. \cite{Althoff2013,Maidens2015}. %However, further details are outside of the scope of the paper.
\end{rem}

%%%%%%%%%%%%%%%%%%%%%%%%%%%%%%%%%%%%%%%%%%%%%%%%%%%%%
\subsection{Steady-state solution}\label{sec:distobsv}
\subsubsection{Estimating the steady-state solution}
%\TR{We need a section on the steady state solution computation where the disturbance observer is just a part. I would also draw a parallel w.r.t. trajectory planning in robotics.} 

The realized controller $K$ in terms of \eqref{eq:Contr} consists of a feedforward and a feedback part. The feedforward part $\st{\genInput}=\st{y_\mr{k}}$ corresponds to the steady-state trajectory $\stTraj = (\st{x_\mr{\pnote}},\st{\genInput},\st{w},\st{z},\st{\genOutput})\in\mathfrak{B}_\mr{\pnote}$.  This trajectory is chosen a-priori by the user, based on the desired reference the system needs to follow, similar to trajectory planning in robotics. Computation of $\stTraj$  can be accomplished by using trajectory planning algorithms \cite{Gasparetto2015} or in some cases by analytic solution of the system equations. 

Note that the generalized disturbances  $w$ also influence the steady-state trajectory. The generalized disturbance consists of measurable/known disturbances $w_\mr{m}$, such as references, and unmeasurable/unknown  disturbances $w_\mr{u}$, such as measurement noise and load variation, composing $w = \col(w_\mr{m},w_\mr{u})$.  Hence, to guarantee convergence towards the designed desired steady-state trajectory, we also require knowledge of the asymptotic behavior of the unknown part $w_\mr{u}$ of $\st{w}$. We only need knowledge on the asymptotic behavior of $w_\mr{u}$ as only that influences the steady-state trajectory  $\stTraj$, e.g. a zero mean measurement noise does not need to be estimated as it does not influence the steady-state trajectory, while estimation of a constant load is required as it directly influences it. %the steady-state trajectory.
%Hence, we also require knowledge of the corresponding disturbance $\st{w}$ to guarantee convergence towards the steady-state trajectory. In the generalized plant framework, $w$ (and hence $\st{w}$) often consists of known disturbances, e.g. reference trajectories, and unknown disturbances, e.g. in- and output disturbances. 

Disturbance observers have been widely used to estimate and compensate for the effect of unknown disturbances \cite{Chen2016}. Often they work on the basis of the internal model principle, whereby the (assumed) dynamics of the disturbance are included in the design \cite{Chen2004}. Similarly, in this work, we make use of a disturbance observer in order to estimate the unknown elements $\st{w_\mr{u}}$ of the steady-state generalized disturbance $\st{w}$. % More specifically, we assume that the 
%\TR{Would be nice to distinguish the known and the unknown part.}

%\TR{Also we need to mention that we need to track the disturbance approximately as only the convergence point matters, i.e., we need to establish that to which steady state orbit we are converging to.}

\begin{asum}[Disturbance model]\label{asum:distmodel}
	Given the generalized plant \eqref{eq:genplantfull}, assume that the unknown disturbances $w_\mr{u}$, can be modeled by the disturbance generator
		\begin{equation}
	\left\lbrace
	\begin{aligned}
		\dot{x}_\mr{w}(t) &= f_\mr{w}(x_\mr{w}(t));\\
		w_\mr{u}(t) &= C_\mr{w}x_\mr{w}(t);
	\end{aligned}\right.
	\end{equation}
	with $x_\mr{w}(t)\in\mathbb{X}_\mr{w}$.
\end{asum}

Given Assumption \ref{asum:distmodel}, the state dynamics of the combined generalized plant \eqref{eq:genplantfull} and disturbance model are given by
\begin{subequations}\label{eq:distmodel}
\begin{equation}
	\begin{aligned}
		\dot{x}_\mr{e}(t) &=\begin{bmatrix}
			\dot x_\mr{\pnote}(t)\\\dot x_\mr{w}(t)
		\end{bmatrix} = \begin{bmatrix}f_\mr{\pnote}(x_\mr{\pnote}(t),\genInput(t))+B_\mr{w}
		\begin{bmatrix} w_\mr{m}(t)\\C_\mr{w}x_\mr{w}(t)\end{bmatrix}\\
		f_\mr{w}(x_\mr{w}(t))
		\end{bmatrix},\\
		&= f_\mr{e}(x_\mr{e}(t),\genInput(t),w_\mr{m}(t)),\\
	\end{aligned}
\end{equation}
where $x_\mr{e}(t)\in\mathbb{X}_\mr{\pnote}\times\mathbb{X}_\mr{w}=\mathbb{X}_\mr{e}\subseteq\mathbb{R}^{n_\mr{x_e}}$, 
and the measured output is given by 
\begin{equation}
\begin{aligned}
	\genOutput &= C_\mr{y}x_\mr{\pnote}(t)+D_\mr{yw}\begin{bmatrix}w_\mr{m}(t)\\C_\mr{w}x_\mr{w}(t)\end{bmatrix},\\
	&= h_\mr{e}(x_\mr{e}(t),w_\mr{m}(t)).
	\end{aligned}
\end{equation}
\end{subequations}

\begin{rem}
	Note that the disturbance model in Assumption \ref{asum:distmodel} has a different purpose than the disturbance model that is included in the generalized plant $P$ given by \eqref{eq:genplant}, which is modeled in terms of weighted norm relation. Namely, the former is used to model disturbances that are acting on the system and influence the asymptotic behavior of the steady-state trajectory (and hence influence $\st{w}$), while the latter can be seen as modeling the difference/increment between the steady-state disturbance and other possible disturbances (i.e. $w-\st{w}$) and disturbance which do not influence the asymptotic behavior of the steady-state trajectory, e.g., measurement noise.
\end{rem}

\begin{thm}[Nonlinear observer]\label{thm:nlobsv} %{[Reshuffling of theorem]}
The state observer
\begin{equation}\label{eq:observer}
	\left\lbrace
	\begin{aligned}
		\dot{\hat x}_\mr{e}(t) &= f_\mr{e}(\hat{x}_\mr{e}(t),\genInput(t),w_\mr{m}(t))+L(y(t)-\hat{y}(t));\\
		\hat{y}(t) &= h_\mr{e}(\hat{x}_\mr{e}(t),w_\mr{m}(t));
	\end{aligned}\right.
	\end{equation}
	with $\hat{x}_\mr{e}(t)\in\mathbb{R}^{n_\mr{x_e}}$ and $\hat{y}(t)\in\mathbb{R}^{n_\mr{y}}$ %is a state observer for \eqref{eq:distmodel}, in the sense 
	ensures that for $t\rightarrow\infty$, $\hat{x}_\mr{e}(t)\rightarrow x_\mr{e}(t)$, if there exists a $P\succ 0$ with $P\in\mathbb{R}^{n_\mr{x_e}\times n_\mr{x_e}}$ and an $F\in\mathbb{R}^{n_\mr{x_e}\times n_\mr{y}}$ such that
	\begin{multline}\label{eq:obsvLMI}
		\m A_\mr{e}(x_\mr{e},\genInput,w_\mr{m})^\top P+P\m \m A_\mr{e}(x_\mr{e},\genInput,w_\mr{m})-\\\m C_\mr{e}(x_\mr{e},w_\mr{m})^\top F-F\m C_\mr{e}(x_\mr{e},w_\mr{m})\prec0,
	\end{multline}
	for all $(x_\mr{e},\genInput,w_\mr{m})\in\mathbb{X}_\mr{e}\times\genInputSet\times\pi_\mr{w_\mr{m}}\mathbb{W}$, where $A_\mr{e} = \frac{\partial f_\mr{e}}{\partial x_\mr{e}}$, $C_\mr{e} = \frac{\partial h_\mr{e}}{\partial x_\mr{e}}$, and $L = P F^{-1}$.
\end{thm}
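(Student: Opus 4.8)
The plan is to convert the estimation-error dynamics into an exact linear-parameter-varying representation by a path-integral (differential mean value) argument, and then certify asymptotic decay with the quadratic Lyapunov function $e^\top P e$, exploiting that the observer LMI \eqref{eq:obsvLMI} is affine in the evaluated Jacobians. First I would introduce the error $e(t) = x_\mr{e}(t)-\hat x_\mr{e}(t)$ and, subtracting \eqref{eq:observer} from \eqref{eq:distmodel}, write
\begin{equation*}
\dot e = f_\mr{e}(x_\mr{e},\genInput,w_\mr{m}) - f_\mr{e}(\hat x_\mr{e},\genInput,w_\mr{m}) - L\big(h_\mr{e}(x_\mr{e},w_\mr{m}) - h_\mr{e}(\hat x_\mr{e},w_\mr{m})\big).
\end{equation*}
Since $f_\mr{e},h_\mr{e}\in\m C_1$ and only the first argument differs in each bracket, the fundamental theorem of calculus applied along the segment $\hat x_\mr{e}+\lambda(x_\mr{e}-\hat x_\mr{e})$, $\lambda\in[0,1]$, yields $f_\mr{e}(x_\mr{e},\genInput,w_\mr{m}) - f_\mr{e}(\hat x_\mr{e},\genInput,w_\mr{m}) = \bar{\m A}_\mr{e}\,e$ and $h_\mr{e}(x_\mr{e},w_\mr{m}) - h_\mr{e}(\hat x_\mr{e},w_\mr{m}) = \bar{\m C}_\mr{e}\,e$, where $\bar{\m A}_\mr{e} := \int_0^1 \m A_\mr{e}(\hat x_\mr{e}+\lambda(x_\mr{e}-\hat x_\mr{e}),\genInput,w_\mr{m})\,d\lambda$ and $\bar{\m C}_\mr{e}$ is the analogous average of $\m C_\mr{e}$. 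Hence the error obeys the exact linear time-varying (LPV-type) system $\dot e = (\bar{\m A}_\mr{e} - L\bar{\m C}_\mr{e})\,e$.

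\emph{Lyapunov step.} With $V(e) = e^\top P e$, $P\succ 0$, differentiation gives $\dot V = e^\top\big[(\bar{\m A}_\mr{e}-L\bar{\m C}_\mr{e})^\top P + P(\bar{\m A}_\mr{e}-L\bar{\m C}_\mr{e})\big]e$; substituting the prescribed gain $L$ (assembled from $P$ and $F$) absorbs the indefinite cross-term, so the bracket equals the left-hand side of \eqref{eq:obsvLMI} with $\m A_\mr{e},\m C_\mr{e}$ replaced by their averages $\bar{\m A}_\mr{e},\bar{\m C}_\mr{e}$. Because \eqref{eq:obsvLMI} is affine in $(\m A_\mr{e},\m C_\mr{e})$ and holds at every point of $\mathbb{X}_\mr{e}\times\genInputSet\times\pi_\mr{w_\mr{m}}\mathbb{W}$, it is preserved under convex combinations and hence under the averaging integral — provided the whole segment $\{\hat x_\mr{e}+\lambda(x_\mr{e}-\hat x_\mr{e}) : \lambda\in[0,1]\}$ lies in $\mathbb{X}_\mr{e}$, the remaining arguments $\genInput(t),w_\mr{m}(t)$ being kept at their actual values so they stay in their sets automatically. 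Consequently $\dot V < 0$ for all $e\neq 0$, and a standard Lyapunov argument gives $e(t)\to 0$, i.e. $\hat x_\mr{e}(t)\to x_\mr{e}(t)$ as $t\to\infty$.

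\emph{Main obstacle.} The delicate point is precisely the domain issue flagged above: the path-integral representation together with the convexity argument needs the line segment joining $\hat x_\mr{e}(t)$ and $x_\mr{e}(t)$ to stay inside the region on which \eqref{eq:obsvLMI} is imposed, which requires $\mathbb{X}_\mr{e}$ (or a suitable forward-invariant subset) to be convex and the estimate not to escape it; handling this rigorously may call for initializing $\hat x_\mr{e}$ in an appropriate set or for a forward-invariance argument on a sublevel set of $V$. A second, more routine technicality is upgrading the strict pointwise inequality \eqref{eq:obsvLMI} to a uniform negativity margin (e.g.\ via compactness of the relevant set, or by imposing $\prec -\epsilon I$), which is what actually delivers attractivity/asymptotic convergence rather than mere non-increase of $V$.
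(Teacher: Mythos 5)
Your proof is correct in substance, but it follows a genuinely different route from the paper. You work directly with the estimation error $e=x_\mr{e}-\hat x_\mr{e}$, use the fundamental theorem of calculus to write the error dynamics exactly as $\dot e=(\bar{\m A}_\mr{e}-L\bar{\m C}_\mr{e})e$ with Jacobians averaged along the segment joining $\hat x_\mr{e}$ and $x_\mr{e}$, and then invoke linearity of the LMI in $(\m A_\mr{e},\m C_\mr{e})$ to transfer \eqref{eq:obsvLMI} to the averaged matrices and conclude with the Lyapunov function $e^\top P e$. The paper instead stays inside the contraction framework it uses throughout: it views \eqref{eq:observer} as a \emph{virtual system} of \eqref{eq:distmodel} (both $x_\mr{e}$ and $\hat x_\mr{e}$ are particular solutions of it), forms its differential dynamics $\delta\dot{\hat x}_\mr{e}=(\m A_\mr{e}(\hat x_\mr{e},\genInput,w_\mr{m})-L\,\m C_\mr{e}(\hat x_\mr{e},w_\mr{m}))\delta\hat x_\mr{e}$, and certifies virtual contraction with the constant-metric differential Lyapunov function $V_\delta(\delta x)=\delta x^\top P\delta x$, for which \eqref{eq:obsvLMI} is exactly the decrease condition; convergence $\hat x_\mr{e}\to x_\mr{e}$ then follows from the contraction-to-incremental machinery (citing the virtual-system literature). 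Your route is more elementary and self-contained (no contraction results are needed; the Jacobians enter only pointwise through convexity of the strict LMI under the averaging integral), while the paper's route evaluates the Jacobians at a single point and reuses the differential/incremental apparatus already established in the paper. The two caveats you flag are genuine but do not put you below the paper's level of rigor: the requirement that the segment between $\hat x_\mr{e}$ and $x_\mr{e}$ stay in a convex region where \eqref{eq:obsvLMI} holds, and the need for a uniform negativity margin to get attractivity, are equally implicit in the paper's contraction argument (the homotopy path there must also remain in the certified region, and the theorem even allows $\hat x_\mr{e}(t)\in\mathbb{R}^{n_\mr{x_e}}$). One small point to tighten: for the cross-term to be absorbed you need $F=PL$, i.e.\ $P L\,\bar{\m C}_\mr{e}=F\bar{\m C}_\mr{e}$; the relation $L=PF^{-1}$ as printed in the theorem (and the term $\m C_\mr{e}^\top F$ in \eqref{eq:obsvLMI}) is dimensionally off for rectangular $F$, so state explicitly which substitution you use rather than saying the gain is ``assembled from $P$ and $F$.''
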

\begin{proof}
	See Appendix \ref{App:E}.
\end{proof}
\begin{rem}
	Similar to controller design, also for observer design the LPV framework can be used by embedding the differential form of the system in an LPV representation. This allows to use convex optimization for the computation of the observer gain $L$ in Theorem \ref{thm:nlobsv}.
\end{rem}

Applying the nonlinear observer \eqref{eq:observer} gives us a disturbance observer for the combined generalized plant and disturbance model \eqref{eq:distmodel}, which then allows us to estimate the unknown disturbances on the system by taking $\st{w}_\mr{u}(t) = C_\mr{w}\hat{x}_\mr{w}(t)$, where $\hat{x}_\mr{w}$ is the estimated state of the disturbance model. This can then be used to compute the steady-state control input trajectory $\st{\genInput}=\st{y_\mr{k}}$, used by the realized controller $K$ \eqref{eq:Contr}, corresponding to the steady-state output trajectory $\st{\genOutput}$. Note that co-design of the controller and the observer under \eqref{eq:observer} can also be accomplished. %However, further investigation into this is outside of the scope of the paper.

\subsubsection{Unknown steady-state solution}
In case $\st{w}$ cannot be measured or estimated, we cannot guarantee that $\st{w}(t)\rightarrow w(t)$ as $t\rightarrow\infty$. Consequently, we cannot ensure convergence towards our desired steady-state solution $\stTraj$. However, in this case, the controller still ensures an \ltwo-gain bound\footnote{Assuming that $w-\st{w}$ is in the extended \ltwo-space.} of $\gamma$ from $w-\st{w}$ to $z-\st{z}$, i.e., the steady state trajectory will remain close in an \ltwo sense to the desired reference and the controller will still ensure stability of the closed-loop system. This weaker performance guarantee is also referred to as universal \ltwo-gain performance, see also \cite{Manchester2018}

%we still have \ltwo-gain performance and the controller will still stabilize the system, as , . 

%%%%%%%%%%%%%%%%%%%%%%%%%%%%%%%%%%%%%%%%%%%%%%%%%%%%%%%%%%%%%%%%%%%%%%%%%%%%%%%%%%%%%%%%%%%%%%%%%
%%%%% Examples %%%%%
\section{Examples}\label{sec:Examples}
In this section, \litwo-gain stability and performance of the closed-loop NL system guaranteed by the proposed incremental controller will be demonstrated through examples and compared to standard LPV controller designs which only ensure \ltwo-gain stability and performance. We will also demonstrate using simulation and experimental results that standard LPV control can fail to result in the desired behavior, while the proposed LPV synthesis resulting in an incremental controller can reliably achieve it.%
\subsection{Duffing Oscillator}\label{sec:duffspring}
First, reference tracking with a Duffing oscillator is investigated. The system is described by the following differential equations:
\begin{equation}
\begin{aligned}
        \dot{q}(t) &= v(t);\\
        \dot{v}(t) &= -\frac{k_1}{m} q(t) -\frac{k_2}{m} \left(q(t)\right)^3 - \frac{d}{m} v(t) + \frac{1}{m} F(t);\label{eq:duffNL}
       \end{aligned}
\end{equation}
where, $q$ $[\mr{m}]$ is the position, $v$ $[\mr{m\cdot s^{-1}}]$ the velocity and $F$ $[\mr{N}]$ is the (input) force acting on the mass. Furthermore, $m = 1\;[\mathrm{kg}]$, $k_1 = 0.5\;[\mathrm{N\cdot m^{-1}}]$, $k_2 = 5\;[\mathrm{N\cdot m^{-3}}]$ and $d = 0.2\;[\mathrm{N\cdot s\cdot m^{-1}}]$. Only the position $q$ is assumed to be measurable and hence it is considered to be the only output of the plant.

For comparison, besides designing an incremental controller, a standard LPV controller design is also made. For the design of the standard LPV controller, the primal form of the system \eqref{eq:duffNL} is embedded in an LPV representation, where dependence on time is omitted for brevity:
\begin{equation}\label{eq:duffLPV}
    \begin{aligned}
        \dot{q} &= v;\\
        \dot{v} &= \left(-\frac{k_1}{m}-\frac{k_2}{m}\sche_\mathrm{s}\right) q - \frac{d}{m} v + \frac{1}{m} F;
    \end{aligned}
\end{equation}
where $\sche_\mathrm{s}(t) = q^2(t)$ is the scheduling-variable. Here we will denote with subscript `s' this `standard' concept of LPV embedding and control design. $\mathcal{P}_\mathrm{s}=[0,\, 2]$ is chosen to allow for a relatively large operating range. No rate bounds are assumed on $\sche_\mathrm{s}$. For the incremental  synthesis, the differential form of \eqref{eq:duffNL} is calculated and is embedded in an LPV representation, resulting in
\begin{equation}\label{eq:duffincr}
    \begin{aligned}
        \delta \dot{q} \! &= \delta {v};\\
        \delta \dot{v} \! &= \! \left(\!-\frac{k_1}{m}-3 \frac{k_2}{m} \sche\!\right)\! \delta q - \frac{d}{m} \delta v + \frac{1}{m} \delta F;
    \end{aligned}
\end{equation}
where the scheduling $\sche(t) = q^2(t)$ results to be the same as in \eqref{eq:duffLPV} and hence it is also considered with $\mathcal{P}=[0,\, 2]$.

The generalized plant $P$ used for synthesis is depicted in Fig. \ref{fig:genplantscor} where $G$ is the system given by \eqref{eq:duffNL}, $K$ is the controller, $w=\col\left( r,d_\mathrm{i}\right)$ is the generalized disturbance with $r$ the reference and $d_\mathrm{i}$ being an input disturbance. The performance channel consists of $z_1$ (tracking error) and $z_2$ (control effort). The considered LTI weighting filters $\lbrace W_i \rbrace_{i = 1}^3$ are chosen as the transfer functions $W_1(s) = \frac{0.501(s + 3)}{s + 2\pi}$, $W_2(s) = \frac{10(s+50)}{s + 5\cdot 10^4}$ and $W_3 = 1.5$ where $s$ corresponds to the complex frequency. Furthermore, integral action is enforced by the filter $M(s) = \frac{s+2\pi}{s}$. The resulting sensitivity weight $W_1(s)M(s)$ has guaranteed 20 dB/dec roll-off at low frequencies in order to ensure good tracking performance, while $W_2(s)$ has high-pass characteristics in order to ensure proper roll-off at high frequencies. Because the system \eqref{eq:duffincr} and the corresponding generalized plant have affine dependency on the scheduling-variable, polytopic \ltwo-gain synthesis based on \cite{Apkarian1998,Apkarian1995} has been used in the design of both the incremental and standard LPV controllers to ensure closed-loop \litwo and \ltwo-gain stability and performance, respectively. This synthesis algorithm has been implemented in the LPVcore Toolbox \cite{DenBoef2021}, which has been used to synthesize the controllers.

\begin{figure}
    \centering
    \includegraphics[scale=.9]{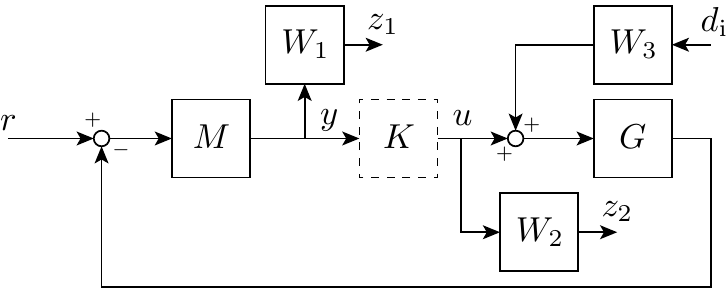}
    \caption{Generalized plant used in the examples.}\vspace{-1em}
    \label{fig:genplantscor}
\end{figure}

Using the given weighting filters, synthesizing the standard LPV controller results in an \ltwo-gain of $\gamma \approx 0.94$ and synthesizing the LPV controller for the differential plant results in an \litwo-gain of $\gamma \approx 1.2$. As the synthesis provides both controllers with affine parameter dependence, we can use  in the \litwo-case the result of Corollary \ref{cor:affineint} in order to compute $\intsche(t) = \int_0^1\psi(\bar{x}_\mr{\pnote}(t,\lambda)\,d\lambda=q^2(t)+q(t)\st{q}(t)+(\st{q}(t))^2$ and realize the incremental controller $K$. As we assume the presence of an (unknown) input disturbance $d_{\mathrm{i}}$,  a disturbance observer is constructed for the incremental controller as described in Section \ref{sec:distobsv}. It is assumed the input disturbance will be constant, hence, the following disturbance model is used for the disturbance observer design
\begin{equation}\label{eq:constdistmodel}
	\begin{aligned}
		\dot x_\mr{d}(t) &= 0;\\
		d_\mr{i}(t) &= x_\mr{d}(t).
	\end{aligned}
\end{equation}
\begin{figure}
    \centering
    \includegraphics[scale=1]{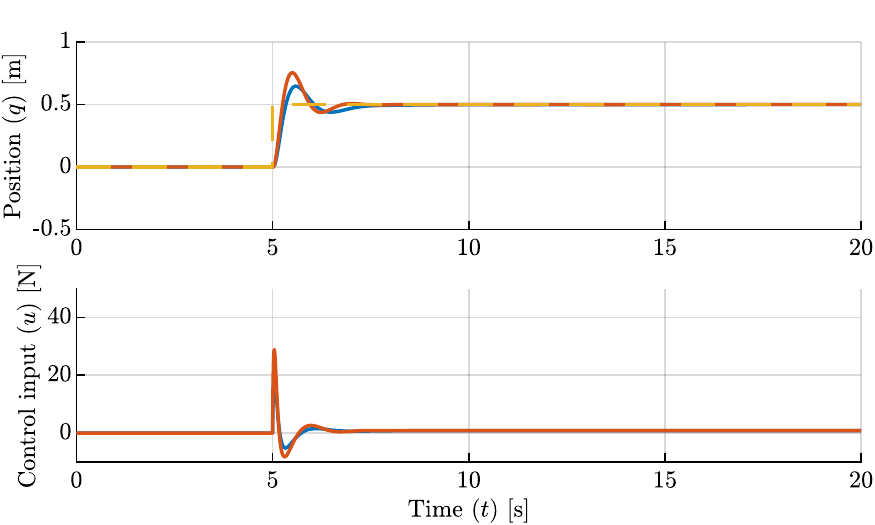}
    \caption{Position of the Duffing oscillator (top) in closed-loop with the standard LPV (\legendline{mblue}) and the  \linebreak incremental (\legendline{morange}) controllers under reference (\legendline{myellow,dashed}) and no input disturbance, together with the generated control inputs (bottom) by the controllers.}
    \label{fig:duffNoDist}
\end{figure} 
\begin{figure}
    \centering
    \includegraphics[scale=1]{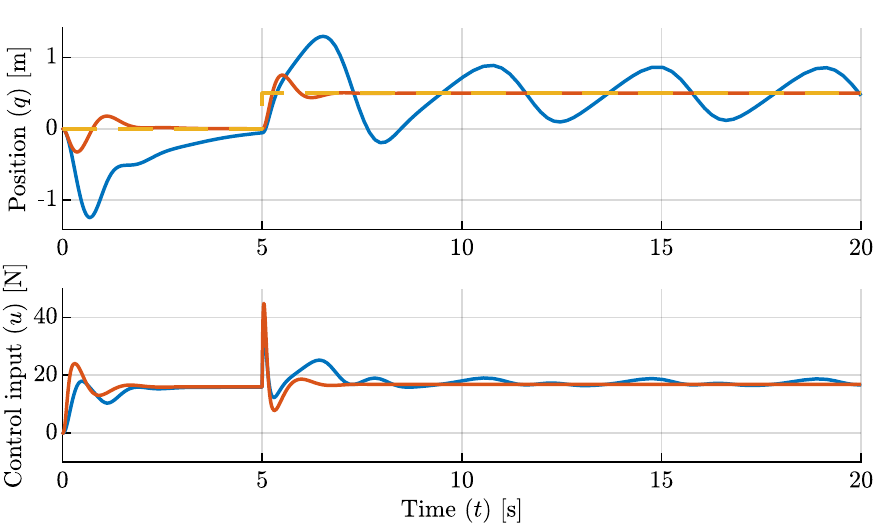}
    \caption{Position of the Duffing oscillator (top) in closed-loop with the standard LPV (\legendline{mblue}) and the \linebreak incremental (\legendline{morange}) controllers under reference (\legendline{myellow,dashed}) and input disturbance, together with the  generated control inputs (bottom) by the controllers.}
    \label{fig:duffDist}
\end{figure}
In simulation, the resulting outputs of the system using the standard LPV controller and the incremental controller in closed-loop are depicted without and with input disturbance in Fig. \ref{fig:duffNoDist} and Fig. \ref{fig:duffDist}, respectively. In both cases, a step signal is taken as a reference trajectory which changes from zero to 0.5 at $t=5$ seconds. For the incremental controller, the reference $r$ corresponds to $\st{q}$ with $\st{\genInput}(t)=k_1 \st{q}(t)+k_2 (\st{q}(t))^3-W_3 \st{d_i}(t)$ (as the trajectory of $\st{q}$ is piecewise constant). For the simulation results in Fig. \ref{fig:duffDist}, a constant input disturbance $d_\mathrm{i} \equiv -10\tfrac{2}{3}$ (corresponding to $-10\tfrac{2}{3}\cdot W_3=-16$ [N]) is applied. Comparing the results of the standard and the incremental controllers in Fig. \ref{fig:duffNoDist} shows that both controllers have similar performance when no input disturbance is present. The incremental controller has slightly more overshoot, but a lower settling time for this example. However, under constant input disturbance, it can be seen in Fig. \ref{fig:duffDist} that the standard LPV controller has a significant performance loss with oscillatory behavior, whereas the incremental controller preserves its smooth reference tracking property. Note, that in both cases, the scheduling variable $\sche$ never leaves the set for which the controllers have been designed, i.e. $q^2(t) \in [0,2]$. 

%%%%%%%%%%%%%%%%%%%%%%%%%%%%%%%%%%%%%%%%%%%%%%%%%%%%%
\subsection{Unbalanced disk}\label{sec:unbdisc}
\begin{figure}
	\centering
	\includegraphics[width=0.8\columnwidth]{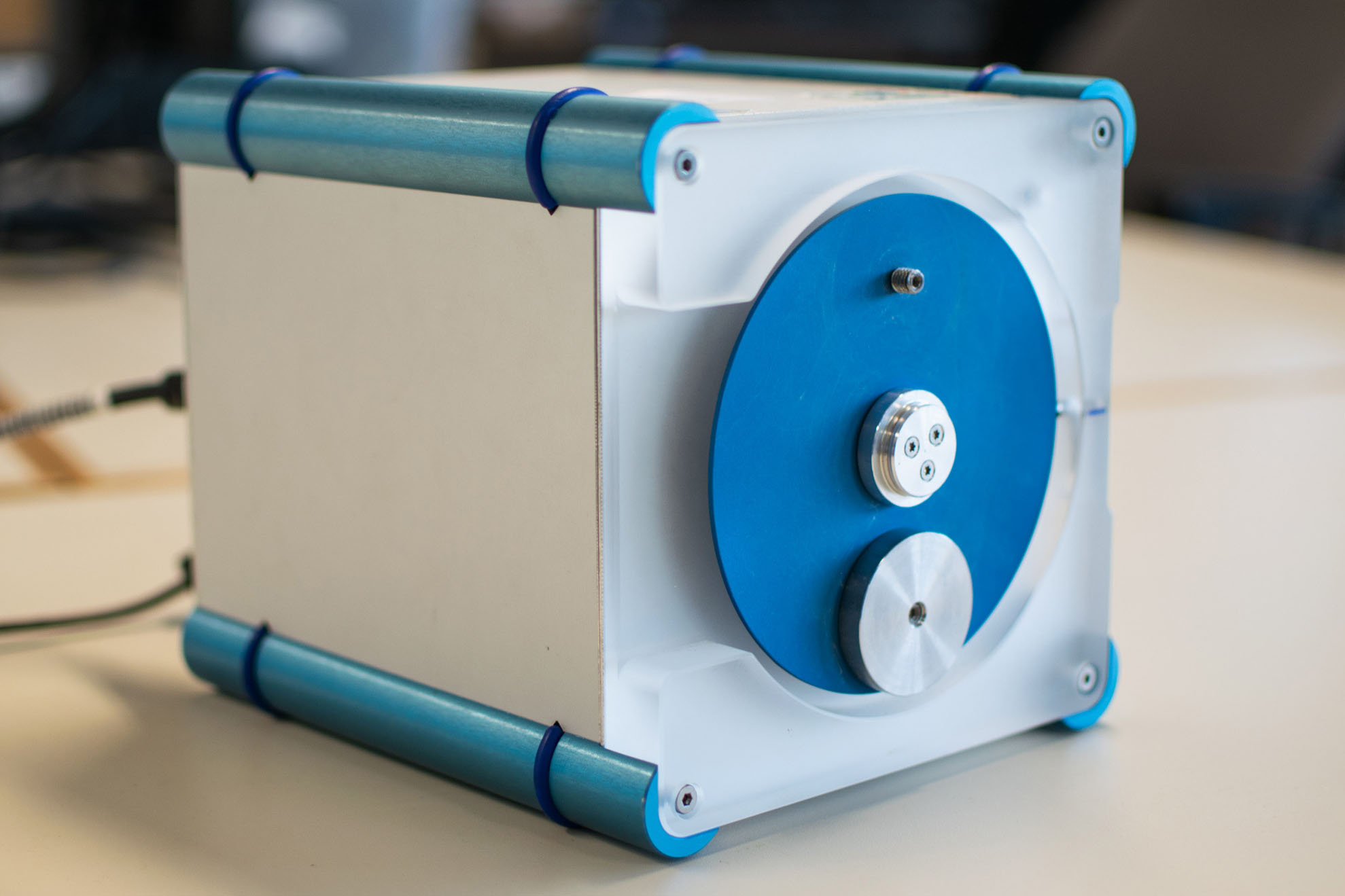}
	\caption{Unbalanced disk setup.}\label{fig:ubdisk}
\end{figure}

For the next example, the proposed control method is experimentally verified on an unbalanced disk setup given in Fig. \ref{fig:ubdisk}. By neglecting the fast electrical dynamics of the motor, the motion of this system can be described (see,  \cite{Kulcsar2009}) as
\begin{equation}\label{eq:disc}
\begin{aligned}
    \dot{\theta}(t) &= \omega(t);\\
    \dot{\omega}(t) &= \tfrac{M g l}{J}\sin(\theta(t)) -\tfrac{1}{\tau}\omega(t)+\tfrac{K_m}{\tau}V(t);
\end{aligned}
\end{equation}
where $\theta$ $[\mr{rad}]$ is the angle of the disk, $\omega$ $[\mr{rad\cdot s^{-1}}]$ its angular velocity and $V$ $[\mr{V}]$ is the input voltage to the motor. The angle of the disk $\theta$ is considered to be the output of the plant.
The physical parameters, estimated based on measurement data, are given in Table \ref{table:disc}.% see Appendix \ref{sec:ParamEst} for a brief description of the physical parameter estimation.
\begin{table}
\centering
\caption{Physical parameters of the unbalanced disk.}
\label{table:disc}
\begin{tabular}{c||c|c|c|c|c|c}
Parameter & $g$ & $J$ & $K_m$ & $l$ & $M$ & $\tau$ \\\hline
Value     & 9.8 & $2.4\mathrm{e-}4$ & 11  & 0.041      & 0.076      & 0.40    \\ 
\end{tabular} 
\end{table}
The differential form of \eqref{eq:disc}, embedded in an LPV representation, is given by
\begin{equation}\label{eq:discincrLPV}
\begin{aligned}
    \delta \dot{\theta}(t) &= \delta \omega(t);\\
    \delta \dot{\omega}(t) &= \left(\tfrac{M g l}{J}\sche(t))\right)\delta \theta(t) -\tfrac{1}{\tau}\delta \omega(t)+\tfrac{K_m}{\tau}\delta V(t);
\end{aligned}
\end{equation}
where $\sche(t)=\psi(\theta(t))=\cos(\theta(t))$ is the scheduling-variable which is assumed to be in $\mathcal{P}=[-1,\, 1]$. 
%Note that $\dot{\sche}(t) = -\sin(\theta(t))\omega(t)$ for which no bounds are explicitly assumed.

As in the previous example, the primal form of the NL system  \eqref{eq:disc} is also embedded in an LPV representation for the sake of comparison, which results in
\begin{equation}\label{eq:discLPV}
\begin{aligned}
    \dot{\theta}(t) &= \omega(t);\\
    \dot{\omega}(t) &= \left(\tfrac{M g l}{J}\sche_\mathrm{s}(t))\right) \theta(t) -\tfrac{1}{\tau}\omega(t)+\tfrac{K_m}{\tau}V(t);
\end{aligned}
\end{equation}
where $ \sche_\mathrm{s}(t) =\psi_\mathrm{s}(\theta(t)) = \tfrac{\sin(\theta(t))}{\theta(t)} = \mr{sinc}(\theta(t))$. 
$\mathcal{P}_\mathrm{s}$ is chosen\footnote{Note that $\psi_\mr{s}(0) = 1$ as $\lim_{x\rightarrow 0}\mr{sinc}(x)=1$.} as $[-0.22,\, 1]$ with no assumptions on the rate bounds. The used generalized plant structure is depicted in Fig. \ref{fig:genplantscor}. The weighting filters are chosen as
$W_\mathrm{1}(s) = \frac{0.5012( s + 4)}{s+\pi}$, $M(s) =\frac{s+\pi}{s}$, $W_\mathrm{2}(s) = \frac{s+40}{s+4000}$ and $W_\mathrm{3} = 0.5$.
%Note that the integral action is approximate in this case due to the choice of $W_\mr{s}$, as $W_\mr{s}$ includes a real pole close the origin.
%\todo{[Todo: add sentence about approximate integral action in this example]}
Synthesizing the controllers, using the same approach as for the duffing oscillator in Section \ref{sec:duffspring}, results in an \ltwo-gain of $\gamma \approx 1.1$ and \litwo-gain of $\gamma \approx 1.2$. %As the control strategies are implemented on the experimental setup, the controllers had to be discretized. For discretization of the LPV controllers, the complete LPV-SS discretization method from \cite{VandenHof2010} was implemented online on the controller with a sampling time of 5 ms, which is the lowest sampling time this platform supports. 
As the LPV controller resulting for the differential form of the plant has an affine dependency, we can use Corollary \ref{cor:affineint} to compute\footnote{Note that $\lim_{\theta\rightarrow\st{\theta}}\frac{\sin(\theta)-\sin(\st{\theta})}{\theta-\st{\theta}}=\cos(\theta)=\cos(\st{\theta})$.} $\intsche(t) = \int_0^1\psi(\bar{x}_\mr{\pnote}(t,\lambda)\,d\lambda=\frac{\sin(\theta(t))-\sin(\st{\theta}(t))}{\theta(t)-\st{\theta}(t)}$. For the resulting incremental controller, a disturbance observer is also designed to estimate the (unknown) disturbances $d_\mr{i}$. As  $d_\mr{i}$ is assumed to be constant, \eqref{eq:constdistmodel} is used for the design.
On the experimental setup, for safety, the input voltage to the system is saturated between $\pm$ 10 [V].

In Fig. \ref{fig:disc_exp_nodist}, the measured angular response of the disc during the experiments is depicted along with the input to the setup (i.e. $V$). The reference trajectory $r$ switches between $0$ and $\pm \frac{\pi}{2}$ rad/s. For the incremental controller, $r$ corresponds to $\st{\theta}$ with $\st{\genInput}(t)=-\frac{M g l \tau}{J K_m}\sin(\st{\theta}(t))-W_3\st{d_\mr{i}}(t)$ (as the trajectory is piecewise constant).

In Fig. \ref{fig:disc_exp_dist}, the same reference trajectory is used, but a constant input disturbance of $d_\mathrm{i}= 100$, corresponding to $100\cdot W_\mr{3}=50$ [V], is introduced (which is implemented by adding 50 V to the control input that is sent to the plant before saturation). Note that the reference only starts at 10s to give the controllers time to compensate for the input disturbance. The standard LPV controller performs much worse when a constant input disturbance is present, compared to the incremental controller, which has similar performance to the case when no input disturbance is applied. Both the standard LPV and the incremental controllers are able to compensate the 50 [V] input disturbance, as visible in the total received input by the plant (i.e. $V=\genInput+W_3 d_\mr{i}$), see bottom graph in Fig. \ref{fig:disc_exp_dist}. However, while the input that is sent to the plant is nearly identical for the incremental controller in both cases, see Fig. \ref{fig:disc_exp_nodist} and Fig. \ref{fig:disc_exp_dist}, this is clearly not the case for the standard LPV controller. For the latter, oscillations in the input signal are present when the input disturbance is applied which causes unwanted oscillation of the disk angle. While an input disturbance of 50 [V] is extraordinarily high for this system, and will likely never occur on the real setup, it still shows that there are inherent issues when using  standard LPV controllers. 
\begin{figure}
    \centering
    \includegraphics[scale=1]{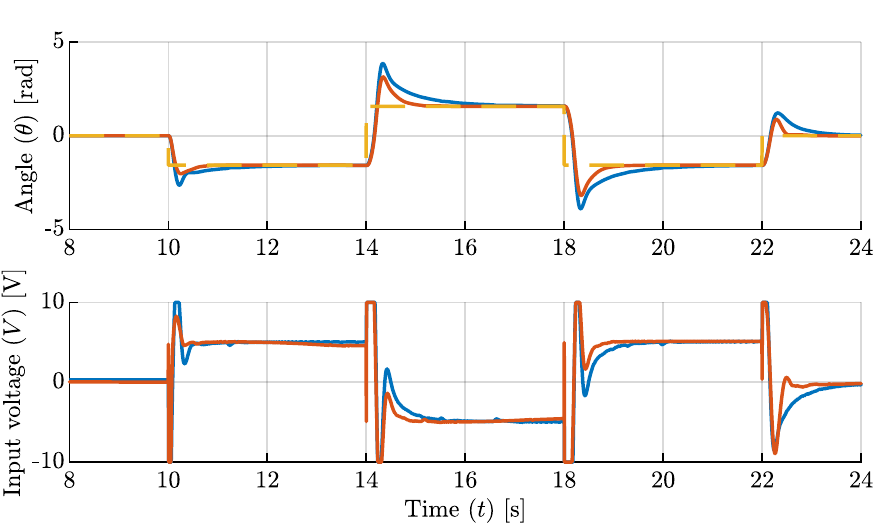}
    \caption{Measured angle of the unbalanced disk system (top)  in closed-loop with the standard LPV (\legendline{mblue}) and the incremental \\(\legendline{morange}) controllers under reference \mbox{(\legendline{myellow,dashed})} and no input disturbance, together with inputs to the plant (bottom) generated by the controllers.}    \label{fig:disc_exp_nodist}
\end{figure}
\begin{figure}
    \centering
    \includegraphics[scale=1]{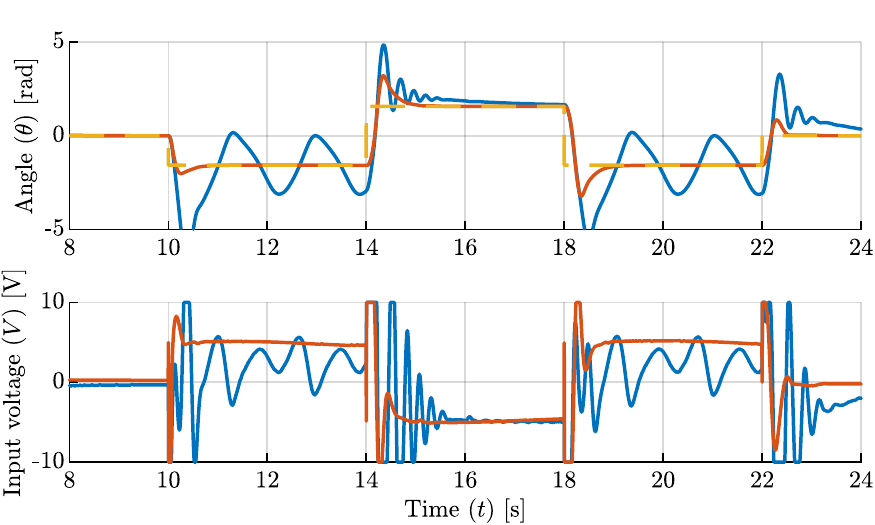}
    \caption{Measured angle of the unbalanced disk system (top) in closed-loop with the standard LPV  (\legendline{mblue}) and the incremental \linebreak(\legendline{morange}) controller under reference \mbox{(\legendline{myellow,dashed})} and input disturbance, together with corresponding inputs to the plant (bottom) generated by the controllers.}
    \label{fig:disc_exp_dist}
\end{figure}

%%%%%%%%%%%%%%%%%%%%%%%%%%%%%%%%%%%%%%%%%%%%%%%%%%%%%
\extver{
\subsection{Scorletti et al. example}\label{sec:scorex}
Finally, we compare the results of our method with the results from \cite{Scorletti2015}. The example system in \cite{Scorletti2015} is described by the following state-space representation
\begin{equation}\label{eq:scorsys}
\begin{aligned}
	\dot{x}_\mr{g,1}(t) &= -100\varphi(x_\mr{g,1}(t))-70x_\mr{g,2}(t)+300u_\mr{g}(t);\\
	\dot{x}_\mr{g,2}(t) &= 70 x_\mr{g,1}(t)-14x_\mr{g,2}(t);\\
	y_\mr{g}(t) &= x_\mr{g,1}(t);
\end{aligned}
\end{equation}
where
\begin{equation}
	\varphi(x) = \left\lbrace
	\begin{alignedat}{2}
	&0.9x^3-2\vert x\vert x+1.2 x, \; &&\text{for}\; \vert x \vert < \tfrac{5}{3};\\
	&2x-2.72, \; &&\text{for}\; x\geq \tfrac{5}{3};\\
	&2x+2.72, \; &&\text{for}\; x\leq -\tfrac{5}{3}.
	\end{alignedat}
	\right.
\end{equation}
Computing the differential form of \eqref{eq:scorsys} and embedding it in an LPV representation results in
\begin{align}\label{eq:scordifsys}
\begin{aligned}
	\delta \dot{x}_\mr{g,1}(t) &= \left(-100\sche(t))\right)\delta x_\mr{g,1}-70\delta x_\mr{g,2}(t)+300\delta u_\mr{g}(t);\\
	\delta \dot{x}_\mr{g,2}(t) &= 70 \delta x_\mr{g,1}(t)-14\delta x_\mr{g,2}(t);\\
	\delta y_\mr{g}(t) &= \delta x_{\mr{g},1}(t);
\end{aligned}\raisetag{13pt}
\end{align}
where $\sche(t) = \varphi_{\delta}(x_\mr{g,1}(t))$, is the scheduling-variable, which is assumed to be in $\mathcal{P}\in [-0.3 ,2]$, where
\begin{equation}
	\varphi_{\delta}(x) = \left\lbrace
	\begin{alignedat}{2}
	&2.7x^2-4x+1.2, \; &&\text{for}\; 0\leq x < \tfrac{5}{3};\\
	&2.7x^2+4x+1.2, \; &&\text{for}\; -\tfrac{5}{3} < x < 0;\\
	&2, \; &&\text{for}\; x\geq \tfrac{5}{3} \wedge x\leq -\tfrac{5}{3}.
	\end{alignedat}
	\right. 
\end{equation}
A generalized plant structure is taken as in Fig. \ref{fig:genplantscor}, with $W_1(s) = \frac{50}{s+2\pi}$, $W_2(s) = \frac{10(s+10)}{s+1000}$, $W_3=0.1$ and $M(s) = \frac{s+2\pi}{s}$. Which is similar to the generalized plant and weighting filters taken in \cite{Scorletti2015}. 
On the basis of this, an incremental controller for \eqref{eq:scorsys} is synthesized, using \eqref{eq:scordifsys}. Like for previous examples, the method from \cite{Apkarian1998} is used during the synthesis procedure. This results in the incremental controller achieving an \litwo-gain of $\gamma \approx 1.0$, similar to the incremental gain obtained in \cite{Scorletti2015}, where an \litwo-gain of $\gamma \approx 1$ is reported. 
As the LPV controller resulting for the differential form of the plant has affine dependency we can, like was done for the previous examples, use the result of Corollary \ref{cor:affineint} in order to compute\footnote{Note that $\lim_{x_1\rightarrow \st{x_1}}\frac{\varphi(x_\mr{g,1})-\varphi(\st{x_\mr{g,1}})}{x_\mr{g,1}-\st{x_\mr{g,1}}}=\varphi_\delta(x_\mr{g,1})=\varphi_\delta(\st{x_\mr{g,1}})$.} $\intsche(t) = \int_0^1\bar{\sche}(t,\lambda))\,d\lambda=\frac{\varphi(x_\mr{g,1}(t))-\varphi(\st{x_\mr{g,1}}(t))}{x_\mr{g,1}(t)-\st{x_\mr{g,1}}(t)}$. Furthermore, like for the previous examples, a disturbance observer is used for the incremental controller for which the disturbance model \eqref{eq:constdistmodel} is also used. In order to compute the feasible steady-state trajectory used by the incremental controller, the reference $r$ is filtered by the lowpass filter $F(s)=\frac{1000}{s+1000}$ (where $s$ is the complex frequency) which then corresponds to $\st{x_\mr{g,1}}$, this trajectory is then used to compute the corresponding control input $\st{\genInput}(t)$ (which due to its complexity is not given). Fig. \ref{fig:scorincrtime} shows the output of the NL system from \cite{Scorletti2015} in closed-loop with the proposed incremental controller alongside the results from \cite[Fig. 7]{Scorletti2015}, as well as the input to the plant, i.e. $u_\mr{g}$, for the proposed incremental controller.

It can be observed that our proposed incremental control design performs better in this example than the one proposed in \cite{Scorletti2015}, using the same performance and stability requirements set by the weighting filters. This is likely due to the fact that (i) our proposed controller has a more flexible dependency structure for the scheduling-variable, compared to the linear structure of the controller proposed in \cite{Scorletti2015}; (ii) our proposed controller contains besides a feedback part also a feedforward component, corresponding to the steady-state trajectory, which the method in \cite{Scorletti2015} does not have.
\begin{figure}
    \centering
    \includegraphics[scale = 1]{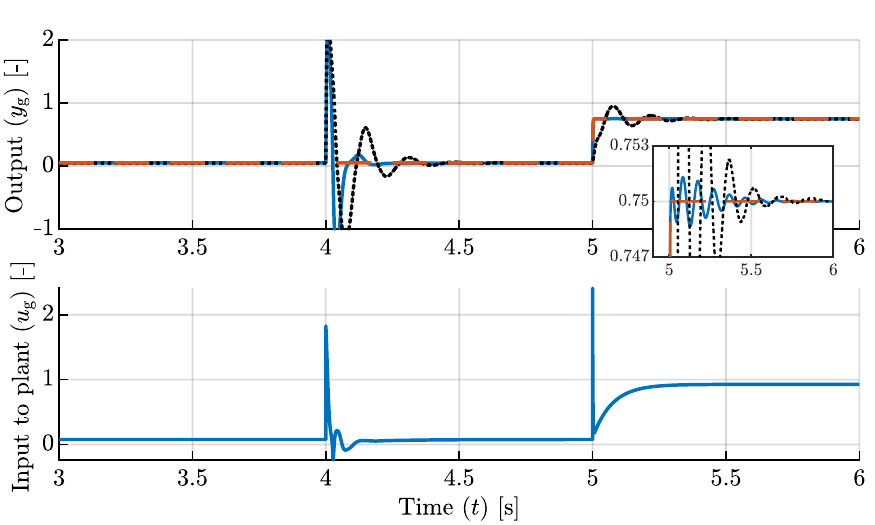}
    \caption{Output of the NL system \eqref{eq:scorsys} (top) in closed-loop with the proposed incremental controller (\legendline{mblue}), as well as the results from \cite[Fig. 7]{Scorletti2015} (\legendline{black,dotted}) under the reference trajectory \linebreak(\legendline{morange,dashed}), together with the input to the plant (bottom).}
    \label{fig:scorincrtime}
\end{figure}}
{In the extended version of the paper \cite{Koelewijn2022} we also apply the proposed incremental controller design to the example system from \cite{Scorletti2015} for comparison purposes.}

%%%%%%%%%%%%%%%%%%%%%%%%%%%%%%%%%%%%%%%%%%%%%%%%%%%%%%%%%%%%%%%%%%%%%%%%%%%%%%%%%%%%%%%%%%%%%%%%%
%%%%% Conclusion %%%%%
\section{Conclusion}\label{sec:Conclusion}
In this paper, we proposed a novel systematic dynamic output feedback controller synthesis method for nonlinear systems under general controller parameterization which provides incremental stability and performance guarantees of the achieved closed loop behavior. The proposed incremental controller synthesis method is based on two key ingredients: (i) linear parameter-varying (LPV) controller synthesis on the differential form of the nonlinear plant to be controlled and (ii) realization of the resulting LPV controller as an implementable incremental nonlinear controller with closed-loop stability and performance guarantees. A key advantage of the method that it facilitates systematic controller design for nonlinear plants by convex synthesis and enables the use of powerful performance shaping concepts available for linear controller design. Although a large variety of quadratic dissipativity notions can be ensured by the proposed methodology, we chose to exemplify the approach with incremental \ltwo-gain and performance. As it is demonstrated through simulation and experimental studies, the proposed approach successfully achieves desired closed-loop stability and performance requirements for tracking and rejection problems and overcomes issues of standard LPV controller synthesis methods. For future research, we aim at further increase achievable performance of the approach by using state/parameter-dependent storage functions.

%%%%%%%%%%%%%%%%%%%%%%%%%%%%%%%%%%%%%%%%%%%%%%%%%%%%%%%%%%%%%%%%%%%%%%%%%%%%%%%%%%%%%%%%%%%%%%%%%
%%%%% Appendices %%%%%
\appendices
\section{Proof of Theorem \ref{th:2}}\label{App:proofdiffincr}
Given a $V_\Delta\in\m{C}_1$, \eqref{eq:IDIE} holds if and only if (see \cite{Willems1972}) 
	\begin{equation}\label{eq:idddie}
		\begin{aligned}
				\frac{\partial V_\Delta}{\partial t}(x(t),\tilde{x}(t)) = &\frac{\partial V_\Delta}{\partial x}(x(t),\tilde{x}(t))f(x(t),w(t))+\\ &\frac{\partial V_\Delta}{\partial \tilde{x}}(x(t),\tilde{x}(t))f(\tilde{x}(t),\tilde{w}(t))\\ &\hspace{1em} \leq s_\Delta(w(t),\tilde{w}(t),z(t),\tilde{z}(t)),
		\end{aligned}
	\end{equation}
	for all $t\in\mathbb{R}_+$ and all solutions $(x,w,z),(\tilde{x},\tilde{w},\tilde{z})\in\mathfrak{B}$. As $(x(t),w(t)) \in \mathbb{X}\times\mathbb{W}$ and $(\tilde{x}(t),\tilde{w}(t)) \in \mathbb{X}\times\mathbb{W}$ for all $t\in\mathbb{R}$, \eqref{eq:idddie} holds if \eqref{eq:DDIE} holds.
	
\section{Proof of Theorem \ref{thm:incrstab}}\label{App:incrstab}
The storage function $V_\Delta$ defined in Definition \ref{def:incrdissip} satisfies the conditions for a (incremental) Lyapunov function in \cite[Theorem 1]{Angeli2002}. Furthermore, \cite[Eq.~(9)]{Angeli2002} is satisfied by \eqref{eq:IDIE} under the assumptions on the supply function in \eqref{eq:quadsupp} and \eqref{eq:suppassum}.

\section{Proof of Lemma \ref{lem:convergence}}\label{App:convergence}
Lemma \ref{lem:convergence} simply follows from \cite[Theorem 11]{Ruffer2013}
as $\Sigma$ is incrementally stable if $\Sigma$ is incrementally dissipative for a supply function of the form \eqref{eq:quadsupp} which satisfies \eqref{eq:suppassum}, see Theorem \ref{thm:incrstab}. Moreover, as $\mathcal{X}$ is compact and %positive 
invariant, by \cite[Theorem 11]{Ruffer2013}, this implies that $\Sigma$ is uniformly convergent on $\mathcal{X}$ for any bounded $w^* \in \pi_\mr{u}\mathfrak{B}^\mathcal{X}$ to a steady-state solution $x^* \in \pi_\mr{x}\mathfrak{B}^\mathcal{X,W}$. Hence, for any solution $(x,w^*)\in \mathfrak{B}^\mathcal{X,W}_\mr{x,w}$, $\lim_{t\rightarrow\infty}(x(t)-x^*(t))=0$. 

\section{Proof of Lemma \ref{lem:incrl2gainIncrDissip}}\label{App:incrl2gaindissip}
See \cite{Verhoek2020} for a proof that incremental dissipativity of $\Sigma$ with the supply function $s_\Delta(w,\tilde{w},z,\tilde{z}) = \gamma^2\norm{w-\tilde{w}}^2-\norm{z-\tilde{z}}^2$ with $\gamma\in\mathbb{R}_+$ implies that $\Sigma$ has a bounded \litwo-gain $\leq \gamma$. As this supply function satisfies the conditions of Theorem \ref{thm:incrstab}, incremental asymptotic stability of $\Sigma$ is also implied.
%this also implies that $\Sigma$ is incrementally asymptotically stable.

\section{Conversion of $P$ to a coarse structure} \label{App:A}
While \eqref{eq:genplant} may seem restrictive, the general class of nonlinear plants \eqref{eq:genplantfull} can be expressed as \eqref{eq:genplant} by the use of appropriate filters.
	Consider  the following (low-pass) filters 
\begin{equation}\label{eq:lpfilt}
	F_i:\left\lbrace\begin{aligned}
		\dot{x}_{\mr{F},i}(t) &= -\Omega_i \,x_{\mr{F},i}(t)+\Omega_i\,u_{\mr{F},i}(t);\\
		y_{\mr{F},i}(t)&= x_{\mr{F},i}(t);
	\end{aligned}\right. 
\end{equation}
for $i=1,2$ and where $\Omega_i = \mr{diag}(\omega_{1,i},\dots,\omega_{n_{\mr{F,}i}})$, with $\omega_{i,j}>0$ for $j=1,\dots,n_{\mr{F,}i}$ and $x_{\mr{F},i}(t)\in\mathbb{R}^{n_{\mr{F,}i}}$. Connecting $F_1$ and $F_2$ \eqref{eq:lpfilt} to $P$ such that $\genInput=F_1(\hat{u}_\mr{\pnote},x_{\mr{F},1}(0))$, where $\hat{u}_\mr{\pnote}$ is the new control input signal, and $\hat{y}_\mr{\pnote}=F_2(\genOutput,x_{\mr{F},2}(0))$ results in
	\begin{equation}\label{eq:genplantfilt}
    \hat{P}: \left \lbrace
    \begin{aligned}
    \genStateDot(t) &= \genF \left(x_\mr{\pnote}(t),x_{\mr{F},1}(t)\right)+B_\mr{w} w(t);\\
    x_{\mr{F},1}(t) &= -\Omega_1 x_{\mr{F},1}(t)+\Omega_1 \hat{u}_\mr{\pnote}(t);\\
    x_{\mr{F},2}(t)&= \Omega_2\,\genHy(\genState(t),x_{\mr{F},1}(t))-\Omega_2\, x_{\mr{F},2}(t)\\&\phantom{=}+\Omega_2\,D_\mr{yw}w(t);\\
    z(t) &=  \genHz \left(\genState(t),x_{\mr{F},1}(t)\right)+D_\mr{zw}w(t);\\
    \hat{y}_\mr{\pnote} &= x_{\mr{F},2}(t);
    \end{aligned}\right. 
\end{equation}
which is of the form \eqref{eq:genplant}. Note that if $\omega_{i,j}$ is taken large enough (e.g., $5 \times$ the intended bandwidth) then the desired closed-loop performance is not affected by the conversion.

%%%%%%%%%%%%%%%%%%%%%%%%%%%%%%%%%%%
\section{Proof of Theorem \ref{thrm:incric}}  \label{App:B}
Without loss of generality we can omit $w$ and $z$ and assume $P$ is given by (dependence on $t$ is omitted for clarity):
\begin{equation}
P: \left \lbrace \begin{aligned}
\dot{x}_\mr{\pnote} &= f(x_\mr{\pnote},\genInput);\\
\genOutput &= h_\mr{\pnote,y}(x_\mr{\pnote},\genInput);
\end{aligned} \right. 
\end{equation}
and $K$ is given by \eqref{eq:generalcontroller}. $P$ and $K$ are interconnected such that $u_\mathrm{k} = \genOutput$ and $\genInput = y_\mathrm{k}$. We assume that the interconnection is well-posed, i.e., there exists a $\mathcal{C}_1$ function $\breve{h}$, such that $\genInput = h_\mr{k}(x_\mr{k},h_\mr{\pnote,y}(x_\mr{\pnote},\genInput))$ can be expressed as
		$\genInput = \breve{h}(x_\mr{\pnote},x_\mr{k})$.
The closed-loop is then given by
\begin{equation}\label{eq:cl}
\Gamma: \left \lbrace \begin{aligned}
\dot{x}_\mr{\pnote} &= f(x_\mr{\pnote},\breve{h}(x_\mr{\pnote},x_\mr{k}));\\
\dot{x}_\mathrm{k} &= f_\mathrm{k}(x_\mathrm{k},h_\mr{\pnote,y}(x_\mr{\pnote},\breve{h}(x_\mr{\pnote},x_\mr{k}))).
\end{aligned} \right. 
\end{equation}
The differential form of \eqref{eq:cl} is
\begin{equation}\label{eq:dCL}
\delta \Gamma\!:\!\left\lbrace
\begin{aligned}
\delta \dot{x}_\mr{\pnote}&=\frac{\partial f}{\partial x_\mr{\pnote}}(x_\mr{\pnote},\genInput)\delta x_\mr{\pnote} \!+\!\frac{\partial f}{\partial \genInput}(x_\mr{\pnote},\genInput)\\
&\phantom{=}\cdot\bigg(\!\frac{\partial \breve{h}}{\partial x_\mr{\pnote}}(x_\mr{\pnote},x_\mr{k}) \delta x_\mr{\pnote}+\!\frac{\partial \breve{h}}{\partial x_\mathrm{k}}(x_\mr{\pnote},x_\mathrm{k})\delta x_\mr{k}\!\bigg);\\
\delta \dot{x}_\mathrm{k} \!&=\! \frac{\partial f_\mr{k}}{\partial x_\mr{k}}(x_\mr{k},u_\mr{k})\delta x_\mr{k} \!+\! \frac{\partial f_\mr{k}}{\partial u_\mr{k}}(x_\mr{k},u_\mr{k})\\
&\phantom{=}\cdot\bigg[\frac{\partial h_\mr{\pnote,y}}{\partial x_\mr{\pnote}}(x_\mr{\pnote},\genInput) \delta x_\mr{\pnote}\!+\! \frac{\partial h_\mr{\pnote,y}}{\partial \genInput}(x_\mr{\pnote},\genInput)\\
&\phantom{=}\cdot\bigg(\!\frac{\partial \breve{h}}{\partial x}(x_\mr{\pnote},x_\mr{k})\delta x+\frac{\partial \breve{h}}{\partial x_\mr{k}}(x_\mr{\pnote},x_\mr{k})\delta x_\mr{k}\!\bigg)\bigg];
\end{aligned}
\right.
\end{equation}
where $u_\mr{k} = h_\mr{\pnote,y}(x_\mr{\pnote},\breve{h}(x_\mr{\pnote},x_\mr{k}))$. %and $\genInput = \breve{h}(x_\mr{\pnote},x_\mr{k})$.
The differential forms of $P$ and $K$ are given by
\begin{equation}\label{eq:deltaPthm}
\delta P: \!\left \lbrace \begin{aligned}
\delta \dot{x}_\mr{\pnote} \!&=\! \frac{\partial f}{\partial x_\mr{\pnote}}(x_\mr{\pnote},\genInput) \delta x_\mr{\pnote} \!+\!\frac{\partial f}{\partial \genInput}(x_\mr{\pnote},\genInput) \delta \genInput;\\
\delta \genOutput\!&=\! \frac{\partial h_\mr{\pnote,y}}{\partial x_\mr{\pnote}}(x_\mr{\pnote},\genInput) \delta x_\mr{\pnote}\!+\!\frac{\partial h_\mr{\pnote,y}}{\partial \genInput}(x_\mr{\pnote},\genInput) \delta \genInput;\end{aligned}\right. 
\end{equation}
\begin{equation}\label{eq:deltaKthm}
\delta K: \left \lbrace \begin{aligned}
\delta \dot{x}_\mathrm{k} &= \frac{\partial f_\mathrm{k}}{\partial x_\mathrm{k}}(x_\mathrm{k},u_\mathrm{k}) \delta x_\mathrm{k} +\frac{\partial f_\mathrm{k}}{\partial u_\mathrm{k}}(x_\mathrm{k},u_\mathrm{k}) \delta u_\mathrm{k}; \\
\delta y_\mathrm{k} &= \frac{\partial h_\mathrm{k}}{\partial x_\mathrm{k}}(x_\mathrm{k},u_\mathrm{k}) \delta x_\mathrm{k} +\frac{\partial h_\mathrm{k}}{\partial u_\mathrm{k}}(x_\mathrm{k},u_\mathrm{k}) \delta u_\mathrm{k};
\end{aligned}\right. 
\end{equation}
interconnecting these in a similar manner as $P$ and $K$, i.e. $\delta u_\mr{k} = \delta \genOutput$ and $\delta \genInput = \delta y_\mr{k}$, results in
\begin{equation}\label{eq:deltaU2}
	\begin{aligned}
	\delta \genInput =\,& 
	\frac{\partial h_\mathrm{k}}{\partial x_\mathrm{k}}(x_\mathrm{k},u_\mathrm{k}) \delta x_\mathrm{k} +\frac{\partial h_\mathrm{k}}{\partial u_\mathrm{k}}(x_\mathrm{k},u_\mathrm{k})\\& \cdot\bigg(\frac{\partial h_\mr{\pnote,y}}{\partial x_\mr{\pnote}}(x_\mr{\pnote},\genInput) \delta x_\mr{\pnote}+\frac{\partial h_\mr{\pnote,y}}{\partial \genInput}(x_\mr{\pnote},\genInput) \delta \genInput\bigg).
	\end{aligned}
\end{equation}
By the well-posedness assumption, we know that $\genInput = h_\mr{k}(x_\mr{k},h_\mr{\pnote,y}(x_\mr{\pnote},\genInput))$ can be expressed as $\genInput = \breve{h}(x_\mr{\pnote},x_\mr{k})$, hence, in the differential form, \eqref{eq:deltaU2} can equivalently be expressed as $\delta \genInput = \frac{\partial\breve{h}}{\partial x_\mr{\pnote}}(x_\mr{\pnote},x_\mr{k})\delta x_\mr{\pnote} +\frac{\partial\breve{h}}{\partial x_\mr{k}}(x_\mr{\pnote},x_\mr{k})\delta x_\mr{k}$.  Combining this with \eqref{eq:deltaPthm} and \eqref{eq:deltaKthm} allows us to express the interconnection of $\delta P$ and $\delta K$ as $\delta \Gamma$ \eqref{eq:dCL}. Note that by writing  \eqref{eq:deltaU2} as \vspace{-1mm}
\begin{equation}\label{eq:deltaU}
\begin{aligned}
	\delta \genInput = \,& \overbrace{h_\delta(x_\mr{\pnote},x_\mr{k})\frac{\partial h_\mathrm{k}}{\partial u_\mathrm{k}}(x_\mathrm{k},u_\mathrm{k})\frac{\partial h_{\mr{\pnote},\mr{y}}}{\partial x_\mr{\pnote}}(x_\mr{\pnote},\genInput)}^{ \frac{\partial\breve{h}}{\partial x_\mr{\pnote}}(x_\mr{\pnote},x_\mr{k})} \delta x_\mr{\pnote}
	\\&+\underbrace{h_\delta(x_\mr{\pnote},x_\mr{k})\frac{\partial h_\mathrm{k}}{\partial x_\mathrm{k}}(x_\mathrm{k},u_\mathrm{k})}_{\frac{\partial\breve{h}}{\partial x_\mr{k}}(x_\mr{\pnote},x_\mr{k})} \delta x_\mathrm{k},
\end{aligned}\vspace{-2mm}
\end{equation}
where\footnote{Note again that $u_\mr{k} = h_\mr{\pnote,y}(x_\mr{\pnote},\breve{h}(x_\mr{\pnote},x_\mr{k}))$ and $\genInput = \breve{h}(x_\mr{\pnote},x_\mr{k})$.}  $h_\delta(x_\mr{\pnote},x_\mr{k}) = \left(I-\frac{\partial h_\mathrm{k}}{\partial u_\mathrm{k}}(x_\mathrm{k},u_\mathrm{k})\frac{\partial h_\mr{\pnote,y}}{\partial \genInput}(x_\mr{\pnote},\genInput)\right)^{-1}$, we get constructive conditions on the existence of $\breve{h}$.

%%%%%%%%%%%%%%%%%%%%%%%%%%%%%%%%%%%

\section{Proof of Theorem \ref{thrm:diffICL2}} \label{App:CB}
By synthesis, we obtain a controller $\delta K$ such that the closed-loop interconnection $\mathcal{F}_\mr{l}(\delta P_\mr{LPV},\delta K)$ is \ltwo-gain stable with a bounded \ltwo-gain of $\gamma$ for all $\sche\in\mathcal{P}^\mathbb{R}$. As $\mathfrak{B}_\mr{\delta p}^\m{X}\subseteq\mathfrak{B}_\mr{LPV}$, see Definition \ref{def:lpvemb}, 
this implies that $\mathcal{F}_\mr{l}(\delta P,\delta K)$ with $\sche=\psi(x_\mr{\pnote})$ for $\delta K$ is \ltwo-gain stable with a  \ltwo-gain %less than or equal to 
$\leq\gamma$ for all $x_\mr{\pnote}\in\mathfrak{B}_{\mr{\pnote},\mr{x_p}}^\m{X}$.

%%%%%%%%%%%%%%%%%%%%%%%%%%%%%%%%%%%
\section{Proof of Theorem \ref{thm:contrrealize}} \label{App:C}
Based on the definition of the differential variables we have that $\delta x_\mr{k}(t) = \frac{\partial}{\partial \lambda} \bar{x}_\mr{k}(t,\lambda)$, $\delta u_\mr{k}(t,\lambda) = \frac{\partial}{\partial \lambda} \bar{u}_\mr{k}(t,\lambda)$, $\delta y_\mr{k}(t,\lambda) = \frac{\partial}{\partial \lambda} \bar{y}_\mr{k}(t,\lambda)$. The family of parameterized trajectories is defined as $(\bar{x}_\mr{k}(\lambda),\bar{u}_\mr{k}(\lambda),\bar{y}_\mr{k}(\lambda))$ with $\lambda\in[0,1]$ such that $(\bar{x}_\mr{k}(1),\bar{u}_\mr{k}(1),\bar{y}_\mr{k}(1))=({x}_\mr{k},{u}_\mr{k},{y}_\mr{k})$ is the current trajectory and $(\bar{x}_\mr{k}(0),\bar{u}_\mr{k}(0),\bar{y}_\mr{k}(0))=(\st{x_\mr{k}},\st{u_\mr{k}},\st{y_\mr{k}})$ is the steady-state trajectory. Consequently,
	\begin{subequations}
	\begin{align}
		y_\mr{k}(t) &= \st{y_\mr{k}}(t)+\int_0^1 \frac{\partial}{\partial \lambda}\bar y_\mr{k}(t,\lambda)\,d\lambda,\\
		&= \st{y_\mr{k}}(t)+\int_0^1 \delta y_\mr{k}(t,\lambda)\,d\lambda.
	\end{align}
	\end{subequations}
	Based on $\delta K$, in terms of \eqref{eq:incrContr}, we get
	\begin{equation}\label{eq:CDSy}
	\begin{aligned}
		y_\mr{k}(t) = \st{y_\mr{k}}(t)+\int_0^1 &C_\mathrm{k}(\psi(\bar{x}_\mr{\pnote}(t,\lambda)) \delta x_\mathrm{k}(t,\lambda) +\\ &D_\mathrm{k}(\psi(\bar{x}_\mr{\pnote}(t,\lambda)) \delta u_\mathrm{k}(t,\lambda)\,d\lambda.
	\end{aligned}
	\end{equation}
	The closed-loop differential storage function is $V(\fullstate,\delta \fullstate) = \delta \fullstate ^\top M \delta \fullstate$ with $M\succ 0$, corresponding to a constant Riemannian metric. Hence, the homotopy path connecting $\fullstate(t)$ and $ \st{\fullstate}(t)$ is given by a straight line, i.e., by 
	\begin{equation}
		\bar\fullstate(t,\lambda) = \st{\fullstate}(t)+\lambda(\fullstate(t)-\st{\fullstate}(t)),
	\end{equation}
	see \cite{Manchester2018}. Therefore, $\bar{x}_\mr{\pnote}(t,\lambda) = \st{x_\mr{\pnote}}(t)+\lambda(x_\mr{\pnote}(t)-\st{x_\mr{\pnote}}(t))$ and $\bar{x}_\mr{k}(t,\lambda) = \st{x_\mr{k}}(t)+\lambda(x_\mr{k}(t)-\st{x_\mr{k}}(t))$. This implies that $\delta x_\mr{k}(t,\lambda) = \frac{\partial}{\partial \lambda}\bar{x}_\mr{k}(t,\lambda)=x_\mr{k}(t)-\st{x_\mr{k}}(t)=\Delta x_\mr{k}(t)$ and similarly $\delta x_\mr{\pnote}(t,\lambda) = x_\mr{\pnote}(t)-\st{x_\mr{\pnote}}(t)$. Furthermore, define the parameterized trajectory $\bar w(t,\lambda)=\st{w}(t)+\lambda(w(t)-\st{w}(t))$, such that $\delta w(t,\lambda) := \frac{\partial \bar{w}}{\partial \lambda}(t,\lambda)=w(t)-\st{w}(t)$. Hence, as $\delta u_k(t,\lambda) = \delta \genOutput(t,\lambda) = C_\mr{y}\delta x_\mr{\pnote}(t,\lambda)+D_\mr{yw}\delta w(t,\lambda)$ is linear in $\delta x$ and $\delta w$, and as $\delta w(t,\lambda)= w(t)-\st{w}(t)$ and $\delta x_\mr{\pnote}(t,\lambda) = x_\mr{\pnote}(t)-\st{x_\mr{\pnote}}(t)$ we obtain $\delta u_k(t,\lambda)=u_\mr{k}(t)-\st{u_\mr{k}}(t)=\Delta u_\mr{k}(t)$. For the sake of readability, also introduce $\bar\sche(t,\lambda)=\psi(\bar{x}_\mr{\pnote}(t,\lambda))$. Using these relations for the differential state equation of $\delta K$ in \eqref{eq:incrContr} and filling these relations in \eqref{eq:CDSy}, result in
	\begin{subequations}\label{eq:xkyk}
		\begin{align}\label{eq:xk}
		\Delta \dot x_\mr{k}(t) &= \left(\int_0^1 A_\mathrm{k}(\bar\sche(t,\lambda))\,d\lambda\right)\Delta x_\mr{k}(t) +\notag\\ &\phantom{,=}\left(\int_0^1 B_\mathrm{k}(\bar\sche(t,\lambda))\,d\lambda\right)\Delta u_\mr{k}(t);\\
		\label{eq:yk}
		y_\mr{k}(t) &= \st{y_\mr{k}}(t)+\left(\int_0^1 C_\mathrm{k}(\bar\sche(t,\lambda))\,d\lambda\right)\Delta x_\mr{k}(t) +\notag\\ &\phantom{,=}\left(\int_0^1 D_\mathrm{k}(\bar\sche(t,\lambda))\,d\lambda\right)\Delta u_\mr{k}(t);&&
	\end{align}
	\end{subequations}
	giving us $K$ \eqref{eq:Contr}. Next, it is shown that the differential form of $K$ is $\delta K$. Based on \eqref{eq:xkyk} define:
	\begin{subequations}\label{eq:controlparameterized}
	\begin{align}
		\dot{\bar x}_\mr{k}(t,\lambda) &= \st{\dot{x}}_\mr{k}(t) +\left(\int_0^\lambda A_\mathrm{k}(\bar\sche(t,\lambda))\,d\lambda\right)\Delta x_\mr{k}(t) +\notag \\ &\phantom{=}\left(\int_0^\lambda B_\mathrm{k}(\bar\sche(t,\lambda))\,d\lambda\right)\Delta u_\mr{k}(t);\\
		\bar{y}_\mr{k}(t,\lambda) &= \st{y_\mr{k}}(t)+\left(\int_0^\lambda C_\mathrm{k}(\bar\sche(t,\lambda))\,d\lambda\right)\Delta x_\mr{k}(t) +\notag\\ &\phantom{=}\left(\int_0^\lambda D_\mathrm{k}(\bar\sche(t,\lambda))\,d\lambda\right)\Delta u_\mr{k}(t);&&
	\end{align}
	\end{subequations}
	Differentiating \eqref{eq:controlparameterized} w.r.t. $\lambda$ we obtain
	\begin{subequations}
	\begin{align}
		\frac{\partial \dot{\bar x}_\mr{k}}{\partial \lambda}(t,\lambda) \!=\! A_\mathrm{k}(\bar\sche(t,\lambda))\Delta x_\mr{k}(t)\!+\!B_\mathrm{k}(\bar\sche(t,\lambda))\Delta u_\mr{k}(t);\\
		\frac{\partial \bar y_\mr{k}}{\partial \lambda}(t,\lambda) \!=\! C_\mathrm{k}(\bar\sche(t,\lambda))\Delta x_\mr{k}(t)\!+\!D_\mathrm{k}(\bar\sche(t,\lambda))\Delta u_\mr{k}(t).
	\end{align}
	\end{subequations}
	Then, using that $\Delta u_\mr{k}(t)=\delta u_\mr{k}(t,\lambda)$, $\Delta x_\mr{k}(t)=\delta x_\mr{k}(t,\lambda)$, $\delta x_\mr{k}(t,\lambda) = \frac{\partial}{\partial \lambda} \bar{x}_\mr{k}(t,\lambda)$ and $\delta y_\mr{k}(t,\lambda) = \frac{\partial}{\partial \lambda} \bar{y}_\mr{k}(t,\lambda)$ and taking $\lambda=1$ we get
	\begin{subequations}
	\begin{align}
		\delta\dot{x}(t) &= A_\mathrm{k}(\sche(t))\delta x_\mr{k}(t)+B_\mathrm{k}(\sche(t))\delta u_\mr{k}(t);\\
		\delta{y}(t)& = C_\mathrm{k}(\sche(t))\delta x_\mr{k}(t)+D_\mathrm{k}(\sche(t))\delta u_\mr{k}(t);&&
	\end{align}
	\end{subequations}
	which is $\delta K$ \eqref{eq:incrContr}, completing the proof.

%%%%%%%%%%%%%%%%%%%%%%%%%%%%%%%%%%%
\section{Proof of Theorem \ref{thrm:cl-li2}} \label{App:D}
By Theorem \ref{thrm:diffICL2}, it holds that $\delta K$ ensures \ltwo-gain stability with a bounded \ltwo-gain $\gamma$ for $\mathcal{F}_\mr{l}(\delta P,\delta K)$ on $\mathcal{X}_\mr{\pnote}$. Furthermore, by Theorem \ref{thm:contrrealize}, the differential form of $K$ \eqref{eq:Contr} is given by $\delta K$ \eqref{eq:incrContr}. Consequently, by Theorem \ref{thrm:incric}, the differential form of $\mathcal{F}_\mr{l}(P,K)$ is given by $\mathcal{F}_\mr{l}(\delta P,\delta K)$. Moreover, we consider the set $\mathcal{W}\subseteq \mathbb{W}$, for which $\mathcal{X}=\mathcal{X}_\mr{\pnote} \times \mathcal{X}_\mr{k} $ is invariant, meaning  that for any  $w,\st{w}\in\mc{W}^{\mathbb{R}_+}$, the resulting $x_\mr{\pnote}(t),\st{x}_\mr{\pnote}(t)\in\mc{X}_\mr{\pnote},\ \forall\,t\in \mathbb{R}_+$. Hence, we will remain in the design set on which differential \ltwo-gain stability is ensured. 
Based on Theorem \ref{thm:diffincrdiss}, this then implies that there exists a function $\zeta(\fullstate,\tilde{\fullstate})\geq  0$ with $\zeta(0,0) = 0$ such that
\begin{equation} \label{eq:impr}
\begin{aligned}
\Vert\mathcal{F}_\mathrm{l}( P,  K)(w,\fullstate_0)-&\mathcal{F}_{\mathrm{l}}( P,  K)(\st{w},\st{\fullstate}_0)\Vert_{2} \\ &\leq \gamma \norm{w-\st{w}}_{2}+\zeta(\fullstate_0,\st{\fullstate}_0),
\end{aligned}
\end{equation}
for any $w,\st{w}\in\mc{W}^{\mathbb{R}_+}$ with $w-\st{w}\in\mathscr{L}_2^{n_\mr{w}}$ and $x_0, \st{x}_0\in\mc{X}$.
As  \eqref{eq:impr} implies \eqref{eq:closedincrgain}, $\mathcal{F}_\mr{l}(P, K)$ is \litwo-gain stable with a bounded \litwo-gain of $\gamma$ on $\mc{X}$.

As $\mathcal{F}_\mr{l}(P, K)$ is \litwo-gain stable on $\mc{X}$, it is also incrementally stable on $\mc{X}$ based on Theorem \ref{thm:incrstab}. The differential storage function is given by $V(\fullstate,\delta\fullstate)=\delta\fullstate^\top M\delta\fullstate$, which implies that the incremental storage function is given by $V(\fullstate,\tilde{\fullstate})={(\fullstate-\tilde{\fullstate})^\top} M(\fullstate-\tilde{\fullstate})$, see \cite{Verhoek2020}. The latter also qualifies as an incremental Lyapunov function, see Theorem \ref{thm:incrstab}.

Moreover, the (desired) steady-state trajectory $\stTraj\in\mathfrak{B}_\mr{\pnote}$ is a valid solution of $P$ with corresponding $(\st{\fullstate}, \st{w},\st{z})\in\mc{X}^{\mathbb{R}_+}\times\mc{W}^{\mathbb{R}_+}\times\mathbb{Z}^{\mathbb{R}_+}$ due to the well-posedness of $\mathcal{F}_\mathrm{l}(P, K)$. Consequently, this implies by Lemma \ref{lem:convergence} that all solutions converge towards $(\st{\fullstate}, \st{w},\st{z})$. Meaning, for all $w\in \mc{W}^{\mathbb{R}_+}$, when $w(t)\rightarrow \st{w}(t)$ as $t\rightarrow\infty$, $(\fullstate(t), w(t),z(t))\rightarrow(\st{\fullstate}(t),\st{w}(t),\st{z}(t))$ as $t\rightarrow\infty$.

\section{Proof of Theorem \ref{thm:nlobsv}}\label{App:E}
Define $F_\mr{e}(x_\mr{e},\hat{x}_\mr{e},\genInput,w_\mr{m}) = f_\mr{e}(\hat{x}_\mr{e},\genInput,w_\mr{m})+L(h_\mr{e}(x_\mr{e},w_\mr{m})-h_\mr{e}(\hat{x}_\mr{e},w_\mr{m}))$ and $H_\mr{e}(x_\mr{e},\hat{x}_\mr{e},w_\mr{m}) = h_\mr{e}(\hat{x}_\mr{e},w_\mr{m})$. As $F_\mr{e}(x_\mr{e},x_\mr{e},\genInput,w_\mr{m}) = f(x_\mr{e},\genInput,w_\mr{m})$ and $H_\mr{e}(x_\mr{e},x_\mr{e}, w_\mr{m})=h_\mr{e}(x_\mr{e},w_\mr{m})$, we have that \eqref{eq:observer} is a virtual system of \eqref{eq:distmodel}, see also \cite{Wang2005,Jouffroy2010}. The virtual system \eqref{eq:observer} is virtually contractive, meaning that $\hat{x}_\mr{e}(t)\rightarrow x_\mr{e}(t)$ for $t\rightarrow\infty$, see \cite{Wang2005,ReyesBaez2019}, if 
\begin{multline}\label{eq:virtualsystem}
		\delta \dot{\hat x}_\mr{e}(t)=\left(\frac{\partial f_\mr{e}}{\partial \hat{x}_\mr{e}}(\hat{x}_\mr{e}(t),\genInput(t),w_\mr{m}(t))\right)\delta\hat{x}_\mr{e}-\\L\left(\frac{\partial h_\mr{e}}{\partial \hat{x}_\mr{e}}(\hat{x}_\mr{e}(t),w_\mr{m}(t))\right)\delta \hat{x}_\mr{e}(t),
\end{multline}
is asymptotically stable. The differential form of the virtual system \eqref{eq:virtualsystem} can be  written as
\begin{equation}\label{eq:virtualdiffobsv}
		\delta \dot{\hat x}_\mr{e}(t)=\big(\m A_\mr{e}(\hat{x}_\mr{e}(t),\genInput(t),w_\mr{m}(t))-L\m C_\mr{e}(\hat{x}_\mr{e}(t),w_\mr{m}(t))\big)\delta \hat{x}_\mr{e}(t).
\end{equation} 
The system \eqref{eq:virtualdiffobsv} is asymptotically stable with (differential) Lyapunov function $V_\delta(\delta x)=\delta x^\top P\delta x$ if \eqref{eq:obsvLMI} holds for all $(x_\mr{e},\genInput,w_\mr{m})\in\mathbb{X}_\mr{e}\times\genInputSet\times\pi_\mr{w_\mr{m}}\mathbb{W}$. 

%%%%%%%%%%%%%%%%%%%%%%%%%%%%%%%%%%%
\bibliographystyle{plain}
\bibliography{References}

\begin{thebibliography}{10}

\bibitem{Althoff2013}
M.~Althoff.
\newblock {Reachability Analysis of Nonlinear Systems using Conservative
  Polynomialization and Non-Convex Sets}.
\newblock In {\em Proc. of the 16th international conference on Hybrid systems:
  computation and control}, pages 173--182, 2013.

\bibitem{Angeli2002}
D.~Angeli.
\newblock {A Lyapunov Approach to Incremental Stability Properties}.
\newblock {\em IEEE Transaction on Automatic Control}, 47(3):410--421, 2002.

\bibitem{Apkarian1998}
P.~Apkarian and R.~J. Adams.
\newblock {A}dvanced {G}ain-{S}cheduling {T}echniques for {U}ncertain
  {S}ystems.
\newblock {\em IEEE Transactions on Control Systems Technology}, 6(1):21--32,
  1998.

\bibitem{Apkarian1995}
P.~Apkarian, P.~Gahinet, and G.~Becker.
\newblock {S}elf-scheduled $\mathscr{H}_\infty$ {C}ontrol of {L}inear
  {P}arameter-varying {S}ystems: a {D}esign {E}xample.
\newblock {\em Automatica}, 31(9):1251--1261, 1995.

\bibitem{Atkinson1989}
K.~E. Atkinson.
\newblock {\em {An Introduction to Numerical Analysis}}.
\newblock John Wiley \& Sons, 2 edition, 1989.

\bibitem{Chen2004}
W.-H. Chen.
\newblock {Disturbance Observer Based Control for Nonlinear Systems}.
\newblock {\em IEEE/ASME Transactions on Mechatronics}, 9(4):706--710, 2004.

\bibitem{Chen2016}
W.-H. Chen, J.~Yang, L.~Guo, and S.~Li.
\newblock {Disturbance-Observer-Based Control and Related Methods - An
  Overview}.
\newblock {\em IEEE Transaction on Industrial Electronics}, 63(2):1083--1095,
  2016.

\bibitem{Crouch1987}
P.~E. Crouch and Arjan~J. Van~der Schaft.
\newblock {\em {Variational and Hamiltonian Control Systems}}.
\newblock Springer, 1987.

\bibitem{DeHillerin2011}
S.~{De Hillerin}, G.~Scorletti, and V.~Fromion.
\newblock {R}educed-{C}omplexity {C}ontrollers for {LPV} {S}ystems: {T}owards
  {I}ncremental {S}ynthesis.
\newblock In {\em Proc. of the 50th IEEE Conference on Decision and Control and
  European Control Conference}, pages 3404--3409, 2011.

\bibitem{DenBoef2021}
P.~den Boef, P.~B. Cox, and R.~T{\'o}th.
\newblock {LPVcore: MATLAB toolbox for LPV modelling, identification and
  control}.
\newblock In {\em Proc. of the 19th IFAC Symposium on System Identification
  SYSID 2021}, pages 385--390, 2021.

\bibitem{Fromion2003}
V.~Fromion and G.~Scorletti.
\newblock A theoretical framework for gain scheduling.
\newblock {\em International Journal of Robust and Nonlinear Control},
  13(10):951--982, 2003.

\bibitem{Gasparetto2015}
A.~Gasparetto, P.~Boscariol, A.~Lanzutti, and R.~Vidoni.
\newblock {\em {Path Planning and Trajectory Planning Algorithms: A General
  Overview}}, volume~29 of {\em Mechanisms and Machine Science}, pages 3--27.
\newblock Springer, 2015.

\bibitem{Hoffmann2016}
C.~Hoffmann.
\newblock {\em Linear parameter-varying control of systems of high complexity}.
\newblock PhD thesis, Technische Universit{\"a}t Hamburg, 2016.

\bibitem{Hoffmann2015}
C.~Hoffmann and H.~Werner.
\newblock {A} {S}urvey of {L}inear {P}arameter-{V}arying {C}ontrol
  {A}pplications {V}alidated by {E}xperiments or {H}igh-{F}idelity
  {S}imulations.
\newblock {\em IEEE Transactions on Control Systems Technology},
  23(2):416--433, 2015.

\bibitem{Jouffroy2010}
Jerome Jouffroy and Thor~I. Fossen.
\newblock {A} {T}utorial on {I}ncremental {S}tability {A}nalysis using
  {C}ontraction {T}heory.
\newblock {\em Modeling, Identification and Control}, 31(3):93--106, 2010.

\bibitem{Koelewijn2020}
P.~J.~W. Koelewijn, G.~Sales~Mazzoccante, R.~T{\'o}th, and S.~Weiland.
\newblock {Pitfalls of Guaranteeing Asymptotic Stability in LPV Control of
  Nonlinear Systems}.
\newblock In {\em Proc. of the 2020 European Control Conference (ECC)}, pages
  1573--1578, 2020.

\bibitem{Koelewijn2019}
P.~J.~W. Koelewijn and R.~T{\'o}th.
\newblock Incremental {G}ain of {LTI} {S}ystems.
\newblock Technical report, Eindhoven University of Technology, 2019.

\bibitem{Kulcsar2009}
B.~Kulcs{\'{a}}r, J.~Dong, J.~W. {Van Wingerden}, and M.~Verhaegen.
\newblock {LPV subspace identification of a DC motor with unbalanced disc}.
\newblock {\em Proc. of the 15th IFAC Symposium on System Identification},
  2009.

\bibitem{Kwiatkowski2008}
A.~Kwiatkowski and H.~Werner.
\newblock {PCA}-{B}ased {P}arameter {S}et {M}appings for {LPV} {M}odels {W}ith
  {F}ewer {P}arameters {and} {L}ess {O}verbounding.
\newblock {\em IEEE Transactions on Control Systems Technology},
  16(4):781--788, 2008.

\bibitem{Lohmiller1998}
W.~Lohmiller and J.-J.~E. Slotine.
\newblock {O}n {C}ontraction {A}nalysis for {N}on-linear {S}ystems.
\newblock {\em Automatica}, 34(6):683--696, 1998.

\bibitem{Maidens2015}
J.~Maidens and M.~Arcak.
\newblock {Reachability Analysis of Nonlinear Systems Using Matrix Measures}.
\newblock {\em IEEE Transactions on Automatic Control}, 60(1):265--270, 2015.

\bibitem{Manchester2018}
I.~R. Manchester and J.-J.~E. Slotine.
\newblock {R}obust {C}ontrol {C}ontraction {M}etrics: {A} {C}onvex {A}pproach
  to {N}onlinear {S}tate-{F}eedback ${H}^\infty$ {C}ontrol.
\newblock {\em IEEE Control Systems Letters}, 2(3):333--338, 2018.

\bibitem{Mohammadpour2012}
J.~Mohammadpour~Velni and C.~W. Scherer.
\newblock {\em {C}ontrol of {L}inear {P}arameter {V}arying {S}ystems with
  {A}pplications}.
\newblock Springer Science \& Business Media, 2012.

\bibitem{Nijmeijer2016}
H.~Nijmeijer and A.~J. Van~der Schaft.
\newblock {\em {N}onlinear {D}ynamical {C}ontrol {S}ystems}.
\newblock Springer, New York, NY, 1 edition, 2016.

\bibitem{Packard1994}
A.~Packard.
\newblock {G}ain scheduling via linear fractional transformations.
\newblock {\em Systems \& Control Letters}, 22(2):79--92, 1993.

\bibitem{Pavlov2006}
A.~Pavlov, N.~Van~de Wouw, and H.~Nijmeijer.
\newblock {\em {Uniform Output Regulation of Nonlinear Systems}}.
\newblock Birkh{\"a}user Boston, 2006.

\bibitem{Polcz2019}
P.~Polcz, B.~Kulcs{\'a}r, T.~P{\'e}ni, and G.~Szederk{\'e}nyi.
\newblock Passivity analysis of rational lpv systems using finsler's lemma.
\newblock In {\em Proceedings of the 58th IEEE Conference on Decision and
  Control (CDC)}, pages 3793--3798, 2019.

\bibitem{ReyesBaez2019}
R.~Reyes-B\'aez.
\newblock {\em {Virtual Contraction and Passivity based Control of Nonlinear
  Mechanical Systems}}.
\newblock PhD thesis, University of Groningen, 2019.

\bibitem{Ruffer2013}
B.~S. R{\"u}ffer, N.~Van~de Wouw, and M.~Mueller.
\newblock {C}onvergent systems vs. incremental stability.
\newblock {\em Systems \& Control Letters}, 62(3):277--285, 2013.

\bibitem{Sadeghzadeh2020}
A.~Sadeghzadeh, B.~Sharif, and R.~T\'oth.
\newblock {Affine Linear Parameter-Varying Embedding of Nonlinear Models with
  Complexity Reduction and Minimal Overbounding}.
\newblock {\em Submitted to IET Control Theory \& Applications},
  14(20):3363--3373, 2020.

\bibitem{Scherer1995}
C.~W. Scherer.
\newblock {\em {M}ixed ${H}_2$/${H}_\infty$ {C}ontrol}, pages 173--216.
\newblock Springer-Verlag, 1995.

\bibitem{Scherer2001}
C.~W. Scherer.
\newblock {LPV} control and full block multipliers.
\newblock {\em Automatica}, 37(3):361--375, 2001.

\bibitem{Scorletti2015}
G.~Scorletti, V.~Fromion, and S.~De~Hillerin.
\newblock {T}oward nonlinear tracking and rejection using {LPV} control.
\newblock In {\em Proc. of the 1st IFAC Workshop on Linear Parameter Varying
  Systems}, pages 13--18, 2015.

\bibitem{Shamma1988}
J.~S. Shamma.
\newblock {\em {A}nalysis and {D}esign of {G}ain {S}cheduled {C}ontrol
  {S}ystems}.
\newblock PhD thesis, Massachusetts Institute of Technology, 1988.

\bibitem{VanderSchaft2017}
A.~J. van~der Schaft.
\newblock {\em {$L_2$-Gain and Passivity Techniques in Nonlinear Control}}.
\newblock Springer International Publishing AG, 2017.

\bibitem{Verhoek2020}
C.~Verhoek, P.~J.~W. Koelewijn, R.~T\'oth, and S.~Haesaert.
\newblock {Convex Incremental Dissipativity Analysis of Nonlinear Systems}.
\newblock {\em Accepted to Automatica}, 2020.

\bibitem{Wang2020}
R.~Wang, R.~T\'oth, and I.~R. Manchester.
\newblock {Virtual Control Contraction Metrics: Convex Nonlinear Feedback
  Design via Behavioral Embedding}.
\newblock {\em arXiv preprint arXiv:2003.08513}, 2020.

\bibitem{Wang2005}
W.~Wang and J.-J.~E. Slotine.
\newblock On partial contraction analysis for coupled nonlinear oscillators.
\newblock {\em Biological Cybernetics}, 92(1):38--53, 2005.

\bibitem{Willems1971}
J.~C. Willems.
\newblock {The Generation of Lyapunov Functions for Input-Output Stable
  Systems}.
\newblock {\em SIAM Journal on Control}, 9(1):105--134, 1971.

\bibitem{Willems1972}
J.~C. Willems.
\newblock {D}issipative {D}ynamical {S}ystems {P}art {I}: {G}eneral {T}heory.
\newblock {\em Archive for Rational Mechanics and Analysis}, 45(5):321--351,
  1972.

\bibitem{Wu1995}
F.~Wu.
\newblock {\em {C}ontrol of {L}inear {P}arameter {V}arying {S}ystems}.
\newblock PhD thesis, University of California at Berkeley, 1995.

\bibitem{Zames1966}
G.~Zames.
\newblock {O}n the {I}nput-{O}utput {S}tability of {T}ime-{V}arying {N}onlinear
  {F}eedback {S}ystems {P}art {I}: {C}onditions {D}erived {U}sing {C}oncepts of
  {L}oop {G}ain, {C}onicity, and {P}ositivity.
\newblock {\em IEEE Transactions on Automatic Control}, 11(2):228--238, 1966.

\end{thebibliography}

%%%%%%%%%%%%%%%%%%%%%%%%%%%%%%%%%%%%%%%%%%%%%%%%%%%%%%%%%%%%%%%%%%%%%%%%%%%%%%%%%%%%%%%%%%%%%%%%%
%%%%% Biographies %%%%%
\begin{IEEEbiography}[{\includegraphics[width=1in,height=1.25in,clip,keepaspectratio]{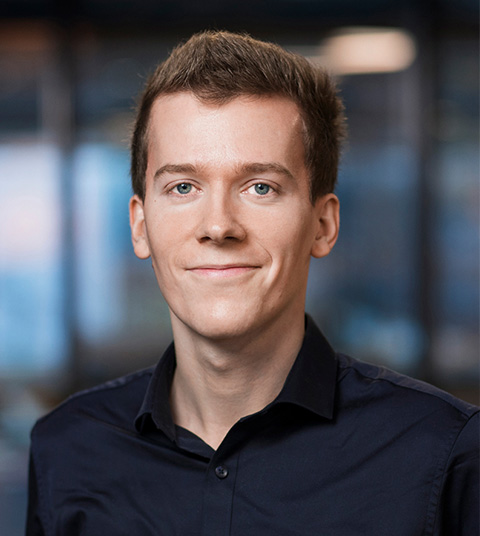}}]{Patrick J. W. Koelewijn} received his Bachelor's degree in Automotive and Master's degree in Systems and Control from the Eindhoven University of Technology, both Cum Laude, in 2016 and 2018 respectively. During his Master's degree he spent three months at the Institute of Control Systems at the Hamburg University of Technology (TUHH). He is currently pursuing a Ph.D. degree at the Control Systems Group, Department of Electrical Engineering, Eindhoven University of Technology. His main research interests include analysis and control of nonlinear and LPV systems, optimal and nonlinear control, and machine learning techniques.
\end{IEEEbiography}

\begin{IEEEbiography}[{\includegraphics[width=1in,height=1.25in,clip,keepaspectratio]{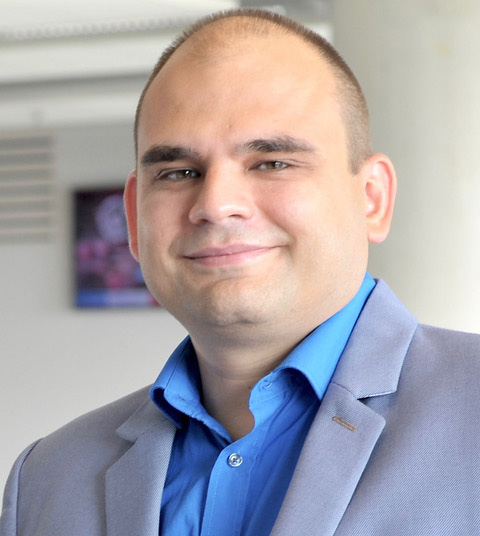}}]{Roland T\'oth} received his Ph.D. degree with cum laude distinction at the Delft Center for Systems and Control (DCSC), Delft University of Technology (TUDelft), Delft, The Netherlands in 2008.  He was a Post-Doctoral Research Fellow at TUDelft in 2009 and Berkeley in 2010. He held a position at DCSC, TUDelft in 2011-12. Currently, he is an Associate Professor at the Control Systems Group, Eindhoven University of Technology and a Senior Researcher at SZTAKI, Budapest, Hungary. His research interests are in identification and control of linear parameter-varying (LPV) and nonlinear systems, developing machine learning methods with performance and stability guarantees for modelling and control, model predictive control and behavioral system theory.
\end{IEEEbiography}

\begin{IEEEbiography}[{\includegraphics[width=1in,height=1.25in,clip,keepaspectratio]{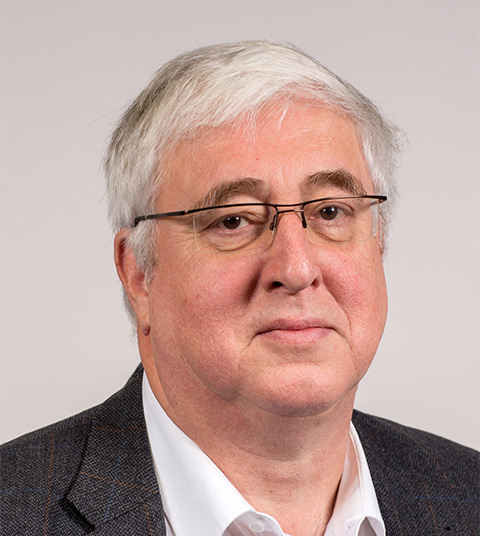}}]{Henk Nijmijer}  
Henk Nijmeijer (1955) is a full professor in Dynamics and Control at the Department of Mechanical Engineering of the Eindhoven University of Technology. His research field encompasses nonlinear dynamics and control and applications thereof. He is Field Chief Editor of Frontiers in Control Engineering. He is a fellow of the IEEE since 2000 and was awarded in 1990 the IEE Heaviside premium. He is appointed honorary knight of the `Golden Feedback Loop' (NTNU, Trondheim) in 2011. Since January 2015 he is scientific director of the Dutch Institute of Systems and Control (DISC). He is recipient of the 2015 IEEE Control Systems Technology Award and a member of the Mexican Academy of Sciences. He has been Graduate Program director of the TU/e Automotive Systems program in the period 2016-2021. He is an IFAC Fellow since 2019 and as of January 2021 an IEEE Life Fellow.
\end{IEEEbiography}

\begin{IEEEbiography}[{\includegraphics[width=1in,height=1.25in,clip,keepaspectratio]{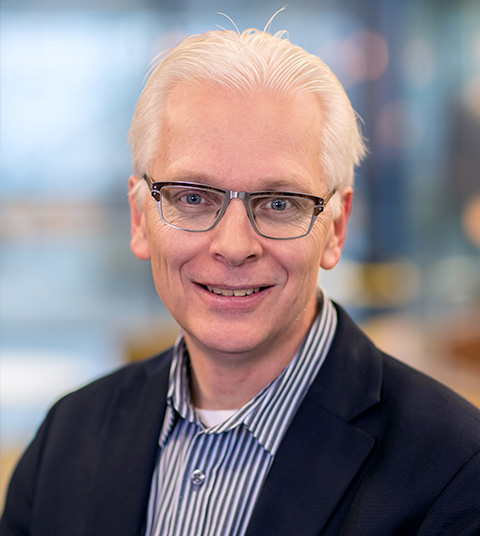}}]{Siep Weiland} received the M.Sc. (1986) and Ph.D. (1991) degrees in mathematics from the University of Groningen, The Netherlands. He was a Postdoctoral Research Associate at the Department of Electrical Engineering and Computer Engineering, Rice University, Houston, USA, from 1991 to 1992. Since 1992, he has been affiliated with Eindhoven University of Technology, Eindhoven, The Netherlands. He is a Full Professor at the same university with the Control Systems Group, Department of Electrical Engineering. His research interests are the general theory of systems and control, robust control, model approximation, modeling and control of spatial-temporal systems, identification, and model predictive control. %Prof. Weiland is the recipient of the Best Paper Award in the IEEE Transactions on Semiconductor Manufacturing for the years 2012 and 2020 and the Process Control Paper prize for the year 2020 in the IFAC Journal of Process Control.
\end{IEEEbiography}

\end{document}